\newcommand{\fullornot}[2]{#1}
\newcommand{\full}[1]{#1}
\newcommand{\res}{R} %
\title{Graph Inference with Effective Resistance Queries}
\author{Huck Bennett\thanks{University of Colorado Boulder. \email{huck.bennett@colorado.edu}.} \and Mitchell Black\thanks{University of California San Diego. \email{m4black@ucsd.edu}.} \and Amir Nayyeri\thanks{Oregon State University. \email{Amir.Nayyeri@oregonstate.edu}.} \and Evelyn Warton\thanks{Oregon State University. \email{wartone@oregonstate.edu}}}
\date{\today}
\begin{document}
\maketitle
\listoffixmes

\begin{abstract}
The goal of \emph{graph inference} is to design algorithms for learning properties of a hidden graph using queries to an oracle that returns information about the graph. Graph reconstruction, verification, and property testing are all special cases of graph inference.

In this work, we study graph inference using an oracle that returns the \emph{effective resistance} (ER) distance between a given pair of vertices. 
Effective resistance is a natural notion of distance that arises from viewing graphs as electrical circuits, and  has many applications. 
However, it has received little attention from a graph inference perspective. Indeed, although it is known that an $n$-vertex graph can be uniquely reconstructed by making all $\binom{n}{2} = \Theta(n^2)$ possible ER queries, very little else is known.
We address this and show a number of fundamental results in this model, including:
\begin{enumerate}
\item $O(n)$-query algorithms for testing whether a graph is a tree; deciding whether two graphs are equal assuming one is a subgraph of the other; and testing whether a given vertex (or edge) is a cut vertex (or cut edge).
\item Property testing algorithms, including for testing whether a graph is vertex-biconnected and whether it is edge-biconnected. We also give a reduction that shows how to adapt property testing results from the well-studied bounded-degree model to our model with ER queries. This yields ER-query-based algorithms for testing $k$-connectivity, bipartiteness, planarity, and containment of a fixed subgraph.
\item Graph reconstruction algorithms, including an algorithm for reconstructing a graph from a low-width tree decomposition; a $\Theta(k^2)$-query, polynomial-time algorithm for recovering the entire adjacency matrix $A$ of the hidden graph, given $A$ with $k$ of its entries deleted; and a $k$-query, exponential-time algorithm for the same task.
\end{enumerate}
We additionally compare the relative power of ER queries and shortest path queries, which are closely related and better studied. Interestingly, we show that the two query models are incomparable in power.
\end{abstract}

\section{Introduction}
\label{sec:intro}

Let $G = (V,E)$ be a hidden graph with $n$ vertices, and let $d: V \times V \rightarrow \R^{\geq 0}$ be a metric on $G$.\fullornot{}{\footnote{Due to space constraints, this version of the paper omits a number of results and proofs. We strongly encourage the reader to view the full version of our paper, which we have submitted as a supplementary file. The full version contains all discussion, results, and proofs.}}
We study \emph{graph inference} problems, where the goal is to reconstruct $G$ or deduce properties of $G$ using a limited number of queries to $d$. In this work, we focus on the case where $d$ is the \emph{effective resistance} (ER) metric on $G$.

Graph inference using the ER metric and its close cousin, the \emph{shortest path} (SP) metric, arises naturally in many applications. Inferring a graph from SP queries has been widely studied in the context of network discovery~\citep{journals/jsac/BeerlixovaEEHHMR06}, where it is used to map unknown regions of the internet. It is also applied in the reconstruction of evolutionary trees~\citep{hein1989optimal,conf/soda/KingZZ03,journals/ipl/ReyzinS07}, where understanding evolutionary relationships based on pairwise distances is crucial. On the other hand, ER-based inference is particularly relevant in the study of social network privacy~\citep{book/affarwal2014social_network_data,Hoskins2018Inferring}, where the goal is to assess the potential of revealing hidden network features through random-walk-based queries.

Graph inference via SP queries has also been explored extensively from a theoretical perspective, leading to notable algorithmic and lower bound results. For instance, bounded-degree graphs can be reconstructed using $\widetilde{O}(n^{3/2})$ queries~\citep{journals/talg/KannanMZ18}, and bounded-degree almost-chordal graphs can be reconstructed using $O(n \log n)$ queries~\citep{journals/arxiv/bestide2023reconstnolongcycle}. On the other hand, $\Omega(n^2)$ queries are needed to even reconstruct trees if they are not bounded degree~\citep{alt/ReyzinS07}. However, the study of graph inference via ER queries is still in its early stages. Beyond the $\Theta(n^2)$-query algorithm that guarantees full reconstruction given all pairwise ER distances (see~\citep{journals/tcs/WittmannSBT09,Spielman2012TreesRecNotes,Hoskins2018Inferring}), little is known that comes with theoretical guarantees. Moreover, the quadratic query complexity needed to reconstruct a whole graph is impractical for large graphs and when the ability to make ER measurements is limited. The focus of this work is to address the corresponding natural question:

\begin{quote}
What properties of a graph can be inferred from a subquadratic number of ER queries?
\end{quote}
We interpret this question quite generally, and study graph reconstruction, property testing, and more using ER queries. 

A closely related question asks what the comparative power of SP and ER queries is in the context of machine learning on graphs. In the context of graph neural networks and graph transformers, distance measures such as SP and ER are often used as inputs to the network in the form of positional encodings~\citep{zhang2023rethinking,velingker2024affinity}. A primary challenge in graph neural networks is to identify which positional encodings offer the highest expressive power, enabling the model to detect the most properties of a graph. For example, graph transformers using ER as a positional encoding can determine which vertices are cut vertices, while graph transfomers using SP cannot~\citep{zhang2023rethinking}. While there have been several works aiming to understand and compare the relative expressive power of SP and ER encodings~\citep{zhang2023rethinking, black2024comparing, velingker2024affinity}, the capabilities of SP or ER remain only partially understood. A deeper understanding of their potential could lead to better feature selection strategies and improved learning efficiency.

\subsection{Our Contribution}
In this paper, we study a number of fundamental graph inference problems using ER queries.
To begin with, we demonstrate that with $O(n)$ ER queries, the following tasks can be performed:

\begin{enumerate}[(1)]
    \item Determining whether a graph is a tree \fullornot{(\Cref{thm:tree-verification})}{(see full version)}.
    \item Determining whether two graphs are identical, assuming one is a subgraph of the other. In other words, we can detect if some edges have been added to the graph or if some edges have been deleted, but not both~(\Cref{thm:proper-subgraph-verification}).
    \item Deciding whether a given vertex is a cut vertex~(\Cref{thm:cut-vertex-verification}).
    \item Deciding whether a pair of vertices are adjacent with a cut edge\fullornot{~(\Cref{thm:cut-edge-verification})}{~(\cref{lem:cutedge-iff-res-is-plusorminus-one})}.
\end{enumerate}

Furthermore, we explore (approximate) \emph{property testing} of graphs, where our algorithms can distinguish between the case where the graph satisfies a given property and the case where it is $\eps$-far from satisfying the property, meaning that at least $\eps m$ edge modifications are required to satisfy the property. In this setting, we present the following results:

\begin{enumerate}[(1)]
    \item Determining whether a graph is vertex-biconnected or $\eps$-far from being vertex-biconnected with $O(n/\eps)$ ER queries\fullornot{~(\Cref{thm:vertex_bi_conn_test})}{ (see full version)}.
    \item Determining whether a graph is edge-biconnected or $\eps$-far from being edge-biconnected with $O(n/\eps^2 + 1/\eps^4)$ ER queries\fullornot{~(\Cref{thm:edge_bi_conn_test})}{ (see section 3.2)}.
    \item Showing that for any property that can be tested in $f(\eps, n)$ time in the well-studied bounded-degree model, there is an algorithm to test it for bounded-treewidth graph with $f(\eps, n) \cdot n$ ER queries (\Cref{thm:adj_list_pt_to_er_pt}). 
    As a result, we obtain property testers that require a subquadratic number of ER queries for bounded-treewidth graphs, addressing a range of problems such as the inclusion of a fixed subgraph, $k$-connectivity, bipartiteness, the presence of long cycles, and any minor-closed property (e.g., planarity)~\cite{books/cu/Goldreich17}.

\end{enumerate}

We remark that the running time of our property testing algorithms for the latter two results can be reduced by a factor of $n$, making them dependent only on $\eps$, given a stronger ER query model. In this model, queries return vertices in the order of their distance from a vertex $v$, allowing the algorithm to request only the closest $k$ vertices to $v$. This stronger query model is natural, as in most applications, closer vertices are often accessible with less effort, for example if close vertices are sampled by a random walk.
\\ \\
In addition to our property inference algorithms, we analyze the relative power of ER and SP query models. Interestingly, we show that the two query models are incomparable, i.e., neither model is strictly stronger than the other for all tasks. Specifically, we demonstrate the following results:

\begin{enumerate}[(1)]
    \item Determining whether a graph is a clique can be achieved with $O(n)$ ER queries, whereas it requires $\Omega(n^2)$ SP queries (\Cref{thm:er-sp-clique}).
    \item Checking whether two given vertices are adjacent can be achieved with a single SP query, whereas it requires $\Omega(n)$ ER queries (\Cref{thm:er-sp-adjacency}).
\end{enumerate}

Finally, we study the problem of graph reconstruction using ER queries. Specifically, we give an algorithm for each of the following tasks:
\begin{enumerate}[(1)]
\item Given a width-$k$ tree decomposition of a graph $G$, reconstructs $G$ using $O(k^2 n)$ ER queries \fullornot{(\cref{thm:reconstruction-from-tree-decomp})}{ (see full version)}.
\item Given an adjacency matrix $A$ of a graph $G$ with $k$ missing entries, recovers $A$ using $O(k^2)$ queries and runs polynomial time \fullornot{(\cref{thm:k-unknowns-quadratic_queries})}{ (see full version)}.
\item Given an adjacency matrix $A$ of a graph $G$ with $k$ missing entries, recovers $A$ using $k$ queries and runs in exponential time (\cref{thm:k-unknowns-exponential-completion}).
\end{enumerate}

Our reconstruction algorithm in \cref{thm:k-unknowns-exponential-completion} uses techniques from convex analysis and relies on a highly nontrivial structural property of ER distances, which allows us to reduce the number of queries. It remains an open question whether a polynomial-time algorithm can be achieved with a subquadratic number of queries, \fullornot{i.e., whether it is possible to combine the strengths of \cref{thm:k-unknowns-quadratic_queries,thm:k-unknowns-exponential-completion}}{that is, whether quadratic Schur complement reconstruction and \cref{thm:k-unknowns-exponential-completion} can be combined to give a low query complexity and runtime complexity}.

\subsection{Related work}
\paragraph{Graph inference with different query models.}
Graph inference has been studied under various models, with {edge detection} and {edge counting} being two prominent approaches motivated by applications in biology. In these models, queries allow one to check whether an induced subgraph contains any edges or to determine the number of edges in the subgraph~\citep{sicomp/AlonBKRS04, journals/jcss/AngluinC08, journals/siamdm/AlonA05, conf/wg/BouvelGK05}.~\citet{alt/ReyzinS07} provides an extensive survey of results within these models, as well as the shortest path (SP) query model. Notably, they show that reconstructing a hidden tree requires at least $\Omega(n^2)$ SP queries, establishing that the bounded-degree assumption is necessary for obtaining nontrivial results using SP queries.

\citet{journals/talg/KannanMZ18}
demonstrated that bounded-degree graphs can be fully reconstructed using $\widetilde{O}(n^{3/2})$ SP queries. They further showed that bounded-treewidth chordal graphs and outerplanar graphs can be reconstructed with $\widetilde{O}(n\log n)$ SP queries. This result has been recently generalized to bounded-treewidth graphs without long cycles by \citet{conf/iwpec/Bestide24}, with additional related work by \citet*{journals/tcs/RongLYW21}.

In contrast to the SP model, significantly fewer theoretical results are known for the {effective resistance (ER) model}. It has been established that a hidden graph can be fully reconstructed if the ER distances between all pairs of its vertices are known~\citep{journals/tcs/WittmannSBT09,Spielman2012TreesRecNotes,Hoskins2018Inferring}. On the other hand, since ER and SP distances are equivalent for trees, it follows that reconstructing a general tree requires $\Omega(n^2)$ ER queries, and hence the known $O(n^2)$ query algorithm is tight for general graphs, as is the case for the SP model. However, in contrast to the SP model, subquadratic algorithms are not known for any family of graphs beyond bounded-degree trees.

A continuous variant of the graph reconstruction problem, known as \emph{Calder\'{o}n's inverse problem}, is to recover the conductivity of an object from measurements of current and potential on its surface.  Calder\'{o}n's inverse problem has been studied extensively by mathematicians~\citep{Uhlmann2012} and found important applications in Electrical Impedance Tomography (EIT) in medical imaging~\citep{uhlmann2009electrical} and Electrical Resistivity Tomography (ERT) in geophysics~\citep{wikipediaERT}.

\paragraph{Property testing.}
We study the gap version of checking properties of graphs using ER queries. In this setting, the goal is to distinguish between graphs that possess a certain property and those that are ``far'' from having that property, using only a few ER queries. This part of our work is related to property testing of graphs in the bounded-degree model. Many properties have been studied in this model, including $k$-connectivity, bipartiteness, subgraph exclusion, minor exclusion, and planarity. We refer the reader to~\citet{books/cu/Goldreich17} for an extensive exposition of known results in this area. 
For contrast, we study property testing algorithms that rely only on resistance distance queries rather than adjacency queries. 

More broadly, our work is also related to sublinear-time algorithms for finite metric spaces. Such algorithms become particularly relevant when one is interested in estimating parameters such as the average ER distance or identifying the central point in the ER metric space. Notably, these algorithms operate without reconstructing the graph or explicitly learning its topology. There is a substantial body of work on testing properties of metric spaces; see, for example,~\cite{conf/stoc/Indyk99,journals/iandc/ParnasR03,conf/icalp/Onak08}.

\paragraph{Graph machine learning.}

One important (albeit perhaps less direct) motivation for our work is graph machine learning. Currently, there are no polynomial-time machine learning algorithms (e.g.,~graph neural networks) that can completely capture the structure of a graph; that is to say, all existing methods will give the same output on some set of non-isomorphic graphs (see~\citep{xu2018how} for an example). This means that no graph learning algorithm captures all properties of a graph.
\par 
As an imperfect solution, one approach is compute some quantities associated with the graph and use these as input to the machine learning algorithm. For example, one could compute the effective resistance between all pairs of vertices (as in~\citep{zhang2023rethinking,velingker2024affinity}) and use this as an input to a neural network. These graph quantities are sometimes called \textit{positional encodings}. Since the use of positional encodings is known not to completely capture the topology of a graph, researchers instead ask which properties of a graph these encodings do capture. For example, effective resistance increases the expressive power of message-passing neural networks~\citep{velingker2024affinity}, transformers~\citep{vaswani2017attention} using effective resistance as a positional encoding can determine which vertices are cut vertices (a property neither message-passing neural networks nor shortest-path distance can detect)~\citep{zhang2023rethinking}, and transformers using effective resistance can determine which edges are cut edges~\citep{black2024comparing}. At a high level, the goal is to understand the power of effective resistance in deducing graph properties when used as a positional encoding in a graph neural network. This question becomes particularly relevant to our work when it focuses on identifying a small subset of effective resistances that are sufficiently informative for the task at hand, thereby reducing computational costs.

\section{Preliminaries}
\label{sec:prelims}

\subsection{Graph Laplacian and Effective Resistance}
Let $G=(V, E, w)$ be an undirected 
graph with edge weights
$w:E\rightarrow \R^{\geq 0}$. We denote the number of vertices as $n:=|V|$ and the number of edges as $m:=|E|$. The \emph{graph Laplacian} is the $n \times n$ matrix defined $L := D - A$, where the \emph{weighted degree matrix} is the diagonal matrix with entries $D_{u,u} = \sum_{(u,v)\in E}{w(u,v)}$ and the \emph{weighted adjacency matrix} is the matrix with entries $A_{u,v} = w(u,v)$ if $(u,v)\in E$ and $A_{u,v}=0$ otherwise.
Alternatively, the graph Laplacian is $L = \partial W\partial^{T}$, where $W$ is the $m \times m$ diagonal weight matrix with entries $W_{e,e} = w(e)$ and $\partial$ is the $n \times m$ \emph{signed incidence matrix} (corresponding to the \emph{boundary operator}), which has  entries\footnote{The choice of which of the first two cases is equal to $1$ and which equal to $-1$ is arbitrary. The graph Laplacian will be the same for either choice, and other quantities involving the signed incidence matrix will only differ by a sign flip.}
\[
\partial_{v, e} := \begin{cases}
1 & \text{if $e = (v, w)$ for some $w$ ,} \\
-1 & \text{if $e = (u, v)$ for some $u$ ,} \\
0 & \text{otherwise .} \\
\end{cases}
\]
Finally, a third way of writing the graph Laplacian is as the sum $L = \sum_{(u,v)\in E} w(u,v)L_{uv}$, where $L_{uv} = (\vec{1}_{u} - \vec{1}_{v})(\vec{1}_{u} - \vec{1}_{v})^{T}$ is the \emph{edge Laplacian}.

    Let $M \in \mathbb{R}^{m \times n}$ be a matrix.
    The \emph{psuedoinverse} of $M$, denoted $M^+$, is any matrix such that the following conditions hold:
    (1) $AA^+A=A$,
    (2) $A^+AA^+=A^+$,
    (3) $(AA^+)^T=AA^+$, and 
    (4) $(A^+A)^T=A^+A$.

The \emph{effective resistance} between a pair of nodes $u$ and $v$ is defined as
\begin{equation} \label{eq:eff_res_def}
\res(u,v) := (\vec{1}_u - \vec{1}_v)^{T} L^{+} (\vec{1}_u-\vec{1}_v) \ \text{,}
\end{equation}
where $L^{+}$ is the pseudoinverse of $L$ and $\vec{1}_u,\vec{1}_v \in \R^n$ are the indicator vectors of $u$ and $v$. One can show that $\res$ is a metric on the vertices in the graph.
Interpreting graphs as electrical circuits, the effective resistance measures the resistance between $u$ and $v$ in the electrical network where each edge $e$ has conductance equal to its weight, $w(e)$, giving each edge a resistance of $1/w(e)$. %
The \emph{$uv$-potentials} are $\Vec{p}_{u,v} := L^{+} (\vec{1}_u-\vec{1}_v)$, as these are the voltage potentials on the vertices resulting in a unit flow of current from $u$ to $v$ in the graph.

Notably, the kernel of any graph Laplacian $L$ contains the span of the all-ones vector (i.e., $\lspan(\vec{1}) \subseteq \ker(L)$), since the weighted degree of a vertex is equal to the sum of the weights of its incident edges. Moreover, if the graph is connected then $\ker(L) = \lspan(\vec{1})$.

To see the matrix form of the relationship between $L^+$ and effective resistance, let $J$ be the all-ones matrix. The matrix $I - \frac{1}{n}J$ projects any vector onto the orthogonal complement of $\lspan(\vec{1})$.
Its conjugation with $R$ be the matrix of all effective resistances with $R_{i,j}=\res(i,j)$, gives $L^+$ up to a constant factor:
\begin{equation}
    \label{eq:eff-res-matrix-form}
     -\frac{1}{2}\left(I-\frac{1}{n}J\right)R \left(I-\frac{1}{n}J\right)=L^+
\end{equation}

This formula implies the following lemma~\citep{journals/tcs/WittmannSBT09, Spielman2012TreesRecNotes, Hoskins2018Inferring}.
\begin{lemma}
\label{lem:full_reconstruciton}
    Any weighted graph with $n$ vertices can be reconstructed with $\binom{n}{2}$ effective resistance queries.
\end{lemma}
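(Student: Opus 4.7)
The proof is essentially a direct consequence of Equation~\eqref{eq:eff-res-matrix-form}, which was asserted just before the lemma statement. My plan is to turn that algebraic identity into a concrete reconstruction procedure, and then argue that every entry of the weighted adjacency matrix can be read off at the end.

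First, assume without loss of generality that $G$ is connected; otherwise, an ER query between two vertices in different components returns $+\infty$, which immediately reveals the partition of $V$ into connected components, and we apply the argument below to each component separately. Assuming connectedness, issue all $\binom{n}{2}$ ER queries to populate the symmetric matrix $R \in \mathbb{R}^{n \times n}$ with $R_{i,j} = \res(i,j)$ and $R_{i,i} = 0$. Then compute
\[
M := -\tfrac{1}{2}\left(I - \tfrac{1}{n}J\right) R \left(I - \tfrac{1}{n}J\right),
\]
which by Equation~\eqref{eq:eff-res-matrix-form} equals $L^+$.

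Next, recover $L$ from $L^+$. Since $G$ is connected, $\ker(L) = \lspan(\vec{1})$, so $L$ is symmetric and its unique nonzero eigenspaces are orthogonal to $\vec{1}$. The Moore--Penrose pseudoinverse is an involution ($(L^+)^+ = L$), so taking the pseudoinverse of $M$ yields $L$ exactly. Concretely, diagonalize $M = U \Sigma U^T$, invert the nonzero eigenvalues, and form $L = U \Sigma^+ U^T$.

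Finally, read off the weighted adjacency matrix: by the definition $L = D - A$, every off-diagonal entry satisfies $L_{u,v} = -w(u,v)$ (which is $0$ precisely when $(u,v) \notin E$), so $w(u,v) = -L_{u,v}$ fully determines both the edge set and the weights. The diagonal entries of $L$ can be cross-checked against the weighted degrees, but are not needed for reconstruction. The total query cost is $\binom{n}{2}$, and the only nontrivial step is the linear-algebraic identity in~\eqref{eq:eff-res-matrix-form}, which was already established. There is no real obstacle here beyond invoking that identity correctly and handling potential disconnectedness.
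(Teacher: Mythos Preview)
Your proposal is correct and follows exactly the approach the paper intends: the paper does not give a detailed proof but simply states that Equation~\eqref{eq:eff-res-matrix-form} implies the lemma (with citations), and you have filled in precisely those details---recover $L^+$ from $R$ via the projection formula, take $(L^+)^+ = L$, and read off $A$ from the off-diagonal entries. Your treatment of the disconnected case is a welcome addition that the paper leaves implicit.
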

\full{
Due to the $O(n^\omega)$ time algorithm for obtaining a matrix's rank-normal form given by~\citep{KellerGehrig1985}, the psuedoinverse of $L$, and hence the matrix of all effective resistances, can be obtained in $O(n^\omega)$ runtime where $\omega<2.371339$~\citep{ADWXX2024} is the matrix multiplication exponent.
}

\paragraph{The regularized graph Laplacian}
As mentioned above, the graph Laplacian is never full-rank regardless of the graph; the all-ones vector $\vec{1}$ is always in the kernel. The \emph{regularized graph Laplacian} is a slightly modified version of the graph Laplacian defined $\widetilde{L} = L +  \frac{1}{n} \cdot J$. For a connected graph, $\widetilde{L}$ is full-rank. 

\begin{lemma}[{\citet[Equation 9]{ghosh2008minimizing}}]
    Let $G = (V, E)$ be a connected graph with Laplacian $L$. The following two facts are true of the regularized graph Laplacian: (1) $\tilde{L}^{-1} = L^{+} + \frac{1}{n} J$,
        and (2), for all $u, v \in V$, $\res(u,v) = (\vec{1}_u - \vec{1}_v)^{T} \tilde{L}^{-1} (\vec{1}_u-\vec{1}_v)$.
\end{lemma}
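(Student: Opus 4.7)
The plan is to exploit the orthogonal decomposition $\R^n = \lspan(\vec{1}) \oplus \lspan(\vec{1})^\perp$ and observe that for a connected graph $G$, the two matrices $L$ and $\tfrac{1}{n}J$ act nontrivially on complementary subspaces. Specifically, since $G$ is connected, the excerpt already notes that $\ker(L) = \lspan(\vec{1})$, so $L$ restricts to an invertible self-adjoint operator on $\lspan(\vec{1})^\perp$ and vanishes on $\lspan(\vec{1})$. Meanwhile, $\tfrac{1}{n}J = \tfrac{1}{n}\vec{1}\vec{1}^{T}$ is the orthogonal projector onto $\lspan(\vec{1})$: it acts as the identity on $\vec{1}$ and as zero on $\lspan(\vec{1})^\perp$.

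Given this decomposition, I would first establish that $\widetilde{L}$ is invertible by checking that it acts as $L$ on $\lspan(\vec{1})^\perp$ and as the identity on $\lspan(\vec{1})$ (using $\tfrac{1}{n}J\vec{1} = \vec{1}$), so both blocks are nonsingular. Then I would verify statement (1) by checking the claimed inverse $L^{+} + \tfrac{1}{n}J$ blockwise on the same decomposition: on $\lspan(\vec{1})^\perp$ the pseudoinverse $L^{+}$ inverts $L$ (by definition, since $L^{+}$ has the same range and kernel as $L^{T}$ and $L^{+}L$ is the orthogonal projection onto $\lspan(\vec{1})^\perp$) while $\tfrac{1}{n}J$ vanishes; on $\lspan(\vec{1})$, $L^{+}\vec{1} = 0$ and $\tfrac{1}{n}J\vec{1} = \vec{1}$, so the sum is the identity. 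Multiplying $\widetilde{L}$ by $L^{+} + \tfrac{1}{n}J$ on each block gives the identity, confirming (1).

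For statement (2), I would substitute (1) into the quadratic form:
\[
(\vec{1}_u - \vec{1}_v)^{T}\widetilde{L}^{-1}(\vec{1}_u - \vec{1}_v) = (\vec{1}_u - \vec{1}_v)^{T} L^{+}(\vec{1}_u - \vec{1}_v) + \tfrac{1}{n}(\vec{1}_u - \vec{1}_v)^{T} J (\vec{1}_u - \vec{1}_v).
\]
Since $J(\vec{1}_u - \vec{1}_v) = \vec{1}(\vec{1}^{T}\vec{1}_u - \vec{1}^{T}\vec{1}_v) = \vec{1}(1-1) = 0$, the second term vanishes and the first term is $\res(u,v)$ by the definition in~\eqref{eq:eff_res_def}.

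The only mildly subtle point, and the one I would double-check carefully, is the block-diagonal argument in the second paragraph: one must verify that $L^{+}$ maps $\lspan(\vec{1})$ to $0$ and preserves $\lspan(\vec{1})^\perp$, which follows from the defining identities $A^{+}A A^{+} = A^{+}$ and $(A^{+}A)^{T} = A^{+}A$ applied to the symmetric matrix $L$. Beyond that, the proof is essentially a direct verification, and I do not anticipate any real obstacle.
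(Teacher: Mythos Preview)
Your argument is correct. The block-diagonal decomposition along $\lspan(\vec{1})\oplus\lspan(\vec{1})^\perp$ is exactly the right way to see both facts, and every step you outline goes through as stated.

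Note, however, that the paper does not actually prove this lemma: it is quoted directly from \citet{ghosh2008minimizing} and presented without argument. So there is no ``paper's proof'' to compare against; your write-up would supply a self-contained verification that the paper simply delegates to the citation.
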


Alternatively, we can characterize the effective resistance in terms of the amount of the amount of current on the edges. A \emph{unit uv-flow} is
a flow $f:E\to\R$ such that $\partial f = \vec{1}_u - \vec{1}_v$. The  \emph{energy} of a flow $f:E\rightarrow \R$ is defined to be %
$\mathcal{E}(f):=\sum_{e\in E}{f(e)^2/w(e)}$. 

\begin{lemma}[Thompson's Principle~\citep{thomson1867treatise}]
\label{lem:thompson_dirichlet}
Let $\mathcal{F}_{uv}$ be the set of unit $uv$-flows. The effective resistance between two vertices $u,v\in V$ is equal to the minimum energy of any unit $uv$-flow, $\res(u,v)= \min_{f \in \mathcal{F}_{uv}} \mathcal{E}(f)$.
\end{lemma}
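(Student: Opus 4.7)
The plan is to exhibit the explicit minimizer, namely the \emph{electrical flow}, and verify both that its energy equals $\res(u,v)$ and that no other unit $uv$-flow has smaller energy. Define
\[
f^{*} := W \partial^{T} \vec{p}_{u,v}, \qquad \vec{p}_{u,v} = L^{+}(\vec{1}_u - \vec{1}_v),
\]
so that $f^{*}(e) = w(e)\bigl(p_{u,v}(x) - p_{u,v}(y)\bigr)$ for each edge $e = (x,y)$ (with sign dictated by $\partial$). This is the flow one obtains physically from Ohm's law applied to the voltage potentials.

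First I would check that $f^{*} \in \mathcal{F}_{uv}$. Computing $\partial f^{*} = \partial W \partial^{T} L^{+}(\vec{1}_u - \vec{1}_v) = L L^{+}(\vec{1}_u - \vec{1}_v)$. Since we may assume $G$ is connected, $\ker(L) = \lspan(\vec{1})$ and the vector $\vec{1}_u - \vec{1}_v$ is orthogonal to $\vec{1}$, hence lies in the image of $L$; therefore $L L^{+}$ acts as the identity on it, giving $\partial f^{*} = \vec{1}_u - \vec{1}_v$ as required. Next I would compute the energy:
\[
\mathcal{E}(f^{*}) = \sum_{e\in E} \frac{f^{*}(e)^{2}}{w(e)} = \vec{p}_{u,v}^{T}\, \partial W \partial^{T}\, \vec{p}_{u,v} = \vec{p}_{u,v}^{T} L \vec{p}_{u,v} = (\vec{1}_u - \vec{1}_v)^{T} L^{+} L L^{+} (\vec{1}_u - \vec{1}_v) = \res(u,v),
\]
using $L L^{+} L = L$ from the pseudoinverse defining relations and again that $\vec{1}_u - \vec{1}_v$ lies in the image of $L$.

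For optimality, let $g \in \mathcal{F}_{uv}$ be arbitrary and write $g = f^{*} + h$ where $h := g - f^{*}$. Since both $g$ and $f^{*}$ are unit $uv$-flows, $\partial h = 0$, so $h$ is a circulation. Expanding the energy gives
\[
\mathcal{E}(g) = \mathcal{E}(f^{*}) + 2\sum_{e \in E}\frac{f^{*}(e)\, h(e)}{w(e)} + \mathcal{E}(h).
\]
The key observation is that the cross term vanishes: $\sum_{e} f^{*}(e) h(e)/w(e) = \sum_{e} \bigl(\partial^{T}\vec{p}_{u,v}\bigr)(e)\, h(e) = \vec{p}_{u,v}^{T}\, \partial h = 0$. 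Combined with $\mathcal{E}(h) \geq 0$, this yields $\mathcal{E}(g) \geq \mathcal{E}(f^{*}) = \res(u,v)$, establishing the minimum.

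The only mildly subtle step is ensuring the identities involving $L^{+}$ behave correctly, which hinges on $\vec{1}_u - \vec{1}_v$ lying in $\operatorname{im}(L)$ for connected $G$; this is the one place where connectivity is used, and for disconnected $G$ the claim is vacuous within a component (and both sides are $+\infty$ across components with the standard convention). Otherwise the argument is a clean exercise in the boundary/pseudoinverse algebra already set up in the preliminaries.
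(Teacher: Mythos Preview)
Your proof is correct and is the standard argument for Thompson's Principle: exhibit the electrical flow $f^{*} = W\partial^{T}L^{+}(\vec{1}_u - \vec{1}_v)$, verify it is a unit $uv$-flow with energy $\res(u,v)$, and show optimality via orthogonality of the electrical flow to circulations in the $W^{-1}$-inner product. One tiny remark: in the energy computation you invoke $LL^{+}L = L$, but the identity you actually need there is $L^{+}LL^{+} = L^{+}$ (condition (2) in the paper's pseudoinverse definition), or equivalently the observation that $LL^{+}$ fixes $\vec{1}_u - \vec{1}_v$; either way the conclusion stands.

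There is nothing to compare against: the paper states Thompson's Principle as a classical result with a citation to \citep{thomson1867treatise} and does not supply its own proof. Your writeup would serve perfectly well as a self-contained justification should one be desired.
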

The minimum energy $uv$-flow $f_{u,v}:= \argmin_{f \in \mathcal{F}_{uv}} \mathcal{E}(f)$ is called the \emph{electrical $uv$-flow}.
Thompson's principle immediately implies \emph{Rayleigh's monotonicity law}: if the resistance values increase the effective resistance between any pair of vertices can only increase. 

\begin{lemma}[Rayleigh's Monotonicity Law~\citep{rayleigh1877theory}]
    Let $G=(V, E_G,w_G)$ and $H=(V, E_H, w_H)$ be weighted graphs on the same set of vertices $V$. If $E_G\subseteq E_H$ and $w_G(e) \leq w_H(e)$ for all $e\in E_G$, then $\res_G(u,v) \geq \res_H(u,v)$ for all $u,v\in V$. 
\end{lemma}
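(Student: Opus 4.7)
The plan is to prove Rayleigh's Monotonicity Law directly from Thompson's principle (\Cref{lem:thompson_dirichlet}), by showing that every unit $uv$-flow on $G$ can be lifted to a unit $uv$-flow on $H$ whose energy is no larger. Concretely, let $f_G^{\ast} \in \mathcal{F}_{uv}^G$ be the energy-minimizing unit $uv$-flow in $G$, so that $\res_G(u,v) = \mathcal{E}_G(f_G^{\ast}) = \sum_{e \in E_G} f_G^{\ast}(e)^2 / w_G(e)$. I would then define the extension $\tilde{f} : E_H \to \R$ by $\tilde{f}(e) = f_G^{\ast}(e)$ for $e \in E_G$ and $\tilde{f}(e) = 0$ for $e \in E_H \setminus E_G$.

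The first step is to verify that $\tilde{f}$ is a valid unit $uv$-flow in $H$. This follows because extending by zero on the new edges does not alter any vertex's net flow, so $\partial \tilde{f} = \partial f_G^{\ast} = \vec{1}_u - \vec{1}_v$ holds in $H$ just as it did in $G$. Thus $\tilde{f} \in \mathcal{F}_{uv}^H$, and Thompson's principle gives the bound $\res_H(u,v) \leq \mathcal{E}_H(\tilde{f})$.

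The second step is to compare energies. The edges in $E_H \setminus E_G$ contribute zero to $\mathcal{E}_H(\tilde{f})$, so
\begin{equation*}
\mathcal{E}_H(\tilde{f}) \;=\; \sum_{e \in E_G} \frac{f_G^{\ast}(e)^2}{w_H(e)} \;\leq\; \sum_{e \in E_G} \frac{f_G^{\ast}(e)^2}{w_G(e)} \;=\; \mathcal{E}_G(f_G^{\ast}) \;=\; \res_G(u,v),
\end{equation*}
where the inequality uses the hypothesis $w_H(e) \geq w_G(e) > 0$ for $e \in E_G$, which implies $1/w_H(e) \leq 1/w_G(e)$. Combining with the previous step yields $\res_H(u,v) \leq \res_G(u,v)$, as claimed.

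There is essentially no main obstacle here beyond correctly handling degenerate cases: if some $w_G(e) = 0$ then the corresponding term in the energy is interpreted as $+\infty$ (forcing $f_G^{\ast}(e) = 0$ in any finite-energy flow), and one may as well remove such zero-weight edges from $E_G$ at the outset so that all divisions are well-defined. With that convention in place, the flow-extension argument above is the entire proof.
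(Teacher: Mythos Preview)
Your argument is correct and is exactly the approach the paper intends: it states that ``Thompson's principle immediately implies Rayleigh's monotonicity law'' without spelling out details, and your flow-extension-by-zero plus the termwise comparison $1/w_H(e)\le 1/w_G(e)$ is precisely the implied one-line derivation.
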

We find the following basic properties of effective resistance useful in this paper.
\begin{lemma}
\label{lem:er_and_edge_cuts}
Let $G=(V,E)$. Then for all $u, v \in V$,
\begin{enumerate} [(1)]
    \item \label{lem:er_of_an_edge}
    If $(u,v)\in E$, then $\res(u,v) \leq 1$.  Moreover, $\res(u,v) = 1$ if and only if $(u,v)$ is a cut edge.
    \item \label{lem:er_of_path}
    If there is exactly one path $\gamma$ between two vertices $u,v\in V$, then the $\res(u,v)$ is the length of $\gamma$.
    \item The ER distance between any pair of vertices in two different edge biconnected components is at at least one.
\end{enumerate}
\end{lemma}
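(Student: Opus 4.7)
The plan is to prove all three parts through Thompson's principle (\Cref{lem:thompson_dirichlet}), using explicit unit $uv$-flows for the upper bounds and flow-conservation across appropriate cuts for the lower bounds. For part (1), I would first establish $\res(u,v) \leq 1$ by exhibiting the unit $uv$-flow that pushes one unit along edge $(u,v)$ and zero elsewhere; its energy equals $1$, so Thompson's principle yields the bound. For the ``if'' direction of the equivalence, if $(u,v)$ is a cut edge then its removal disconnects $u$ from $v$, and flow conservation across the singleton cut $\{(u,v)\}$ forces every unit $uv$-flow to push exactly one unit along $(u,v)$; any additional flow on other edges only contributes nonnegatively to the energy, so the electrical flow achieves energy exactly $1$. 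For the ``only if'' direction I argue the contrapositive: if $(u,v)$ is not a cut edge, there is a $uv$-path $\pi$ in $G - (u,v)$, and the perturbed flow $f_\varepsilon$ that carries $1-\varepsilon$ along $(u,v)$ and $\varepsilon$ along $\pi$ has energy $(1-\varepsilon)^2 + \varepsilon^2 |\pi|$, whose derivative at $\varepsilon = 0$ equals $-2 < 0$, so for small $\varepsilon > 0$ the energy is strictly below $1$.

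For part (2), every edge of $\gamma$ is a cut edge separating $u$ from $v$, because removing such an edge would destroy the unique $uv$-path. Applying the part-(1) argument to each such singleton cut forces every unit $uv$-flow to carry exactly one unit along each edge of $\gamma$; the electrical flow is therefore precisely the unit flow supported on $\gamma$, with energy $|\gamma|$. For part (3), given $u, v$ in distinct edge-biconnected components, I pick any cut edge $e$ whose removal separates them (one always exists because the bridge-block tree of $G$ places the two components on different sides of at least one bridge). Flow conservation across that cut forces $|f(e)| = 1$ for every unit $uv$-flow $f$, giving $\mathcal{E}(f) \geq f(e)^2 = 1$.

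The main obstacle is the strict inequality in the ``only if'' half of part (1): Rayleigh's monotonicity applied to the single-edge subgraph on $\{u,v\}$ yields only the weak bound $\res(u,v) \leq 1$, so strict inequality genuinely requires producing a unit flow with energy below $1$ rather than invoking monotonicity. The perturbation argument above handles this cleanly, but it is the one place where a direct monotonicity appeal does not suffice.
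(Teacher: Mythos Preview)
Your argument is correct. The paper itself does not supply a proof of \Cref{lem:er_and_edge_cuts}; it simply records these as ``basic properties of effective resistance'' and uses them as black boxes elsewhere. Your Thompson's-principle approach---explicit flows for upper bounds, flow conservation across singleton bridge cuts for lower bounds, and the perturbation $f_\varepsilon$ with derivative $-2$ at $\varepsilon=0$ for the strict inequality---is the standard way to verify these facts and is entirely sound. The one place worth a sentence of extra care is part~(2): once each edge of $\gamma$ is forced to carry unit flow, any additional flow on edges off $\gamma$ must form circulations (by conservation) and hence only increases energy, so the electrical flow is indeed supported on $\gamma$; you implicitly assume this, and it is correct.
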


\paragraph{Schur complement.} The \emph{Schur complement} of the Laplacian $L$ of a matrix on a vertex subset $U$, denoted $L_U$, is defined as follows.\footnote{The Schur complement is usually defined as $L_U := L(U, U) - L(U,\overline{U})L(\overline{U}, \overline{U})^{-1}L(\overline{U}, U)$ (e.g., in~\citep[Lemma 11.5.3]{spielman2019sagt}), using the inverse $L(\overline{U}, \overline{U})^{-1}$ instead of the pseudoinverse. We use the pseudoinverse as it is more general while preserving the relevant properties of the Schur complement. To see why it preserves these properties, note that the submatrix $L(\overline{U}, \overline{U})$ is invertible iff there is an edge between $U$ and $\overline{U}$. However, if there is no edge between $U$ and $\overline{U}$, then $L(U,\overline{U})=0$ and $L_U = L(U,U)$. Likewise, if there are no edge between $U$ and $\overline{U}$, then $L_U$ is a graph Laplacian of subgraph of $G$ induced by $U$, so all the properties of the Schur complement mentioned in this section hold.}

\begin{equation}
\label{eqn:schur_comp}
L_U := L(U, U) - L(U,\overline{U})L(\overline{U}, \overline{U})^+L(\overline{U}, U) \ \text{,}
\end{equation}
where $L(X,Y)$ is the submatrix of $L$ indexed by rows in $X$ and columns in $Y$ for $X, Y \subseteq V$. 
Note that for a connected graph, $L(\overline{U},\overline{U})$ is always full rank.
It is known that the Schur complement of a Laplacian of a graph $G$ is the Laplacian of a (possibly weighted) graph with vertex set $U$, which we refer to as $G_U$.  The following lemma summarizes some properties of the Schur complement that we find helpful in this paper~\citep{spielman2019sagt}.
\begin{lemma}
\label{lem:schur_comp_basics}
Let $G=(V,E)$ be a weighted graph, let $U\subseteq V$, and let $G_U$ be the Schur complement of $G$ on $U$.  Then the following properties hold.
\begin{enumerate}[(1)]
    \item For any $u,v\in U$, their effective resistance in $G$ is equal to their effective resistance in $G_U$.

    \item Let $u\in U$ be such that $w_G(u,v) = 0$ for all $v \in \overline{U}$. Then, for any $u'\in U$, $w_G(u,u') = w_{G_U}(u,u')$, i.e., the neighborhood of $u$ is identical in $G$ and $G_U$.
\end{enumerate}
\end{lemma}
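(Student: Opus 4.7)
My plan is to prove the two parts of the lemma separately using a block-matrix decomposition argument built around the $uv$-potentials.

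For part (1), I will work with the block form of $L$ induced by partitioning rows and columns into $U$ and $\overline{U}$. Since $G$ is connected, $L(\overline{U},\overline{U})$ is invertible (the principal submatrix obtained by deleting any nonempty set of rows and columns from a connected graph Laplacian has full rank), so the pseudoinverse in the definition of $L_U$ is actually an ordinary inverse. Fix $u,v\in U$ and let $\vec{p}_{u,v}=L^{+}(\vec{1}_u-\vec{1}_v)$ be the $uv$-potentials in $G$, so that $L\vec{p}_{u,v}=\vec{1}_u-\vec{1}_v$ (using that $\vec{1}_u-\vec{1}_v\perp\vec{1}$). Write $\vec{p}_{u,v}=(p_U,p_{\overline{U}})$. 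The block equation on the $\overline{U}$ rows gives $p_{\overline{U}}=-L(\overline{U},\overline{U})^{-1}L(\overline{U},U)p_U$; substituting into the $U$ block yields
\[
L_U\, p_U \;=\; \bigl(L(U,U)-L(U,\overline{U})L(\overline{U},\overline{U})^{-1}L(\overline{U},U)\bigr)p_U \;=\; \vec{1}_u-\vec{1}_v,
\]
where now $\vec{1}_u,\vec{1}_v$ are interpreted as indicator vectors on $U$. Therefore $p_U$ is a valid $uv$-potential vector in $G_U$, and
\[
\res_{G_U}(u,v)=(\vec{1}_u-\vec{1}_v)^{T}p_U=(\vec{1}_u-\vec{1}_v)^{T}\vec{p}_{u,v}=\res_G(u,v),
\]
since $\vec{1}_u-\vec{1}_v$ has no support on $\overline{U}$.

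For part (2), the key observation is that if $u\in U$ has no neighbors in $\overline{U}$, then the row of $L$ indexed by $u$ has zero entries on all columns in $\overline{U}$. Equivalently, the row of $L(U,\overline{U})$ indexed by $u$ is the zero row. Consequently the $u$-th row of the correction term $L(U,\overline{U})L(\overline{U},\overline{U})^{+}L(\overline{U},U)$ is also zero, so for every $u'\in U$,
\[
L_U(u,u') \;=\; L(U,U)(u,u') \;=\; L(u,u').
\]
Reading off edge weights from off-diagonal Laplacian entries ($L(u,u')=-w(u,u')$ for $u\neq u'$) gives $w_{G_U}(u,u')=w_G(u,u')$, which is exactly the claim about the preserved neighborhood.

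The main technical care is in part (1): one must justify treating $L(\overline{U},\overline{U})^{+}$ as a genuine inverse, which requires the connectivity hypothesis. The footnote in the lemma statement already addresses the degenerate case where $U$ and $\overline{U}$ are disconnected (then $L_U=L(U,U)$ is literally the Laplacian of $G[U]$ and both claims are immediate), so we may assume without loss of generality that $G$ is connected and the block pivot is well-defined. Everything else is linear-algebraic substitution.
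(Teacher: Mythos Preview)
The paper does not actually prove \cref{lem:schur_comp_basics}; it is stated as a known fact with a citation to Spielman's lecture notes. So there is no ``paper's own proof'' to compare against.

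Your argument is correct and is essentially the standard block-elimination proof one finds in that reference. One small point worth making explicit in part~(1): after you derive $L_U\,p_U=\vec{1}_u-\vec{1}_v$, you conclude $\res_{G_U}(u,v)=(\vec{1}_u-\vec{1}_v)^{T}p_U$. Strictly speaking, $p_U$ need not equal $L_U^{+}(\vec{1}_u-\vec{1}_v)$; it could differ by an element of $\ker(L_U)$. The identity still holds because $\ker(L_U)$ is spanned by indicator vectors of connected components of $G_U$, all of which are orthogonal to $\vec{1}_u-\vec{1}_v$ (indeed, the existence of a solution forces $u,v$ into the same component). Stating this one line would close the only visible gap. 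Part~(2) is clean as written.
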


The Schur complement matrix can be obtained by using Gaussian elimination on the graph Laplacian. Likewise, for each step of Gaussian elimination, there is a corresponding operation on the graph, with the operations corresponding to entire process of Gaussian elimination resulting in the graph $G_U$. For a vertex $v\in\overline{U}$, performing Gaussian elimination on the $v$th row and column corresponds to removing the vertex $v$ and, for each pair of neighbors $(a,b) \in N(v) \times N(v)$, replacing its weight with $w(a, b)' := \frac{w(a,b)}{d(v)}$, the ratio between the edge weight and the degree of $v$.

\paragraph{Tree decomposition.} Let $G=(V,E)$ be a graph. A \emph{tree decomposition} is a pair $({\cal B}, T)$, where ${\cal B}$ is a set of subsets of $V$ called \emph{bags} and $T$ is a tree with vertex set ${\cal B}$ such that
\begin{enumerate}[(1)]
    \item $\bigcup_{B\in{\cal B}} B = V$.
    \item For every $\{u,v\}\in E$, there exists $B\in{\cal B}$ such that $u,v\in B$.
    \item Let $B, B''\in {\cal B}$, and suppose $B'\in{\cal B}$ is a bag on the unique $B$ to $B''$ path in $T$.  Then, $B\cap B''\subseteq B'$.
\end{enumerate}

We show that given a tree decomposition of a graph with bounded treewidth, it is possible to fully recover the graph using $o(n^2)$ effective resistance queries.
\paragraph{Graph cuts and connectedness.}
Let $G=(V,E)$ be a graph.
A vertex $v \in V$ is a \emph{cut vertex} if the subgraph induced by $V \setminus \set{v}$ has more connected components than $G$.
An edge $(u,v) \in E$ is a \emph{cut edge} if the subgraph given by $(V, E \setminus \{(u,v)\})$ has more connected components than $G$.
A graph is \emph{vertex biconnected} if it contains no cut vertices.
A graph is \emph{edge biconnected} if it contains no cut edges.

\section{Checking graph properties}
\label{sec:prop-testing}

\fullornot{
There are several properties of graphs that can be characterized by the resistance distance on a subset of the graph's vertices.
One such property is whether a given vertex is a cut vertex.
We give a characterization for such vertices, and use this result to solve several decision problems on graphs, as well as to give property testing algorithms for certain connectedness properties, which, in a sense, gives an approximate solution to the corresponding decision problems.
\begin{lemma}
    \label{lem:vertex-cut-iff-triangle-ineqaulity-tight}
    Let $G=(V,E)$ be an undirected, unweighted graph, and let $a,b,c \in V$ be distinct vertices. 
    Then, $b$ is a cut vertex between $a$ and $c$ (that is, $a$ and $c$ are in different connected components of
    the subgraph of $G$ induced by $V \setminus \{ b \}$, $G_{| V \setminus \set{b}}$) if and only if $\res(a,b) + \res(b,c) = \res(a,c)$.
\end{lemma}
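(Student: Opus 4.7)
The plan is to reduce the lemma to a statement about the potential function $p := L^+(\vec{1}_b - \vec{1}_c)$ of the unit electrical $bc$-flow, and then close via the discrete maximum principle for harmonic functions on finite graphs.

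First I would establish the algebraic identity
\[
\res(a,b) + \res(b,c) - \res(a,c) \;=\; 2\bigl(p(b) - p(a)\bigr),
\]
by writing $\vec{1}_a - \vec{1}_c = (\vec{1}_a - \vec{1}_b) + (\vec{1}_b - \vec{1}_c)$, expanding the definition $\res(a,c) = (\vec{1}_a - \vec{1}_c)^T L^+ (\vec{1}_a - \vec{1}_c)$ bilinearly, and using symmetry of $L^+$ on the cross term. Applying the same identity with $a$ replaced by an arbitrary vertex $v$ gives $p(b) - p(v) = \tfrac{1}{2}\bigl(\res(v,b) + \res(b,c) - \res(v,c)\bigr) \geq 0$, by the triangle inequality for the effective resistance metric; hence $p(b) = \max_{v \in V} p(v)$. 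The lemma therefore reduces to showing $p(a) = p(b)$ if and only if $b$ separates $a$ from $c$ in $G$.

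For the forward direction, suppose $b$ is a cut vertex between $a$ and $c$, and let $V_1$ be the connected component of $G \setminus \{b\}$ that contains $a$ (so $c \notin V_1$). Since $Lp = \vec{1}_b - \vec{1}_c$, every $v \in V_1$ satisfies $(Lp)_v = 0$, i.e.\ $p(v) = \frac{1}{d(v)} \sum_{u \sim v} p(u)$, and each neighbor of such a $v$ lies in $V_1 \cup \{b\}$. Thus $p|_{V_1 \cup \{b\}}$ is a harmonic function on $V_1$ with Dirichlet boundary value $p(b)$ on the singleton boundary $\{b\}$. Uniqueness of the harmonic extension on a finite connected subgraph forces $p \equiv p(b)$ on $V_1 \cup \{b\}$, and in particular $p(a) = p(b)$.

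For the converse, assume $a$ and $c$ lie in the same connected component of $G \setminus \{b\}$, and fix a path $\pi = a = v_0, v_1, \ldots, v_k = c$ in $G \setminus \{b\}$. Suppose for contradiction $p(a) = p(b) = \max p$. Since $a = v_0 \neq b, c$, the vertex $a$ is harmonic, and the discrete maximum principle forces $p(u) = p(a)$ for every neighbor $u$ of $a$; in particular $p(v_1) = p(b)$. Iterating along $\pi$, each intermediate vertex $v_i$ with $0 \le i < k$ is harmonic because $v_i \neq b$ (by choice of $\pi$) and $v_i \neq c$ (since $c = v_k$), so the maximum propagates, eventually forcing $p(c) = p(v_k) = p(b)$. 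This contradicts $p(b) - p(c) = \res(b,c) > 0$, so $p(a) < p(b)$, and the identity from the first step yields $\res(a,b) + \res(b,c) > \res(a,c)$. The one delicate point of the argument is to keep the propagation strictly inside the harmonic locus $V \setminus \{b, c\}$, which is precisely why $\pi$ is chosen to avoid $b$ and why the propagation terminates the moment it reaches $c$.
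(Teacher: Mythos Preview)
Your proof is correct. The backward direction is essentially the paper's argument: both of you expand the quadratic forms to obtain $p(a)=p(b)$ for the $bc$-potential $p$, then derive a contradiction from harmonicity if $b$ does not separate $a$ from $c$. Your propagation of the maximum along a $b$-avoiding path is a cleaner packaging of what the paper does by looking at an edge leaving the equipotential set $A=\{x:p(x)=p(b)\}$ at a vertex other than $b$.

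The forward direction, however, is genuinely different. The paper works on the flow side: it takes the electrical $ac$-flow, decomposes it into $ac$-paths, observes every such path passes through the cut vertex $b$, and argues that the restrictions to the two sides of $b$ are themselves electrical $ab$- and $bc$-flows, so the energies add. You instead stay on the potential side throughout: on the component $V_1\ni a$ of $G\setminus\{b\}$ the function $p$ is harmonic with singleton boundary $\{b\}$, hence constant by uniqueness of the Dirichlet extension, giving $p(a)=p(b)$ directly. Your route is shorter and avoids the somewhat delicate step in the paper of arguing that the restricted flows are optimal; on the other hand, the paper's flow argument makes the \emph{physical} reason for the equality (energy additivity across a cut) more transparent and would generalize verbatim to weighted graphs, whereas your harmonic-average formula $p(v)=\frac{1}{d(v)}\sum_{u\sim v}p(u)$ as written uses the unweighted hypothesis (though the weighted analogue is equally standard).
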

\begin{proof}
    First we will prove that the triangle inequality holds exactly equal when the middle vertex is a cut vertex.
    Suppose $b$ is a cut vertex between $a$ and $c$, and $f \in \mathbb{R}^{|E|}$ be the unit electrical flow.
    By the flow decomposition $\exists f_1, \ldots, f_k \in \mathbb{R}^{|E|}$ such that $f = f_1 + \cdots + f_k$ for some $k \leq |E|$.
    Moreover, for $i \leq k$, there is a simple $a$ to $c$ path $\pi_i$ that is the support of $f_i$.
    For each path $\pi_i$, since $b$ is a cut between $a$ and $c$, it follows that $b$ must lie on this path.
    
    Let $V \setminus {b} = A \cup C$ be a disjoint partition of vertices other than $b$, by assumption, and denote $G^A$ to be the subgraph of $G$ induced by the vertex set $V \setminus C$ and $G^C$ to be the subgraph of $G$ induced by the set $V \setminus A$.
    If we restrict $f_i$ to $G^A$ then $b$ will be the endpoint of each path $\pi_i|_{G^A}$.
    This is because $c$ lies on the other side of the cut vertex $b$, and a simple path cannot traverse $b$ twice.
    So it follows that this is a unit $a$ to $b$ flow.
    If a lower energy unit $a$ to $b$ flow existed in $G^A$, it could be used in place of $f$ under this restriction, reducing the overall energy of $f$ (this would contradict that $f$ is the electrical flow).
    A lower energy unit $a$ to $b$ cannot exist in $G$ but not in $G^A$, since everything on $G^C$ would clearly form a circulation back to $b$.
    So restricting $f$ to $G^A$ gives a miminum energy $a$ to $b$ flow, and a similar arugment gives the fact that restricting $f$ to $G^C$ gives a mimimum energy $a$ to $c$ flow.

    Since the edges of $G^A$ and $G^C$ are disjoint it follows that $\|f|_{G^A}\|_2^2 + \|f|_{G^C}\|_2^2 = \|f\|_2^2$, or equivalently, that \[
        \res(a,b) + \res(b,c) = \res(a,c).
    \]

    Suppose now, to show the other direction of implication, that $\res(a,b) + \res(b,c) = \res(a,c)$.
    By definition this is equivalent to having
    \[
        (\Vec{1}_a - \Vec{1}_b)^T L^+ (\Vec{1}_a - \Vec{1}_b) + (\Vec{1}_b - \Vec{1}_c)^T L^+ (\Vec{1}_b - \Vec{1}_c) = (\Vec{1}_a - \Vec{1}_c)^T L^+ (\Vec{1}_a - \Vec{1}_c).
    \]
    With some rearranging and cancellations, we obtain the equality \[
        \Vec{1}_a L^+ (\Vec{1}_b - \Vec{1}_c) = \Vec{1}_b L^+ (\Vec{1}_b - \Vec{1}_c).
    \]
    Since $L^+ (\Vec{1}_b - \Vec{1}_c)$ gives the potentials in the electrical $b$ to $c$ flow, this means that $a$ and $b$ have equal potential in this flow.
    Let $A \subseteq V$ be the set of all vertices with potential equal to $a$, and let $C = V \setminus A$ be its complement.
    By the equation we obtained, we know $b \in A$.
    Clearly $c$ has potential different than $b$, as they are distinct vertices, and the effective resistance between them, as a metric, must be non-zero.
    Suppose, by contradiction, that $b$ is not a cut vertex between $a$ and $c$.
    Then there must be an edge between $A$ and $C$ whose endpoint in $A$ is a vertex other than $b$, call this vertex $x$.
    This would imply, by the potential difference, that some amount of flow belongs on this edge in the electrical $b$ to $c$ flow.
    But flow exiting $x$ implies flow entering $x$, which implies a vertex exists with potential larger than $b$, the source, yielding a contradiction.
\end{proof}
}
{We study checking properties of the graph with subquadratic number of ER queries.  Specifically, in this section, we show that we can test the equality of two subgraphs, provided that one is a subgraph of the other.  We further show results for testing a given vertex is a cut vertex or a given edge is a cut edge.  We refer the reader to the full version for more discussion on these results and one more result on verifying acyclicity of a connected graph with $O(n)$ ER queries.}

\full{
\subsection{Acyclicity}

\Cref{lem:vertex-cut-iff-triangle-ineqaulity-tight} can be used directly to check whether a vertex is a cut vertex, but it also allows for a concise proof of the following lemma, which states that, in an unweighted graph, the set of distances to any particular vertex must be all integers, if the graph contains no cycles.

\begin{lemma}
    \label{lem:cycle-iff-nonintegral}
    Let $G=(V,E)$ be a connected unweighted graph and $v \in V$ be any vertex.
    The graph $G$ has a cycle if and only if there exists a vertex $u\in V$ such that $\res(u,v)$ is not an integer.
\end{lemma}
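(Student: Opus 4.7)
The forward implication is immediate from \cref{lem:er_of_path}: if $G$ has no cycle, then being connected it is a tree, so for any $u$ the unique $u$-to-$v$ path makes $\res(u,v)$ equal to its edge count, which is an integer.

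For the converse, assume $G$ contains a cycle; the plan is to exhibit some $u$ with $\res(u,v)$ non-integer by invoking two ingredients already in the paper: by part~(1) of \cref{lem:er_and_edge_cuts}, a non-cut edge $(a,b)$ satisfies $0<\res(a,b)<1$, and by \cref{lem:vertex-cut-iff-triangle-ineqaulity-tight}, effective resistances add across cut vertices.

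Because $G$ has a cycle, at least one edge-biconnected component of $G$ contains $\geq 2$ vertices. I would first dispose of the case that $v$ itself sits in such a nontrivial component $C$: pick any neighbor $u$ of $v$ inside $C$; since $(v,u)$ is not a cut edge of $G$, we get $\res(v,u)\in(0,1)$ and we are done. Otherwise, I would work with the bridge tree of $G$ (its nodes are the edge-biconnected components and its edges are the bridges of $G$) and pick a nontrivial component $C$ whose bridge-tree distance to the block containing $v$ is \emph{minimum}, letting $y\in C$ be the endpoint inside $C$ of the last bridge on that bridge-tree path. By minimality every intermediate component along the path is a singleton, so the concatenation of bridges along the path is the unique simple $v$-to-$y$ path in $G$, and \cref{lem:er_of_path} yields $\res(v,y)=k$, where $k$ is the number of bridges traversed. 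Next I would choose any neighbor $u$ of $y$ inside $C$: every $v$-to-$u$ walk must enter $C$ through the only incoming bridge and hence through $y$, making $y$ a cut vertex between $v$ and $u$. Applying \cref{lem:vertex-cut-iff-triangle-ineqaulity-tight} and then the bound $\res(y,u)\in(0,1)$ gives $\res(v,u)=k+\res(y,u)\in(k,k+1)$, which is not an integer.

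The main subtlety will be the minimality requirement in choosing $C$. Without it, the $v$-to-$y$ bridge-tree path could pass through intermediate nontrivial components, in which case the unique-simple-path argument fails and $\res(v,y)$ itself could contribute non-integer pieces, ruining the clean split of $\res(v,u)$ into an integer plus a value in $(0,1)$. With minimality, all intermediate blocks are single vertices, so the bridge-tree path lifts to the unique simple $v$-to-$y$ path in $G$, and everything else is a routine combination of the two lemmas cited above with the bridge-tree decomposition of a connected graph.
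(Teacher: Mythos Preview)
Your proof is correct and follows essentially the same approach as the paper. The paper picks $x$ as the closest vertex to $v$ lying on some cycle and $y$ a neighbor of $x$ on that cycle, then combines the integer value $\res(v,x)$ (via \cref{lem:er_of_path}) with $\res(x,y)\in(0,1)$ (via part~(1) of \cref{lem:er_and_edge_cuts}) using \cref{lem:vertex-cut-iff-triangle-ineqaulity-tight}; your bridge-tree entry point $y$ and its neighbor $u$ inside the nearest nontrivial edge-biconnected component are exactly these same two vertices, just located through the block/bridge decomposition rather than stated directly.
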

\begin{proof}
    If $G$ is acyclic then there is a unique path between every pair of its vertices.  Hence, by \Cref{lem:er_of_path} of \cref{lem:er_and_edge_cuts}, the distance between every pair of vertices is an integer.

    To prove the other direction, suppose $G$ contains a cycle.    
    Let $v \in V$ be arbitrary, and let $x\in V$ be a closest vertex to $v$ that is contained in a cycle $C$.  Furhter, let $(x,y)\in E$ be an edge of $C$ (note $x$ might be equal to $v$).  By \cref{lem:er_of_path} of \cref{lem:er_and_edge_cuts}, $R(v,x)$ is an integer.  By \cref{lem:er_of_an_edge} of \cref{lem:er_and_edge_cuts}, $R(x,y)<1$.  So, by triangle inequality, $R(v,y)< R(v,x)+1$.  On the other hand, by \cref{lem:vertex-cut-iff-triangle-ineqaulity-tight}, $R(v,y) = R(v,x) + R(x,y) > R(v,x)$.  Overall, $R(v,x) < R(v,y)< R(v,x)+1$, and $R(v,x)\in\Z^+$ implies that $R(v,y)$ is not integer.
\end{proof}

It is noteworthy that the characterization holds for any vertex of the graph, giving rise to our algorithm.
By choosing an arbitrary vertex of our graph, and querying its pairwise distance to the remaining vertices, we are able to detect a cycle within our graph using only $n - 1$ queries.

\begin{theorem}
    \label{thm:tree-verification}
    Let $G=(V,E)$ be an undirected unweighted graph, and let $n=|V|$. There exists an algorithm that decides whether $G$ is a tree using $n - 1$ ER queries.
\end{theorem}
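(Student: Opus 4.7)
The plan is to use \cref{lem:cycle-iff-nonintegral} in an almost black-box manner, together with the observation that $G$ is a tree iff $G$ is both connected and acyclic. I would describe the following algorithm: fix an arbitrary vertex $v\in V$; for each of the remaining $n-1$ vertices $u\in V\setminus\{v\}$, query $\res(u,v)$; accept iff every returned value is a finite integer. This uses exactly $n-1$ ER queries, matching the claimed bound.

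For correctness, I would argue both directions. If $G$ is a tree, then $G$ is connected (so every $\res(u,v)$ is finite) and acyclic, so by \cref{lem:er_of_path} of \cref{lem:er_and_edge_cuts} each $\res(u,v)$ equals the length of the unique $uv$-path, an integer; hence all answers are finite integers and the algorithm accepts. Conversely, suppose the algorithm accepts, so every $\res(u,v)$ is a finite integer. Finiteness of all these distances forces $v$ to be connected to every other vertex, so $G$ is connected. Applying \cref{lem:cycle-iff-nonintegral} to the fixed vertex $v$, the integrality of all $n-1$ distances $\res(u,v)$ implies that $G$ is acyclic. Being connected and acyclic, $G$ is a tree.

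The main subtlety worth flagging is the disconnected case, since \cref{lem:cycle-iff-nonintegral} is stated only for connected graphs. This is handled by the convention that $\res(u,v)=\infty$ when $u$ and $v$ lie in different components (for example, via Thompson's principle, as no unit $uv$-flow exists), which is treated by the algorithm as a non-integer value and causes a reject; alternatively, one can detect infinite values directly from the query answer. Either way, no extra queries are needed, and the $n-1$ bound holds. I do not foresee a real obstacle here — the work has been done in \cref{lem:cycle-iff-nonintegral}, and the theorem is essentially the observation that a single vertex $v$ already suffices to witness acyclicity, so only $n-1$ queries are required.
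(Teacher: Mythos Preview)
Your proposal is correct and essentially identical to the paper's own proof: both pick an arbitrary vertex $v$, make the $n-1$ queries $\res(u,v)$, use finiteness to certify connectivity, and invoke \cref{lem:cycle-iff-nonintegral} on the integrality of these values to certify acyclicity. The paper does not spell out the disconnected-case convention you flag, but otherwise the argument matches line for line.
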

\begin{proof}
    Let $v \in V$ be an arbitrary vertex of $G$.
    Verifying its pairwise distances are all integers can be done using $n-1$ ER queries by querying its distance to each other vertex $u \in V \setminus \{ v \}$.
    If all of these ER distance are finite we conclude that $G$ is connected.  If, in addition, all of them are integers we conclude that $G$ is acyclic by \cref{lem:cycle-iff-nonintegral}.  If $G$ is both connected and acyclic then it is a tree.
\end{proof}

The effective resistance, as a metric, is highly sensitive to global changes within the graph, which is why querying all of the distances from only one vertex allows for detecting cycles. This is something that is not possible using shortest path queries.
In fact, the technique of querying the resistance distance of one vertex to each other vertex can be leveraged even further to detect changes in the graph, as we do in the next section. 
}

\fullornot{
\subsection{Equality of graphs}
}{
\paragraph{Equality of graphs.}
}
\fullornot{
The sensitivity of resistance distance can be felt by every vertex of the graph.
Even a small local change, such as the addition of a single edge between two vertices (or even a change to a single edge's weight), has ripple effects throughout the entire graph.
In fact, Rayleigh's Monotonicity Principle states that decreasing the resistance of any edge cannot cause the effective resistance between any two nodes to increase.
What we show is that, for any connected graph, for any vertex $s$ in this graph, adding an edge (or decreasing its resistance for weighted graphs) causes a strict decrease in the resistance distance between $s$ and some other vertex.
In a sense, this gives a `strict monotonicity' property for connected graphs, wherein lowering resistance on any edge guarantees there is a pair involving every vertex for which the effective resistance on that pair decreases.
\begin{lemma}
    \label{lem:connected-strict-monotonicity}
    Let $G=(V,E_G,w_G)$ and $H=(V,E_H,w_H)$ be two connected 
    $n$-vertex graphs such that for every pair $(a,b) \in V \times V$, $w_G(a,b) \leq w_H(a,b)$ and there exists $(u,v) \in V \times V$ such that $w_G(u,v) < w_H(u,v)$ strictly.
    Then for every $s \in V$ there exists a vertex $t \in V$ such that $\res_H(s,t) < \res_G(s,t)$. 
\end{lemma}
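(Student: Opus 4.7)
The plan is to reduce to the case where $H$ differs from $G$ in a single edge weight, and then apply a Sherman-Morrison computation to pin down exactly when the effective resistance strictly decreases.

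First I would introduce the intermediate graph $G'$ that agrees with $G$ everywhere except that $w_{G'}(u,v) = w_H(u,v)$. Since $w_{G'}\le w_H$ pointwise, Rayleigh's monotonicity applied to $G'$ and $H$ gives $\res_H(a,b)\le \res_{G'}(a,b)$ for all $a,b\in V$. Hence it suffices to show: for every $s\in V$ there exists $t\in V$ with $\res_{G'}(s,t) < \res_G(s,t)$.

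Next I would compute $\res_{G'}(s,t)-\res_G(s,t)$ via the rank-one update $\widetilde L_{G'} = \widetilde L_G + \delta(\vec{1}_u-\vec{1}_v)(\vec{1}_u-\vec{1}_v)^T$, where $\delta := w_H(u,v)-w_G(u,v)>0$. Using the identity $\widetilde L_G^{-1}(\vec{1}_u-\vec{1}_v) = L_G^+(\vec{1}_u-\vec{1}_v) =: p$ (valid because $J(\vec{1}_u-\vec{1}_v)=0$) together with $(\vec{1}_u-\vec{1}_v)^T \widetilde L_G^{-1}(\vec{1}_u-\vec{1}_v) = \res_G(u,v)$, the Sherman-Morrison formula yields
\[
\res_{G'}(s,t) - \res_G(s,t) \;=\; -\,\frac{\delta\,(p_s - p_t)^2}{1 + \delta\,\res_G(u,v)}.
\]
The denominator is positive, so this difference is strictly negative exactly when $p_s\ne p_t$.

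It then remains to show that for every $s\in V$ there is some $t\in V$ with $p_s\ne p_t$. Because $G$ is connected and $u\ne v$, $\res_G(u,v)>0$, and since $p$ is the $uv$-potential vector we have $p_u-p_v = \res_G(u,v) > 0$; in particular $p_u\ne p_v$. Hence for any $s$, at least one of $t=u$ or $t=v$ satisfies $p_s\ne p_t$, and then Step~1 gives $\res_H(s,t) \le \res_{G'}(s,t) < \res_G(s,t)$. I do not anticipate a serious obstacle; the main care point is handling the case $w_G(u,v)=0$ (the differing pair is actually a new edge in $H$), which is handled uniformly by working with the vector $\vec{1}_u-\vec{1}_v$ and the quantity $\res_G(u,v)$, both well-defined purely from connectedness of $G$.
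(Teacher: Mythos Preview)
Your proof is correct and takes a genuinely different route from the paper. Both arguments begin with the same Rayleigh reduction to a single modified edge, but from there the paper argues via Thompson's principle: it takes the electrical $s$-to-$v$ flow in $G$, transports it to $G'$, and shows (by a case split on whether $f(u,v)=0$ and an explicit flow perturbation) that this flow is not energy-minimizing in $G'$, hence $\res_{G'}(s,v)<\res_G(s,v)$. Your Sherman--Morrison computation replaces all of that with a one-line identity
\[
\res_{G'}(s,t)-\res_G(s,t)=-\frac{\delta\,(p_s-p_t)^2}{1+\delta\,\res_G(u,v)},
\]
which is strictly negative iff $p_s\ne p_t$; since $p_u-p_v=\res_G(u,v)>0$, one of $t\in\{u,v\}$ always works. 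Your approach is shorter, avoids the case analysis (in particular the $w_G(u,v)=0$ case is handled uniformly), and yields an explicit quantitative formula for the drop. The paper's flow argument, by contrast, is more self-contained in that it uses only Thompson's principle and avoids the regularized Laplacian and matrix inversion identities.
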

\begin{proof}
    By Rayleigh's monotonicity law, it suffices to prove the given statement for the following case:
    suppose that all pairs other than $(u,v)$ have equal weights, and only $w_G(u,v) < w_H(u,v)$.
    
    Let $s\in V$ be any vertex, and suppose without loss of generality that there exists an $s$ to $u$ path $\pi$ that avoids $v$.  This assumption is alright as at most one of the following conditions holds: (1) $u$ is a cut vertex that separates $s$ from $v$, and (2) $v$ is a cut vertex that separates $s$ from $u$.
    
    Now, let $f:E_G \to \mathbb{R}$ be the electrical $s$ to $v$ flow in $G$, and let $f':E_H\to\R$ be the same flow in $H$, i.e..
    \[
        f'(e) = \begin{cases}
            f(e) & e \in E_G \\
            0 & \text{otherwise}
        \end{cases}
    \]
    Note, by Thompson's law, $\res_G(s,v) = \mathcal{E}(f)$, and $\res_H(s,v) \leq \mathcal{E}(f')$ as $f'$ is not necessarily an electrical flow.  We show by considering two cases that in fact $\res_H(s,v) < \mathcal{E}(f')$.
    
    First, we consider the case that $f(u,v) = 0$ (in particular that can happen if $w_G(u,v) = 0$.)
    We say that a vertex $x\in V$ is used by $f'$ if some edge incident to $x$ has a non-zero $f'$ value.  Let $r\in V$ be the closest vertex to $u$ on $\pi$ that is used by $f'$ (note $r$ can be equal to $s$ or $u$).  Then, let $\pi' = \pi[r,u]\circ (u,v)$ be the $r$ to $v$ path in $H$, and note that none of the edges of $\pi'$ is used in $f'$.  Also, since $r$ is used by $f'$ there is at least one $r$ to $v$ path $\gamma'$ in $f'$ that has positive $f'$ value on all its edges.  Now, we build $f''$ in $H$ by sending an $\eps$ amount of flow on $\pi'$, and pushing back an $\eps$ amount of flow on $\gamma'$, i.e.
    \[
        f''(e) = \begin{cases}
            \eps & e \in \pi' \\
            f'(e)-\eps & e \in \gamma' \\
            f'(e) & \text{otherwise.}
        \end{cases}
    \]
    It follows that,
    \[
    g(\eps) = \mathcal{E}(f'') - \mathcal{E}(f') = \sum_{e\in\pi'}{(w_H(e)^{-1} \cdot \eps)^2} +\sum_{e\in\gamma'}{ \left((w_H(e)^{-1}\cdot(f'(e)-\eps))^2 - (w_H(e)^{-1}\cdot f'(e))^2\right)}.
    \]
    Thus,
    \[
    \frac{dg}{d\eps}=2\varepsilon\sum_{e\in\pi'}{(w_H(e)^{-1})} - 2\sum_{e\in\gamma'}{w_H(e)^{-1}\cdot(f'(e)-\eps)}\]
    which, when evaluated at $\varepsilon = 0$, is $-2 \sum_{e \in \gamma'}w_H(e)^{-1} \cdot f'(e) < 0$.
    Since $g(0) = 0$, and the derivative of $g$ is negative at $0$, for sufficiently small values of $\eps$, $g(\eps) < 0$, which means $\mathcal{E}(f'') < \mathcal{E}(f')$.  But, we already know $\res_H(s,v) \leq \mathcal{E}(f'')$ and $\res_G(s,v) = \mathcal{E}(f')$, implying, in the case that $w_G(u,v)=0$, the proof is complete.

    Second, we consider the case that $f(u,v) > 0$.  In this case, we have $w_G(u,v) > 0$, as otherwise $f$ cannot use the non-existent edge.
    Then trivially \[\mathcal{E}(f) = \sum_{e \in E_G} (w_G(e)^{-1} \cdot f(e))^2 >  \sum_{e \in E_G} (w_H(e)^{-1} \cdot f(e))^2 = \mathcal{E}(f')\]
    concluding the proof.
\end{proof}

This property of subgraphs allows for a simple algorithm with a linear query complexity that detects if one graph is equal to another, assuming its weights have monotonically increased or decreased. 
}{
First, we show that adding or removing edges from a graph can be tested with $O(n)$ queries.  The proof of the following theorem (that is deferred to the full version) depends on Rayleigh's monotonicity law as well as the following crucial observation: let $v$ be an arbitrary vertex, and let $e$ be an arbitrary edge of $G$.  If the weight of $e$ is changed then the effective resistance of $v$ to at least one vertex of the graph will be changed. Hence, it is sufficient to query ER values between a single vertex $v$ and all other vertices.
}
\begin{theorem}
    \label{thm:proper-subgraph-verification}
    Let $G=(V,E_G,w_G),H=(V,E_H,w_H)$ be $n$-vertex graphs, where $G$ is visible and $H$ is hidden. %
    Suppose $w_G(e) \leq w_H(e)$ for every edge $e \in E_G \cup E_H$ and there exists at least one edge $e'$ such that $w_G(e') < w_H(e')$ strictly (or, similarly, $w_G(e) \geq w_H(e)$ for every $e$ and there exists $e'$ such that $w_G(e') > w_H(e')$). Then, there exists an algorithm that decides whether $G=H$ using $n - 1$ ER queries.
\end{theorem}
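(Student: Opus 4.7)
The plan is to reduce the verification task to a single use of the strict monotonicity statement in Lemma \ref{lem:connected-strict-monotonicity}. Fix any vertex $v \in V$ once and for all. I would query $\res_H(v,u)$ for every $u \in V \setminus \{v\}$, which costs exactly $n-1$ ER queries, and compute the corresponding values $\res_G(v,u)$ directly from the visible graph $G$. The algorithm then declares $G = H$ if and only if $\res_G(v,u) = \res_H(v,u)$ for all $u$.

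Correctness of the ``if'' direction of the algorithm's decision is immediate: if $G = H$ then of course all queried effective resistances agree with those computed from $G$. For the ``only if'' direction, I would argue the contrapositive. Suppose $G \neq H$. Then under the hypotheses of the theorem there is at least one edge on which $w_G$ and $w_H$ differ strictly, in the required direction. In the case $w_G(e) \leq w_H(e)$ everywhere with strict inequality on some edge, Lemma \ref{lem:connected-strict-monotonicity} applied to the fixed vertex $s := v$ produces a vertex $t \in V$ with $\res_H(v,t) < \res_G(v,t)$. Hence the query at $t$ reveals a discrepancy, and the algorithm correctly reports $G \neq H$. The symmetric case $w_G(e) \geq w_H(e)$ follows by swapping the roles of $G$ and $H$ in the same lemma.

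The only subtlety is that Lemma \ref{lem:connected-strict-monotonicity} requires both graphs to be connected. I would handle this by noting that under either monotonicity direction, if one graph is connected then the other is automatically connected on the edges where weights are nonzero, or that a disconnected $H$ is flagged by an undefined/infinite ER query return from $v$; in the standard connected setting assumed throughout the section this point disappears.

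The only genuinely substantive ingredient here is Lemma \ref{lem:connected-strict-monotonicity} itself, which is already available; everything else is bookkeeping about a single sweep of $n-1$ queries from an arbitrary base vertex. I do not expect any real obstacle in this proof beyond citing that lemma and verifying the two monotonicity directions symmetrically.
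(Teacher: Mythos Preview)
Your proposal is correct and matches the paper's proof essentially verbatim: fix an arbitrary base vertex, query the $n-1$ effective resistances from it in $H$, compare against the values computed in $G$, and invoke Lemma~\ref{lem:connected-strict-monotonicity} to guarantee a discrepancy whenever $G\neq H$. The paper's proof is actually terser than yours; your remarks on the symmetric direction and the connectivity caveat are reasonable additions but not points of divergence.
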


\full{
\begin{proof}
    Let $s\in V$ be an arbitrary vertex.
    For all $v\in V\backslash\{s\}$, our algorithm queries $\res_H(s,v)$ and computes $\res_G(s,v)$.  We accept if $\res_H(s,v)=\res_G(s,v)$ for all $v$ and reject otherwise.  If $G = H$ all these ER distances are equal and the algorithm correctly accepts. On the other hand, by \cref{lem:connected-strict-monotonicity}, if $G\neq H$, there will be at least one $v\in V$ for which $\res_G(s,v) \neq \res_H(s,v)$, thus our algorithm correctly rejects.
\end{proof}
}
For unweighted graphs, this implies a linear query complexity algorithm for detecting wether two graphs are equal, assuming one is a subgraph of the other.

\fullornot{
    \subsection{Cut vertices}
}{
    \paragraph{Cut vertices.}
}
\fullornot{
    Now, we use \cref{lem:vertex-cut-iff-triangle-ineqaulity-tight} to test whether a given vertex is a cut vertex, or two given vertices belong to the same biconnected components, both with linear number of ER queries.
    Now, we use \cref{lem:vertex-cut-iff-triangle-ineqaulity-tight} to test whether a given vertex is a cut vertex, or two given vertices belong to the same biconnected components, both with linear number of ER queries.
}{
    Our results for verifying cut vertices and edges are based on the observation that for any three vertices $a, b, c$, the vertex $b$ is  a cut vertex that separates $a$ and $c$ if and only if $\res(a,b) + \res(b,c) = \res(a,c)$.
}

\begin{theorem}
    \label{thm:cut-vertex-verification}
    Let $G=(V,E)$ be an undirected, weighted or unweighted, graph, and let $n=|V|$. 
    \begin{enumerate} [(1)]
        \item There exists an algorithm that decides whether a vertex $v\in V$ is a cut vertex using $2n - 3$ ER queries.
        \item There exists an algorithm that decides whether two vertices $a,b\in V$ are in the same biconnected component using $2n - 3$ ER queries.
    \end{enumerate}
\end{theorem}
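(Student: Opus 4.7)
The plan is to invoke the triangle-equality characterization stated just above: a vertex $b$ is a cut vertex separating $a$ and $c$ in $G$ if and only if $\res(a,b)+\res(b,c)=\res(a,c)$. Both parts reduce to querying effective resistances along a linear number of well-chosen triples and checking this additive identity.

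For part (1), I would fix an arbitrary vertex $a\in V\setminus\{v\}$ and query $\res(a,v)$, together with $\res(v,c)$ and $\res(a,c)$ for every $c\in V\setminus\{a,v\}$; this uses $1+2(n-2)=2n-3$ queries. The algorithm returns ``cut vertex'' iff some $c$ satisfies $\res(a,v)+\res(v,c)=\res(a,c)$. The key structural observation that makes fixing a single base vertex $a$ suffice is the following: if $v$ is a cut vertex, then $G\setminus\{v\}$ has at least two connected components; the fixed vertex $a$ lies in exactly one of them, and any $c$ in a different component is separated from $a$ by $v$ and so witnesses the equality. Conversely, if $v$ is not a cut vertex then the equality fails for every $c$.

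For part (2), the standard characterization of biconnected components (blocks) says that two vertices $a,b$ lie in the same block if and only if no vertex $v\in V\setminus\{a,b\}$ separates $a$ from $b$. Hence the algorithm queries $\res(a,b)$ plus $\res(a,v)$ and $\res(v,b)$ for every $v\in V\setminus\{a,b\}$, again $1+2(n-2)=2n-3$ queries, and accepts iff no $v$ satisfies $\res(a,v)+\res(v,b)=\res(a,b)$.

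I expect the main obstacle to be the sufficiency argument for part (1): one must justify that the arbitrary choice of base vertex $a$ is never ``unlucky'' in missing a witness. This is immediate from the definition of a cut vertex (any $a$ on one side of the cut forces the existence of a $c$ on the other side), but it is the one place where the argument is not purely mechanical; the rest is a direct application of the triangle-equality characterization together with the query count.
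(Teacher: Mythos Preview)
Your proposal is correct and essentially identical to the paper's own proof: both parts fix the obvious base vertices, make the $1+2(n-2)=2n-3$ queries you describe, and invoke \cref{lem:vertex-cut-iff-triangle-ineqaulity-tight} to test the additive identity. The only point you flagged as a potential obstacle---that the arbitrary base vertex $a$ in part~(1) always has a witness $c$ on the other side of the cut---is handled in the paper exactly as you suggest, by noting that $G\setminus\{v\}$ has at least two components and $a$ lies in only one of them.
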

\full{
\begin{proof}
    First, we prove (1). To decide if $v$ is a cut vertex, our algorithm fixes some $u \in V \setminus \{ v \}$, and queries $\res(u,v)$.
    For every $w \notin \{u, v\}$, we will then query $\res(u,w)$ and $\res(v,w)$. Therefore, we have a total number of $2(n - 2) + 1 = 2n-3$ queries.
    Our algorithm concludes that $v$ is a cut vertex if and only if there exists a $w\in V$, $w\neq v$, such that $\res(u,w) = \res(u,v) + \res(v,w)$.
    If $v$ is a cut vertex, then removing it separates $u$ from at least one other vertex $w$ in $G$, i.e.~$w$ can be any vertex that is in a different connected component from $u$ in $G\backslash\{v\}$.  In this case, by \cref{lem:vertex-cut-iff-triangle-ineqaulity-tight}, $\res(u,w) = \res(u,v) + \res(v,w)$, and our algorithm correctly decides that $v$ is a cut vertex.  
    On the other hand, if $v$ is not a cut vertex, then $\res(u,w) = \res(u,v) + \res(v,w)$ does not hold for any $w$, by \cref{lem:vertex-cut-iff-triangle-ineqaulity-tight}.  Therefore our algorithm correctly decides that $v$ is not a cut vertex.    

    Next, we prove (2). 
    To decide if $a$ and $b$ belong to the same biconnected compoenent, we query $\res(a,r)$ and $\res(b,r)$ for every $r\in V$, hence $2n-3$ number of ER queries.
    Our algorithm concludes that $a$ and $b$ are in different biconnected component if and only if $\res(a,r) + \res(r, b) = \res(a,b)$ for some $r\in V$, $r\neq a,b$.
    If $a$ and $b$ are in different biconnected components, 
    by \cref{lem:vertex-cut-iff-triangle-ineqaulity-tight},
    there exists an $r$ that makes the triangle inequality tight.  Therefore, our algorithm correctly decides that $a$ and $b$ are in different biconnected components. 
    On the other hand, if $a$ and $b$ belong to the same biconnected component, the triangle inequality is not tight for any $r\in V$, by \cref{lem:vertex-cut-iff-triangle-ineqaulity-tight}.  Hence, our algorithm correctly decides that $a$ and $b$ are in the same biconnected components.
\end{proof}
}

\fullornot{
\subsection{Cut edges}
Not only can we characterize cut vertices by resistance distance, but we can also do the same for cut edge.
For a cut edge, we show that for any vertex in the graph, its resistance distance to each of the endpoints of the edge cut differ by exactly one.
}{
    \paragraph{Cut edges.} For verifying cut edges, we need the following characterization in addition to the tightness of triangle inequality for cut vertices.
}
\begin{lemma}
    \label{lem:cutedge-iff-res-is-plusorminus-one}
    Let $G=(V,E)$ be a connected undirected unweighted graph.
    A pair of vertices $a,b \in V$ is a cut edge if and only if $|\res(a,x) - \res(b,x)| = 1$ for all $x \in V$.
\end{lemma}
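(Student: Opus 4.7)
My plan is to reduce both directions of the lemma to two established facts: the characterization of cut vertices by a tight triangle equality from \cref{lem:vertex-cut-iff-triangle-ineqaulity-tight}, and the fact that for an edge $(u,v)\in E$, $\res(u,v)=1$ iff $(u,v)$ is a cut edge (part~(\ref{lem:er_of_an_edge}) of \cref{lem:er_and_edge_cuts}).

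For the forward direction, I would assume $(a,b)$ is a cut edge and let $V_a, V_b$ be the two components of $G\setminus\{(a,b)\}$ containing $a$ and $b$ respectively. Then $\res(a,b)=1$, and for any $x\in V_a\setminus\{a\}$, every $x$-to-$b$ path in $G$ must traverse $(a,b)$, so $a$ is a cut vertex between $x$ and $b$. Applying \cref{lem:vertex-cut-iff-triangle-ineqaulity-tight} yields $\res(x,b)=\res(x,a)+1$, i.e., $|\res(a,x)-\res(b,x)|=1$. The case $x\in V_b$ is symmetric, and $x\in\{a,b\}$ is immediate.

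For the converse, assume $|\res(a,x)-\res(b,x)|=1$ for all $x\in V$. Setting $x=a$ immediately gives $\res(a,b)=1$. For each $x\notin\{a,b\}$, the identity $|\res(a,x)-\res(b,x)|=\res(a,b)$ is a tight triangle equality, so \cref{lem:vertex-cut-iff-triangle-ineqaulity-tight} implies that either $a$ separates $b$ from $x$ or $b$ separates $a$ from $x$. I would then define $A=\{a\}\cup\{x:a\text{ separates }b\text{ from }x\}$ and $B=\{b\}\cup\{x:b\text{ separates }a\text{ from }x\}$. These sets cover $V$, and they are disjoint because $x\in A\cap B$ would force every simple $x$-to-$a$ path to pass through $b$, whose prefix up to $b$ would be a simple $x$-to-$b$ path avoiding $a$, contradicting $x\in A$.

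The main obstacle, which is the final step, is to argue that $(a,b)$ is the unique edge of $G$ between $A$ and $B$; together with the connectedness of $G$ this forces $(a,b)\in E$ and implies that removing it disconnects $G$, making it a cut edge. I would argue by contradiction: suppose $(u,v)\in E$ with $u\in A$, $v\in B$, and $(u,v)\neq(a,b)$. The cases $u=a$ and $v=b$ each directly produce a length-one path contradicting the membership of the other endpoint in $B\setminus\{b\}$ or $A\setminus\{a\}$ respectively. In the remaining case $u\neq a$ and $v\neq b$, I would take any simple $v$-to-$a$ path, use the fact that it must pass through $b$ to extract a simple $v$-to-$b$ sub-path avoiding $a$, and then concatenate with the edge $(u,v)$ (restricting to a sub-path starting at $u$ if $u$ already appears) to obtain a simple $u$-to-$b$ path avoiding $a$, contradicting $u\in A\setminus\{a\}$.
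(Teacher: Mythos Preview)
Your proof is correct and shares the paper's core strategy: both directions hinge on \cref{lem:vertex-cut-iff-triangle-ineqaulity-tight} and on partitioning $V$ into the vertices separated from $b$ by $a$ versus those separated from $a$ by $b$. The only notable difference is how the converse is finished. The paper defines $A,B$ via the effective-resistance inequality (so disjointness is automatic), then observes that a shortest $a$-to-$b$ path can have no internal vertex (any such vertex would be cut off from one endpoint by the other), hence $(a,b)\in E$; combining $\res(a,b)=1$ with part~(\ref{lem:er_of_an_edge}) of \cref{lem:er_and_edge_cuts} then gives that $(a,b)$ is a cut edge. You instead define $A,B$ via the separation property (so you must argue disjointness), and then show by direct case analysis that $(a,b)$ is the \emph{only} edge crossing the $A$/$B$ partition, which simultaneously yields $(a,b)\in E$ and that removing it disconnects $G$. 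Your finish is slightly longer but self-contained, as it does not invoke \cref{lem:er_and_edge_cuts} for this direction; the paper's finish is shorter but leans on that extra lemma. Both are fine.
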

\begin{proof}
    Suppose that $|\res(a,x) - \res(b,x)| = 1$ for all $x \in V$.
    Let $A$ and $B$ be sets such that $A = \{ x \in V \ | \ \res(b,x) = \res(a,x) + 1\}$, i.e.~vertices that are closer to $a$ than $b$, and $B = \{ x \in V \ | \ \res(a,x) = \res(b,x) + 1 \}$, i.e.~vertices that are closer to $b$ than $a$.  By the assumption of lemma, $A$ and $B$ partition $V$.  In particular, note $a\in A$ and $b\in B$.
    For any $a' \in A$, we have $\res(a',b)=\res(a',a)+1=\res(a',a)+\res(a,b)$. 
    So, by \fullornot{\cref{lem:vertex-cut-iff-triangle-ineqaulity-tight},}{our characterization of cut vertices (see full version), it follows that} $a$ is a cut vertex that separates any $a'\in A\backslash\{a\}$ from $b$.
    Using the same argument, we conclude that $b$ is a cut vertex that separates any $b' \in B\backslash\{b\}$ from $a$.
    
    Now, let $\pi$ be the shortest $a$ to $b$ path.  Since for every vertex $x$ in $V\backslash\{a,b\}$ either $a$ separates $x$ from $b$ or $b$ separates $x$ from $a$, $\pi$ cannot have any vertex other than $a$ and $b$.  So, $(a,b)\in E$.  Also, $\res(a,b) = |\res(a,b) - \res(b,b)| = 1$. \fullornot{So, by \cref{lem:er_and_edge_cuts}, $(a,b)$ is a cut edge}{So, since $(a,b)$ is an edge with resistance one, it follows that it must be a cut edge, since if $(a,b)$ lies on a cycle it will have an effective resistance less than $1$ (for details, see full version).}

    To see the other direction, we suppose $(a,b)$ is a cut edge.  Let $A\subseteq V$ and $B\subseteq V$ be the subset of vertices in the connected component of $a$ and $b$ in $G\backslash\{(a,b)\}$, respectively.
    \fullornot{We know that $\res(a,b) = 1$ by \cref{lem:er_and_edge_cuts}}{Since $(a,b)$ does not lie on any cycle, it must have an effective resistance of exactly one (see full version)}.  We also know that $a$ is a cut vertex separating $A$ from $b$, and $b$ is a cut vertex separating $a$ from $B$.  Therefore, \fullornot{by \cref{lem:vertex-cut-iff-triangle-ineqaulity-tight}}{since the triangle inequality is tight if and only if it involves a cut vertex}, for any vertex $x\in A$, $\res(x,b) = \res(x,a) + \res(a,b) = \res(x,a) + 1$.
    Similarly, for any vertex $x\in B$, $\res(x,a) = \res(x,b) + \res(a,b) = \res(x,b) + 1$.  Since $\{A, B\}$ is a partition of $V$, $|\res(a,x) - \res(b,x)| = 1$ for all $x\in V$ as desired.
\end{proof}

To decide if a pair of vertices $a,b\in V$ are adjacent via a cut edge, we query $\res(a,x)$ and $\res(b,x)$ for all $x\in V$, and check if $|\res(a,x) - \res(b,x)| = 1$ holds for all of them.  Therefore, we obtain the following algorithm.

\begin{theorem}
    \label{thm:cut-edge-verification}
    Let $G=(V,E)$ be an undirected, weighted or unweighted graph, and let $n=|V|$, and let $a,b\in V$.
    There exists an algorithm that decides whether $(a,b)$ is a cut edge using $2n - 3$ ER queries.
\end{theorem}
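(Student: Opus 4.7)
The plan is to turn \Cref{lem:cutedge-iff-res-is-plusorminus-one} into an algorithm in the most direct way possible: since $(a,b)$ is a cut edge if and only if $|\res(a,x) - \res(b,x)| = 1$ for every $x \in V$, it suffices to query all the ER distances of the form $\res(a,x)$ and $\res(b,x)$, and then check this equality at every vertex $x$.

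Concretely, I would first issue the single query $\res(a,b)$; this one value simultaneously covers the boundary cases $x=a$ (paired with $\res(a,a)=0$) and $x=b$ (paired with $\res(b,b)=0$). Then, for each of the $n-2$ remaining vertices $x \in V \setminus \{a,b\}$, I would make the two queries $\res(a,x)$ and $\res(b,x)$. The total query count is $1 + 2(n-2) = 2n-3$. The algorithm accepts iff $|\res(a,x) - \res(b,x)| = 1$ holds for every $x \in V$.

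Correctness is a one-line consequence of \Cref{lem:cutedge-iff-res-is-plusorminus-one}: the algorithm accepts exactly when the characterizing equality holds everywhere, which by the lemma is equivalent to $(a,b)$ being a cut edge. There is no real obstacle beyond bookkeeping — in particular, being careful not to double-count $\res(a,b) = \res(b,a)$ in the tally and noting that the trivial distances $\res(a,a) = \res(b,b) = 0$ require no queries. Observe that the case $(a,b) \notin E$ is handled automatically: the lemma's equality already fails at $x=b$, since it would force $\res(a,b)=1$, which by \Cref{lem:er_and_edge_cuts} holds only when $(a,b)$ is itself a cut edge.
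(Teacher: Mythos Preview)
Your proposal is correct and mirrors the paper's own argument almost verbatim: the paper simply says to query $\res(a,x)$ and $\res(b,x)$ for all $x\in V$ and check the condition of \Cref{lem:cutedge-iff-res-is-plusorminus-one}, with the $2n-3$ count coming from exactly the bookkeeping you spell out. One tiny quibble: your final sentence's appeal to \Cref{lem:er_and_edge_cuts} is not quite the right citation (that lemma only asserts $\res(a,b)=1\Leftrightarrow\text{cut edge}$ under the hypothesis $(a,b)\in E$), but this is moot since \Cref{lem:cutedge-iff-res-is-plusorminus-one} is already a full ``if and only if'' and thus handles the non-edge case directly.
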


\section{Property testing}
Next, we present our algorithms for testing properties of a hidden graph using ER queries. A \emph{graph property} is a set of graphs that is closed under isomorphism, such as connectivity, acyclicity or planarity. The algorithms in this section are approximate in the sense that they guarantee correctness only if the input graph either possesses the desired property or is far from having it. This approximate testing model has been extensively studied within the property testing framework~\cite{books/cu/Goldreich17}. Our results are most closely related to the \emph{bounded-degree model}.

In this model the input graph is assumed to have maximum degree $d$, which is bounded by a constant.  The algorithm can query the neighbors of a vertex.  A graph is $\eps$-far from having a property if more than $\eps d n$ edge modifications are required to make it satisfy that property.  The query complexity in this models is often described as a function of $n$ and $\eps$. See \citet[Section 9]{books/cu/Goldreich17} for more on this model.

\fullornot{
    In this section, we provide property testing algorithms for vertex biconnectivity and edge biconnectivity. Additionally, we present a theorem that allows us to adapt existing property testing algorithms to our query model. The number of queries required depends on the \emph{ER density}, a graph parameter that we introduce and define. We demonstrate that bounded-degree, bounded-treewidth graphs have small ER density, making them well-suited for efficient property testing algorithms under our framework.
}{
    In the full version of the paper, we provide property testing algorithms for vertex biconnectivity and edge biconnectivity. Additionally, we present a  meta theorem that allows for the adaptation of existing property testing algorithms to our query model for bounded treewidth graphs. In the short version, we only present this final result.
}

\full{
\subsection{Testing vertex biconnectivity}
First, we use \cref{thm:cut-vertex-verification} to show that the vertex biconnectivity of a graph can be tested with a linear number of queries. 

\begin{theorem}
\label{thm:vertex_bi_conn_test}
Let
$\eps > 0$ and let $G=(V,E)$ be an undirected graph with $n$ vertices.
There exists an algorithm that makes $O(n/\eps)$ effective resistance queries that accepts with probability $1$ if $G$ is vertex biconnected, and rejects with probability at least $ 2/3$ if $G$ is $\eps$-far from being vertex biconnected, 
i.e., one needs to add %
at least $\eps m$ edges to make $G$ vertex biconnected.
\end{theorem}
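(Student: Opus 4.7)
The plan is to use the subroutine from Theorem~\ref{thm:cut-vertex-verification}(2) --- which decides whether two given vertices lie in the same biconnected component using $2n-3$ effective resistance queries --- as the core of a standard property tester. The tester samples $t = \Theta(1/\eps)$ independent uniformly random pairs $(u_i, v_i) \in V \times V$, runs the subroutine on each, and rejects iff some pair is reported to lie in different biconnected components. The total query cost is $O(n/\eps)$. Completeness is immediate: if $G$ is vertex biconnected, no cut vertex exists, so every pair shares a biconnected component and the tester always accepts.

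For soundness, everything reduces to the following structural claim: if $G$ is $\eps$-far from vertex biconnected, then a uniformly random ordered pair $(u, v) \in V \times V$ is separated by some cut vertex with probability at least $\Omega(\eps)$. Granted the claim, the probability that none of the $t = \Theta(1/\eps)$ sampled pairs is separated is at most $(1 - \Omega(\eps))^{t} \leq 1/3$ for a suitable constant, so the tester rejects with probability at least $2/3$.

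To prove the structural claim I would consider the block-cut tree $T$ of $G$ and appeal to Eswaran--Tarjan's characterization, which states that the minimum number of edges required to biconnect $G$ equals $\max(A, B)$, where $A = \lceil \ell/2 \rceil$ with $\ell$ the number of leaf blocks of $T$, and $B = \max_{c} (k_c - 1)$ with $k_c$ the number of connected components of $G \setminus \{c\}$ as $c$ ranges over cut vertices. Since $\eps m \leq \max(A, B)$, at least one of $A \geq \eps m$ or $B \geq \eps m$ holds. In the case $A \geq \eps m$, each leaf block $B_i$ has $s_i \geq 1$ private (non-cut) vertices and contributes $s_i(n - |B_i|)$ ordered pairs separated by its unique cut vertex; a short case split on whether some leaf block dominates the vertex count (more than $n/2$ members) yields a total of $\Omega(\eps m \cdot n) = \Omega(\eps n^2)$ separated pairs after using $m \geq n - 1$. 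In the case $B \geq \eps m$, a direct computation of $\sum_{i < j} n_i n_j$ over the component sizes of $G \setminus \{c\}$ (which is minimized by making one component large and the rest singletons) already gives $\Omega(k_c \cdot n) = \Omega(\eps n^2)$ pairs separated by $c$ alone. Either way the fraction of separated pairs is $\Omega(\eps)$.

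The main obstacle is handling the Eswaran--Tarjan dichotomy cleanly: because neither the leaf count nor the worst cut-vertex degree alone determines the augmentation cost, the structural argument must split into the two regimes and show that whichever lower bound is tight forces enough separated pairs. Packaging the extremal case analysis on block sizes (one dominant leaf versus balanced leaves) into a uniform $\Omega(\eps n^2)$ bound is the bulk of the proof; the final sampling and concentration step is then entirely routine.
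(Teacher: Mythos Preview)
Your approach is correct but takes a noticeably heavier route than the paper's. Both algorithms invoke \cref{thm:cut-vertex-verification} $O(1/\eps)$ times, but the paper \emph{fixes one arbitrary vertex $v$}, first spends $O(n)$ queries to check that $v$ itself is not a cut vertex (rejecting if it is), and then tests whether each of $O(1/\eps)$ random vertices $u_i$ lies in the same biconnected component as $v$. The payoff of that extra cut-vertex check is that once $v$ is known not to be a cut vertex, it belongs to a \emph{unique} block $B$, so ``$u_i$ separated from $v$'' is literally ``$u_i\notin B$.'' Soundness then needs only the one-line observation that $k-1$ added edges always biconnect a graph with $k$ blocks; hence $\eps$-far forces $k\ge \eps m+1$, the largest block has at most $(1-\eps/2)n$ vertices, and each sample misses $B$ with probability at least $\eps/2$.

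Your random-pairs tester cannot use that shortcut, because a uniformly chosen endpoint may well be a cut vertex sitting in many blocks, and the naive bound $\Pr[\text{same block}]\le \sum_i (n_i/n)^2$ can exceed $1$. That is exactly why you are pushed into the Eswaran--Tarjan dichotomy and the leaf-block case split; the argument you outline does go through, but it is doing real work that the paper avoids entirely by anchoring at a single non-cut vertex. In short: your version trades a tiny algorithmic step (one extra call to \cref{thm:cut-vertex-verification}(1)) for a substantially longer structural lemma. One small omission worth patching: you silently use $m\ge n-1$, so you should check connectivity with $n-1$ queries up front, as the paper does.
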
 
\begin{proof}
First, we check that $G$ is connected by making $n-1$ queries. Specifically, we check that an arbitrary vertex $v$ is at finite distance from all other $n - 1$ vertices.  If $G$ is not connected we reject. So, we assume $G$ is connected in the rest of the proof, which in particular implies that $m \geq n-1$.

Our algorithm takes an arbitrary vertex $v\in V$, and samples $s = 4/\eps$ other vertices $u_1, \ldots, u_s$ uniformly at random. 
It rejects if any of the following conditions holds: (1) $v$ is a cut vertex, 
or (2) there exists $1\leq i\leq s$ such that $v$ and $u_i$ belong to different biconnected components.  If none of these two conditions hold, our algorithm accepts.  Both of these conditions can be checked with $O(s\cdot n) = O(n/\eps)$ ER queries using the algorithm of~\Cref{thm:cut-vertex-verification}.

Clearly, if $G$ is biconnected none of the two conditions holds, thus our algorithm accepts. 

Now, suppose $G$ is $\eps$-far from being biconnected.  If $v$ is a cut vertex our algorithm rejects, so suppose $v$ is not a cut vertex, and it blongs to a biconnected component $B$.  

As $G$ is $\eps$-far from being biconnected, it contains at least 
\(
\eps m + 1 \geq \eps(n-1) + 1
\) biconnected components.
Therefore, its largest biconnected component, and in particular $B$, has at most 
\(
n - \eps(n-1) = n - \eps n + \eps \leq n - \eps n + \eps(n/2) = (1-\eps/2)n
\)
vertices. Therefore, 
with probability at least
\(
1- (1 - \eps/2)^{s} \geq 1 - e^{-(\eps/2)\cdot s} \geq 1 - e^{-2} \geq 2/3,
\)
there exists a $1\leq i\leq s$ such that $u_i\notin B$.  Since, $v$ is not a cut vertex, in this case, $v$ and $u_i$ belong to two different biconnected components, hence our algorithm rejects.
\end{proof}
}

\subsection{Local reconstruction via Schur complement}
\label{subsec:local_reconst_schur_comp}
Our result for adapting known property testing algorithms relies on an algorithm that identifies the neighbors of a given vertex with ER queries. 
\full{
    This algorithm exploits properties of the Schur complement stated in \cref{lem:schur_comp_basics}, as well as the algorithm of \cref{lem:full_reconstruciton}
    for full reconstruction of a graph from its pairwise ER distances.
    \begin{corollary}
    \label{lem:schur_comp_full_sonstruct}
    Let $G = (V, E)$ be a weighted graph, and let $U\subseteq V$.  The Schur complement of $G$ on $U$, $G_U$, can be computed with $\binom{|U|}{2}$ effective resistance queries.
    \end{corollary}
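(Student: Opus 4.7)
The plan is to combine the two ingredients that precede the statement: Lemma \ref{lem:schur_comp_basics}(1), which says the effective resistance between any two vertices of $U$ in $G$ equals the effective resistance between them in $G_U$, and Lemma \ref{lem:full_reconstruciton}, which says any weighted graph on $k$ vertices is reconstructible from all $\binom{k}{2}$ pairwise ER distances in it.

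Concretely, I would proceed as follows. First, query the effective resistance $\res_G(u,v)$ for every unordered pair $\{u,v\} \subseteq U$; this uses exactly $\binom{|U|}{2}$ ER queries in $G$. Second, invoke Lemma \ref{lem:schur_comp_basics}(1) to reinterpret each of these values as $\res_{G_U}(u,v)$. Third, recall that the Schur complement $L_U$ is itself the Laplacian of a (possibly weighted) graph on the vertex set $U$, namely $G_U$; therefore the collected values constitute the full table of pairwise effective resistances of the weighted graph $G_U$. Finally, apply the reconstruction algorithm of Lemma \ref{lem:full_reconstruciton} to $G_U$ to recover its edge weights, which is exactly what we wanted.

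There is really no obstacle beyond verifying that the hypotheses of Lemma \ref{lem:full_reconstruciton} are met. The only potential subtlety is the edge case where $L(\overline{U},\overline{U})$ is singular, which by the footnote on the definition of the Schur complement occurs exactly when there is no edge between $U$ and $\overline{U}$; in that situation $G_U$ is simply the subgraph induced on $U$, which is still a weighted graph and hence still falls under Lemma \ref{lem:full_reconstruciton}. So the argument is essentially a one-line composition of the two cited results, and no additional machinery is required.
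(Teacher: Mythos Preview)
Your proposal is correct and matches the paper's intended argument exactly: the corollary is stated without proof because it is the immediate composition of Lemma~\ref{lem:schur_comp_basics}(1) (ER in $G$ equals ER in $G_U$ for pairs in $U$) with Lemma~\ref{lem:full_reconstruciton} (full reconstruction from all pairwise ER distances). There is nothing to add.
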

}

We denote the \emph{unit ball} around a vertex $v \in V$ with respect to the effective resistance metric by $B_{\res}(v) := \{u\in V \stbar \res(u,v)\leq 1\}$.
\fullornot{
    It follows from \cref{lem:schur_comp_full_sonstruct} that we can find neighbors of a given vertex, provided that there are not too many vertices in its unit ball in the ER metric of the graph.
}{
    One can reconstruct the Schur complement of $G$ on $B_{\res}(v)$ with $O(|B_{\res}(v)|^2)$ queries.  Since the neighbors of $v$ are all in $B_{\res}(v)$, the set of neighbors of $v$ in $G$ and the set of neighbors of $v$ in this Schur complement are identical.  Hence, we have the following corollary.
}
\begin{corollary}
\label{cor:neighborhood_construct}
Let $G=(V,E)$ be an unweighted graph and $u\in V$.
Then, one can find all neighbors of $u$ with $O(n + |B_{\res}(u)|^2)$ ER queries.
\end{corollary}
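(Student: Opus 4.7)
The plan is to combine the neighborhood-preservation property of the Schur complement (\cref{lem:schur_comp_basics}(2)) with the bound $\res(u,v) \le 1$ for edges (\cref{lem:er_and_edge_cuts}(1)), using $B_{\res}(u)$ as the ``candidate neighborhood.''

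First, I would spend $n-1$ ER queries computing $\res(u,v)$ for every $v \in V \setminus \{u\}$. This lets me identify $B_{\res}(u) = \{v : \res(u,v) \le 1\}$. By \cref{lem:er_and_edge_cuts}(1), every neighbor of $u$ in $G$ lies in $B_{\res}(u)$; equivalently, $u$ has no edge to any vertex of $\overline{B_{\res}(u)}$.

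Next, I would query every pair of vertices inside $B_{\res}(u)$, spending $\binom{|B_{\res}(u)|}{2} = O(|B_{\res}(u)|^2)$ ER queries. Using the full reconstruction procedure for a graph from its pairwise ER distances (the algorithm behind \cref{lem:full_reconstruciton}, as indicated in the paragraph preceding the corollary), this determines the Schur complement graph $G_{B_{\res}(u)}$ completely.

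Finally, since $u$ has no neighbor outside $B_{\res}(u)$, the hypothesis of \cref{lem:schur_comp_basics}(2) is satisfied with $U = B_{\res}(u)$, so $w_G(u, u') = w_{G_{B_{\res}(u)}}(u, u')$ for every $u' \in B_{\res}(u)$. Reading off the nonzero entries in the row of $u$ in $G_{B_{\res}(u)}$ therefore recovers exactly the neighbors of $u$ in $G$. The total query count is $O(n) + O(|B_{\res}(u)|^2) = O(n + |B_{\res}(u)|^2)$. There is no real obstacle here: the corollary is an immediate assembly of the two preceding facts, and the only point that needs any care is checking that $u$'s neighborhood in the Schur complement is preserved, which is what \cref{lem:er_and_edge_cuts}(1) guarantees.
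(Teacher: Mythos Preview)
Your proposal is correct and follows essentially the same argument as the paper's proof: compute $B_{\res}(u)$ with $n-1$ queries, invoke \cref{lem:er_and_edge_cuts}(1) to ensure all neighbors of $u$ lie in $B_{\res}(u)$, reconstruct the Schur complement $G_{B_{\res}(u)}$ with $O(|B_{\res}(u)|^2)$ queries via \cref{lem:full_reconstruciton}, and read off $u$'s neighborhood using \cref{lem:schur_comp_basics}(2). The only cosmetic difference is that the paper packages the Schur-complement reconstruction step as \cref{lem:schur_comp_full_sonstruct}.
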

\full{
\begin{proof}
We start by computing $B_{\res}(u)$, which we can do by querying for the ER distance between $u$ and each of the other $n - 1$ vertices. 
All neighbors of $u$ in $G$ are at ER distance at most one by \cref{lem:er_and_edge_cuts}, and hence contained in $B_{\res}(u)$.  Therefore, by \cref{lem:schur_comp_basics}, $u$'s neighborhood is preserved in the Schur complement $G_{B_{\res}(u)}$.
So, we can find the neighbors of $u$ by constructing the Schur complement, $G_{B_{\res}(u)}$, which can be done with $O(|B_{\res}(u)|^2)$ ER queries by \cref{lem:schur_comp_full_sonstruct}.
\end{proof}

We will use \cref{cor:neighborhood_construct} in two different ways: to test edge biconnectivity for a general graph, and to adapt property testing algorithms for graphs whose unit balls contain a bounded number of vertices. 
}
\full{
\subsection{Testing edge biconnectivity}
Next, we show an algorithm to test edge biconnectivity of a graph. Our algorithm relies on \cref{cor:neighborhood_construct} to find neighbors of vertices, which facilitates breadth-first search %
in a hidden graph.

\begin{theorem}
\label{thm:edge_bi_conn_test}
Let $G=(V,E)$ be an undirected graph with $n$ vertices and let $0<\eps\leq 1$.
There exists an algorithm that makes $O(n/\varepsilon^2+1/\eps^4)$ ER queries and behaves as follows. It accepts with probability $1$ if $G$ is edge biconnected, and rejects with probability $\geq 2/3$ if $G$ is $\eps$-far from being edge biconnected, i.e., if one needs to add more than $\eps\cdot m$ edges to make $G$ edge biconnected.
\end{theorem}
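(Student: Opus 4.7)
The plan is to sample $s = \Theta(1/\eps)$ vertices and, for each, run a local BFS that enumerates the 2-edge-connected component containing the sample up to a size threshold $t = \Theta(1/\eps)$; the algorithm rejects iff some sample lies in a proper 2-edge-connected component of size at most $t$. First I spend $n-1$ ER queries from a fixed vertex to verify $G$ is connected, rejecting if not. By a classical theorem of Eswaran and Tarjan, augmenting a connected graph to 2-edge-connectivity requires exactly $\lceil L/2 \rceil$ edge additions, where $L$ is the number of leaves of the bridge tree of $G$; hence if $G$ is $\eps$-far then $L = \Omega(\eps n)$. An averaging argument then shows that at least $L/2$ leaf 2-edge-connected components have size $\leq t$, accounting for at least $\Omega(\eps n)$ vertices, so a uniform sample of size $s$ hits such a vertex with probability at least $2/3$.

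To implement the BFS from a sampled $v$, I use two ingredients: an edge $(u, u')$ is a bridge iff $\res(u, u') = 1$ (\cref{lem:er_and_edge_cuts}), and \cref{cor:neighborhood_construct} recovers $N(u)$ in $O(n + |B_{\res}(u)|^2)$ queries by first querying $\res(u, \cdot)$ against every vertex (yielding $B_{\res}(u)$ and, simultaneously, the bridge classification of every edge incident to $u$) and then reconstructing the Schur complement on $B_{\res}(u)$. At each step I pop a frontier vertex $u$, compute $N(u)$, and push each neighbor reached via a non-bridge edge. The BFS aborts if the explored set exceeds $t$ vertices or if some $|B_{\res}(u)| > t+1$; otherwise it terminates cleanly, and if $|\text{explored}| < n$ I reject.

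The core correctness claim is that for every $u$ inside a leaf 2-edge-connected component $C$ of size $\leq t$, $|B_{\res}(u)| \leq t + 1$. Let $(x, y)$ denote the unique bridge out of $C$ with $x \in C$, and pick $z \notin C$. Whenever $u \neq x$ the vertex $x$ is a cut vertex separating $u$ from $z$, and whenever $z \neq y$ the vertex $y$ is a cut vertex separating $u$ from $z$; by \Cref{lem:vertex-cut-iff-triangle-ineqaulity-tight} combined with $\res(x, y) = 1$, this gives $\res(u, z) = \res(u, x) + 1 + \res(y, z)$, which is at most $1$ only in the boundary case $u = x$ and $z = y$. So at most one outside vertex sits in $B_{\res}(u)$, while at most $|C| \leq t$ inside vertices do. This ensures the BFS from any $v$ in such a $C$ never aborts, enumerates $C$ exactly, and triggers rejection because $|C| < n$.

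Conversely, if $G$ is 2-edge-connected, every BFS either spans all of $V$ (when $n \leq t$, so $|\text{explored}| = n$ and we accept) or aborts on the size threshold, so there are no false rejects. Each BFS costs at most $t$ neighborhood computations at $O(n + t^2)$ queries each, i.e.\ $O(n/\eps + 1/\eps^3)$ per BFS, and across $s = O(1/\eps)$ samples this totals $O(n/\eps^2 + 1/\eps^4)$ ER queries, as required. The main obstacle is the coupled structural claim: the Eswaran-Tarjan lower bound plus averaging to secure many vertices in small \emph{leaf} 2-edge-connected components, paired with the tight bound $|B_{\res}(u)| \leq t+1$ inside such leaves. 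The leaf restriction is essential because a small non-leaf component may have arbitrarily many external vertices at ER distance exactly $1$ via its multiple bridges, which would blow up the neighborhood-recovery budget.
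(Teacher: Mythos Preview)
Your proof is correct and follows the same approach as the paper: sample $\Theta(1/\eps)$ vertices, run a size-bounded BFS that uses \cref{cor:neighborhood_construct} for neighbor recovery, and control the per-step cost by aborting whenever an ER unit ball exceeds $\Theta(1/\eps)$. The only differences are bookkeeping: you restrict to \emph{leaf} components of the bridge tree (yielding the tight bound $|B_{\res}(u)|\le t+1$) and invoke Eswaran--Tarjan directly to lower-bound the leaf count, whereas the paper works with degree-$\le 2$ components (admitting up to two external vertices at ER distance exactly $1$, hence its bound $4/\eps+2$) together with a tree-averaging argument; and your rejection criterion (clean BFS termination with fewer than $n$ vertices explored) is equivalent in effect to the paper's (explicit discovery of a cut edge during the BFS).
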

\begin{proof}
As in the proof of \cref{thm:vertex_bi_conn_test}, we first check that $G$ is connected using $n - 1$ ER queries, and if it is not we reject. 
So, we assume $G$ is connected in the rest of the proof, which in particular implies that $m \geq n-1$.

Our algorithm samples a set $S\subseteq V$ of $16/\eps$ vertices uniformly at random.
We call a biconnected component \emph{small} if its size is at most $4/\eps$.
We call a biconnected component \emph{low-degree} if it is incident to at most two cut edges.  A tree can be obtained by contracting all edge-biconnected components of the graph. Since the average degree of any tree is less than two, it follows that at least half of the biconnected components must have a low degree.

In high level, our algorithm checks if any of the sampled vertices in $S$ belong to a biconnected component that is both small and low-degree with a BFS-like algorithm.  It rejects if it finds such a biconnected component and accepts otherwise.
For each $s\in S$, our algorithm simulates a BFS starting at $s$, that can end in three different ways: (1) the BFS finds a cut edge, hence it decides that $G$ is not edge biconnected, (2) the BFS finds more than $4/\eps + 2$ vertices that are either in the same biconnected component of $s$, or are incident to cut edges of this biconnected component, hence it decides that the biconnected component of $s$ is not small or not low-degree.  

To simulate BFS, upon taking a vertex $u$ from the queue, we query all effective resistances from $u$, and compute the unit ER ball centered at $u$, $B_{\res}(u)$. 
Note, every vertex that is strictly inside this ball is in the same edge biconnected component of $u$.  Also, every vertex on the surface of this ball, is either in the same edge biconnected component, or it is adjacent to $u$ via a cut edge (\cref{lem:er_and_edge_cuts}).  Therefore, if $|B_{\res}(u)|>4/\eps + 2$, the edge biconnected component of $u$ (and also the sampled point that is the root of this BFS call) cannot be both small and low-degree.  In this case, we continue to the next sample point. 
Otherwise, if $|B_{\res}(u)|\leq 4/\eps + 2$, we compute the neighbors of $u$ with another $|B_{\res}(u)|^2 = O(1/\eps^2)$ queries, using the algorithm of \cref{cor:neighborhood_construct}, and check if any of the incident edges of $u$ is a cut edge, by checking if its effective resistance is exactly one, using \cref{lem:er_and_edge_cuts}. 
If we find a cut edge then we reject. Otherwise, we continue the BFS until we visit at least $4/\eps$ vertices. In that case, we continue to the next sampled point, as the biconnected component of the current sampled point is larger than $4/\eps$, and hence not small.

For each vertex $v$ that the BFS visits, it spends $O(n)$ queries to compute $B_{\res}(v)$, and $O(1/\eps^2)$ queries to compute the neighbors of $v$.  For each sampled vertex in $S$, BFS visits $O(1/\eps)$ vertices, the size limit for low-degree small biconnected components.  Therefore, the total number of queries is
\(
|S|\cdot O(1/\varepsilon) \cdot O(n+1/\eps^2) = O(n/\varepsilon^2+1/\eps^4).
\)

If $G$ is edge biconnected, our algorithm accepts with probability one, as it will not find a cut edge.

If $G$ is $\eps$-far from edge biconnectivity, then it contains at least 
\(
\eps m + 1 \geq \eps(n-1)
\) edge biconnected components.  At least half of them, i.e.~$\eps(n-1)/2$ of them, are low-degree.
On the other hand, $G$ (which has $n$ vertices) can contain at most $\eps n/4$ edge biconnected components of size at least $4/\eps$. Thus, assuming $\eps \leq 1$ and $n\geq 8$, there are at least %
\(
\eps(n-1)/2 - \eps n/4 \geq \eps n/8
\)
small edge biconnected components, i.e., of size at most $4/\eps$, that are incident to at most two cut edges.
In particular, these low-degree small biconnected components have at least $\eps n/8$ vertices in total.  Hence, our sample $S$ (of size $16/\eps$) contains at least one vertex from a low-degree small biconnected component with probability at least 
\(
1-(1-\eps/8)^{-16/\eps} \geq 1-e^{-2}\geq 2/3.
\)
Therefore, our algorithm rejects with at least this probability.
\end{proof}
}
\subsection{The ER density and property testing with ER queries} \cref{cor:neighborhood_construct} implies the following meta theorem, which allows us to adapt any property testing algorithm within the
bounded-degree model to a property testing algorithm with ER queries.  
The query complexity of our algorithm depends on the ER density $\rho_{\res}(G)$ of the graph $G$, which we define as the maximum size of any ER unit ball, i.e., $\rho_{\res}(G) := \max_{v\in V}{|B_{\res}(v)|}$.

\begin{theorem}
\label{thm:adj_list_pt_to_er_pt}
    Let $G=(V,E)$ be a graph with ER density $\rho = \rho_{\res}(G)$.  Let $A$ be any property testing algorithm for problem $P$ in the bounded-degree model with $f(\varepsilon, n)$ running time.  Then, there is a property testing algorithm for problem $P$ that uses $O(f(\varepsilon, n)\cdot (n+\rho^2))$ effective resistance queries.
\end{theorem}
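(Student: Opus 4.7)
The plan is to simulate any given bounded-degree property tester $A$ using ER queries, replacing each of $A$'s local adjacency queries by an invocation of \cref{cor:neighborhood_construct}. Recall that in the bounded-degree model, $A$ accesses the hidden graph $G$ only through neighbor queries of the form ``return the $i$-th entry of $N(v)$'' (or equivalently ``return all of $N(v)$''). By \cref{cor:neighborhood_construct}, the entire neighborhood $N(v)$ of any vertex $v$ can be recovered using $O(n + |B_{\res}(v)|^2) \leq O(n + \rho^2)$ ER queries.

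The simulating algorithm would run $A$ step by step, maintaining a cache of previously computed neighborhoods. Whenever $A$ queries the adjacency information of a vertex $v$ for the first time, we compute $N(v)$ via \cref{cor:neighborhood_construct} and store it; subsequent queries involving $v$ are answered from the cache without additional ER queries. Since $A$ runs in time $f(\eps, n)$, it touches at most $f(\eps, n)$ distinct vertices, so at most $f(\eps, n)$ neighborhoods are ever computed. The total ER query cost is therefore $O(f(\eps, n) \cdot (n + \rho^2))$, matching the claimed bound, and the simulated tester's accept/reject decision is identical to that of $A$ because the simulated oracle returns exactly the same answers as the true bounded-degree oracle on $G$.

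The one subtlety I anticipate is reconciling the two notions of being $\eps$-far: the bounded-degree model measures distance as $\eps d n$ edge modifications, whereas our model uses $\eps m$. Since for bounded-degree graphs these quantities agree up to a constant factor (and since small ER density forces a small maximum degree via $\deg(v) + 1 \leq |B_{\res}(v)| \leq \rho$ by \cref{lem:er_and_edge_cuts}), this discrepancy can be absorbed into the constant hidden in the big-$O$ by an appropriate rescaling of $\eps$. Beyond this bookkeeping, I do not expect any genuine obstacle; the substance of the theorem is that \cref{cor:neighborhood_construct} already serves as a black-box reduction between the two query models.
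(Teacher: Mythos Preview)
Your proposal is correct and matches the paper's approach: the paper does not even spell out a proof, merely stating that \cref{cor:neighborhood_construct} ``implies'' the theorem, and your simulation argument is exactly the intended elaboration. Your additional remarks on caching and on reconciling the two $\eps$-far notions (via $\deg(v)+1\le |B_{\res}(v)|\le\rho$) are sound and go slightly beyond what the paper makes explicit.
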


\begin{remark}
We remark that the linear dependency on $n$ in 
\full{\cref{thm:edge_bi_conn_test} and} \cref{thm:adj_list_pt_to_er_pt} can be eliminated if we have an oracle that returns the vertices in the order of their distance to a vertex $v$, allowing us to stop querying at any point.   Then, to discover the vertices in a unit ball around $v$, $B_{\res}(v)$, we only need to spend $|B_{\res}(v)|$ queries as opposed to $n-1$.
\end{remark}

Property testing in the bounded-degree model is a well-studied subject with many fundamental results.  To name a few, there are $f(1/\eps)$ time testers to decide if a graph (1) contains a fixed subgraph, (2) is regular, (3) is $k$-connected, or (4) belongs to a minor closed family (e.g.,~planar graphs). (See~\citet{books/cu/Goldreich17} for a more extensive list of results.)

\Cref{thm:adj_list_pt_to_er_pt} can be used to adapt all these results to algorithms with ER queries. However, this comes with an additional cost of a factor of $(n + \rho^2)$. There exist graphs for which $\rho^2$ is large, making the theorem less useful—for example, in the case of expander graphs.

However, we show that $\rho$ is bounded for bounded-degree bounded treewidth graphs. As a result, all the aforementioned results from the bounded-degree model can be adapted to bounded treewidth graphs with an additional factor of $n$ in the number of queries. In most cases, this leads to a linear number of ER queries, as opposed to the quadratic number required when reconstructing the entire graph.
\paragraph{ER density of bounded treewidth graphs.}
We use the following two fundamental results of \citet{journals/dmtcs/Bodlaender99} on domino tree decompostion and \citet{journals/mpcps/Nash-Williams59} on a lower bound for ER distance of two vertices towards showing that bounded-degree bounded treewidth graphs have boudned ER density.
\begin{lemma}[{\citet[Theorem 3.1]{journals/dmtcs/Bodlaender99}}]
\label{lem:good_tree_decomp}
    Let $G=(V,E)$ be a graph with maximim degree $d_G$ and treewidth $w_G$.  Then, $G$ has a tree decomposition of width $O(w_G d_G^2)$, in which each vertex is in at most two bags.%
    \footnote{Such a decomposition is called a \emph{domino decomposition}, and the smallest width of a domino decomposition is called \emph{domino treewidth}.} Moreover, the degree of each bag in the tree decomposition is $O(w_G d_G^2)$.
\end{lemma}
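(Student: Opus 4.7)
The lemma is cited directly from \citet{journals/dmtcs/Bodlaender99}, so a formal proof would just refer to the original source; I sketch the approach I would take to reconstruct it. The goal is to transform an arbitrary tree decomposition of $G$ of width $w_G$ into a new decomposition satisfying the \emph{domino} property (each vertex in at most two bags), while using the bounded-degree hypothesis to control the width and bag-degree blow-up.

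First, I would recall the standard fact that for each vertex $v \in V$, the set of bags containing $v$ forms a subtree $T_v$ of the decomposition tree, and the domino property is exactly the requirement $|T_v| \le 2$ for all $v$. Starting from a width-$w_G$ tree decomposition, I would observe that $v$ only needs to appear in enough bags to cover its incident edges, and since $\deg(v) \le d_G$, a constant (in terms of $d_G$) number of ``essential'' bags in $T_v$ suffices.

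Second, I would carry out a bottom-up merging procedure on the decomposition tree: consecutive bags along a path are combined into a single super-bag whenever doing so collapses multiple occurrences of some vertex down to at most two. To bound the super-bag size, note that when we absorb a neighboring bag we may have to pull in the neighborhoods of boundary vertices to keep the decomposition valid; each such vertex contributes at most $d_G$ neighbors, and repeating this at most $d_G$ times in the worst case yields the claimed $O(w_G d_G^2)$ width. An essentially identical counting bounds the degree of a super-bag in the new tree: a super-bag $B$ can only be adjacent to super-bags sharing a vertex with $B$, and by the bounded-degree assumption the number of such neighboring super-bags is $O(w_G d_G^2)$.

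The main obstacle I expect is preserving the validity of the decomposition throughout the merging: in particular, the subtree $T_v$ must remain connected for every $v$, so a naive greedy merge can create ``stranded'' occurrences of a vertex that reintroduce domino violations elsewhere. Bodlaender's construction handles this with a careful induction that coordinates merges across the whole tree at once, rather than greedily; I would follow this inductive scheme in detail for a complete proof, and the width and degree bounds follow from the same double-$d_G$ accounting sketched above.
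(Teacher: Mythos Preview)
Your instinct is right: the paper does not prove this lemma either. It simply cites Bodlaender for the domino-decomposition statement (width $O(w_G d_G^2)$, each vertex in at most two bags) and adds only a short remark justifying the ``Moreover'' clause about bag degree, which is not explicit in Bodlaender's paper.

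The one place your write-up diverges from the paper is in that bag-degree argument. The paper's observation is cleaner and does not need the bounded-degree hypothesis on $G$ at all: in any tree decomposition of a connected graph, every neighbor of a bag $B$ shares at least one vertex with $B$; since in a domino decomposition each vertex lies in at most two bags, each vertex of $B$ can account for at most one neighboring bag, so the degree of $B$ in $T$ is at most $|B| = O(w_G d_G^2)$. Your version reaches the same bound but attributes it to ``the bounded-degree assumption,'' which is misleading---the degree bound on $G$ enters only through the width bound $|B| = O(w_G d_G^2)$, while the counting itself hinges on the domino property. I would tighten that sentence to make the role of the domino property explicit.
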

    
The last part of the statement is not explicit in \citet{journals/dmtcs/Bodlaender99}.  However, it is easy to conclude by observing that each neighbor of a bag $B$ has at least one vertex in common with $B$, assuming the underlying graph is connected.  Therefore, since each vertex appears in at most two bags, the number of neighbors of a bag cannot be larger than its size.
\begin{lemma}[\citet{journals/mpcps/Nash-Williams59}]
\label{lem:nash_williams}
    Let $G=(V, E)$ be a graph and let $s, t\in V$.
    Let $S_1, S_2, \ldots, S_k$ be such that $s\in S_i$ and $t\not\in S_i$ for all $i\in[k]$, and such that the edge sets $E(S_i, V\backslash S_i)$ are pairwise disjoint. Then,\full{
    \[
    \res(s,t) \geq \sum_{i=1}^k{\frac{1}{|E(S_i, V\backslash S_i)|}} \ \text{.}
    \]}{ \(
    \res(s,t) \geq \sum_{i=1}^k|E(S_i, V\backslash S_i)|^{-1}\).}
\end{lemma}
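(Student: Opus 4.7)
The plan is to apply Thompson's principle (\cref{lem:thompson_dirichlet}) and reduce the inequality on $\res(s,t)$ to a statement about the energy of unit $st$-flows, then lower-bound the energy contribution from each cut $E(S_i, V\setminus S_i)$ separately, exploiting the disjointness hypothesis to add these contributions without double counting.

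First, I would fix an arbitrary unit $st$-flow $f\in\mathcal{F}_{st}$. Since $s\in S_i$ and $t\notin S_i$, conservation of flow (i.e., $\partial f = \vec{1}_s - \vec{1}_t$) implies that the net flow across the cut from $S_i$ to $V\setminus S_i$ equals $1$. In particular, $\sum_{e\in E(S_i, V\setminus S_i)} |f(e)| \geq 1$. Next, by the Cauchy–Schwarz inequality applied to the vectors $(|f(e)|)_{e}$ and $(1)_e$ indexed by $e\in E(S_i, V\setminus S_i)$,
\[
\sum_{e\in E(S_i,V\setminus S_i)} f(e)^2 \;\geq\; \frac{\left(\sum_{e\in E(S_i,V\setminus S_i)} |f(e)|\right)^2}{|E(S_i,V\setminus S_i)|} \;\geq\; \frac{1}{|E(S_i,V\setminus S_i)|}.
\]

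Now I use the disjointness assumption: since the sets $E(S_i, V\setminus S_i)$ are pairwise disjoint, summing the per-cut lower bounds stays below the total energy:
\[
\mathcal{E}(f) = \sum_{e\in E} f(e)^2 \;\geq\; \sum_{i=1}^{k}\sum_{e\in E(S_i,V\setminus S_i)} f(e)^2 \;\geq\; \sum_{i=1}^{k} \frac{1}{|E(S_i,V\setminus S_i)|}.
\]
Taking the infimum over $f\in\mathcal{F}_{st}$ and invoking Thompson's principle yields $\res(s,t)\geq \sum_{i=1}^k 1/|E(S_i,V\setminus S_i)|$, as desired.

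There is no real obstacle here: the argument is short and classical. The only subtle point is making sure the flow-conservation deduction for the cut is correct (so that the $L^1$ lower bound $\sum |f(e)| \geq 1$ across $E(S_i,V\setminus S_i)$ is valid for arbitrary unit $st$-flows, not just acyclic ones), and then recognizing that Cauchy–Schwarz is precisely the tool that converts this $L^1$ bound into the $L^2$/energy bound matching the form of the Dirichlet energy. The disjointness of the cuts, stated in the hypothesis, is exactly what allows the per-cut inequalities to be summed without overcounting any edge's contribution to $\mathcal{E}(f)$.
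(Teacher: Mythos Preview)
Your argument is correct and is the classical proof of Nash--Williams' inequality. The paper does not actually supply a proof of this lemma; it is stated with a citation to \citet{journals/mpcps/Nash-Williams59} and used as a black box in the proof of \cref{lem:er_vs_tree_decomp_distance}. So there is nothing to compare against, and your Thompson-principle-plus-Cauchy--Schwarz derivation is exactly the standard route.
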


\begin{lemma}
\label{lem:er_vs_tree_decomp_distance}
    Let $G=(V,E)$ be a graph with maximum degree $d_G$, and let $({\cal B}, T)$ be a tree decomposition of $G$ with width $w_T$ and maximum degree $d_T$.
    Also, suppose that each vertex of $G$ appears in at most $b_T$ bags of $T$.
    Let $s,t\in V$, let $B_s, B_t\in {\cal B}$ be any two bags that contain $s$ and $t$, respectively, and let $r_T(s,t)$ be the distance of these bags in $T$. 
    \[
    r_T(s,t) \leq {4b_T\cdot d_G\cdot w_T}\cdot \res(s,t)  \ \text{.}
    \]
\end{lemma}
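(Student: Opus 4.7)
My plan is to apply Nash-Williams' inequality (\cref{lem:nash_williams}) to vertex sets obtained from the path in the tree decomposition. I would start by fixing the unique $B_s$-to-$B_t$ path $B_s=B_0,B_1,\ldots,B_k=B_t$ in $T$, so $k=r_T(s,t)$, and for each $i\in\{1,\ldots,k\}$ letting $A_i:=B_{i-1}\cap B_i$. This set has size at most $w_T+1$ and separates the ``left-only'' vertices $V_i^L$ (appearing only in bags on the $B_0$-side of the tree edge $(B_{i-1},B_i)$) from the ``right-only'' vertices $V_i^R$. I would use $S_i:=V_i^L\cup A_i$ as the $i$-th Nash-Williams set.

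The next step is to verify the preconditions of \cref{lem:nash_williams}. Writing $p_x$ (resp.\ $q_x$) for the smallest (resp.\ largest) index $j$ with $x\in B_j$, the fact that the bags containing $x$ form a subtree with at most $b_T$ nodes gives $q_x-p_x\leq b_T-1$. A direct check shows $x\in S_i$ iff $p_x<i$, so $s\in S_i$ for every $i\geq 1$ (since $p_s=0$) and $t\notin S_i$ for every $i\leq p_t$, where $p_t\geq k-b_T+1$. This produces at least $k-b_T+1$ valid indices.

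I would then characterize the cut edges. For an edge $(u,v)\in E$ with $p_u\leq p_v$, the fact that $u$ and $v$ must share a bag, together with the connectedness of their bag subtrees, forces $q_u\geq p_v$. A short case analysis then shows $(u,v)\in E(S_i,V\setminus S_i)$ precisely when $p_u<i\leq p_v$, and in this case the endpoint lying in $S_i$ must belong to $A_i$. This immediately gives the per-cut size bound $|E(S_i,V\setminus S_i)|\leq |A_i|\cdot d_G\leq (w_T+1)d_G$ along with the observation that each edge of $G$ appears in at most $p_v-p_u\leq q_u-p_u\leq b_T-1$ of the cuts.

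The main obstacle is precisely this last observation: the cuts are not a priori pairwise disjoint, which is where the $b_T$ factor in the theorem must come from. I would resolve this by keeping only every $b_T$-th valid index, i.e., $i\in\{b_T,2b_T,\ldots\}$, producing $m=\Omega(k/b_T)$ pairwise-disjoint cuts, each of size at most $(w_T+1)d_G$. Nash-Williams then yields
\[
\res(s,t)\;\geq\;\frac{m}{(w_T+1)d_G}\;\geq\;\frac{k}{2b_T(w_T+1)d_G}
\]
once $k$ exceeds a small constant multiple of $b_T$; rearranging and using $(w_T+1)\leq 2w_T$ for $w_T\geq 1$ gives the desired $r_T(s,t)\leq 4 b_T d_G w_T\res(s,t)$. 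For the remaining small-$k$ regime the bound follows from the trivial inequality $\res(s,t)\geq 1/d_G$ obtained by isolating $s$ in a single-vertex cut, which already dominates $k$ up to the constants in the theorem.
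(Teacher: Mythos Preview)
Your proposal is correct and follows essentially the same approach as the paper: both arguments extract $\Omega(r_T(s,t)/b_T)$ pairwise edge-disjoint $s$--$t$ cuts of size $O(d_G w_T)$ from the $B_s$--$B_t$ path in $T$ and then apply \cref{lem:nash_williams}. The only cosmetic differences are that the paper uses whole bags $B'_i$ (subsampled so the bags are vertex-disjoint) as the separating sets while you use the adhesions $A_i=B_{i-1}\cap B_i$, and that you make the small-$k$ case explicit via the single-vertex cut $\{s\}$, which the paper leaves implicit.
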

\begin{proof}
    If $r_T(s,t) < 4b_T$, the statement of the lemma holds. Thus, assume $r_T(s,t) \geq 4b_T$.
    
    Let $p = r_T(s,t)$, and let $B_s = B_1, \ldots, B_p = B_t$ be the unique $B_s, B_t$ path in $T$.
     We know that each vertex of $G$ is in at most $b_T$ bags.  Moreover, we know that these bags form a connected induced tree in the tree decomposition.
     Therefore, for each 
     $i\in [p-b_T]$, $B_i$ and $B_{i+b_T+1}$ are disjoint.  
     We conclude that there is a subsequence $B'_1,\ldots, B'_q$ of $B_1, \ldots, B_p$ such that (1) $q \geq \lfloor p/b_T\rfloor  - 2$, (2) $s,t\not\in B'_i$ for $i\in [q]$, and (3) for each $1\leq i< j\leq q$, $B'_i$ and $B'_j$ are disjoint (we need $-2$ in condition (1) because of condition (2)).  Note, $q\geq 1$ because $p\geq 4b_T$. 
     
     Note, for each $i\in [q]$, $B'_i$ is a vertex cut that separates $s$ and $t$ in $G$. Let $S^i_1, \ldots, S^i_{r_i}$ be the connected components of %
     $V \setminus B_i'$ with $s\in S^i_1$ and $t\in S^i_{r_i}$.  Then, let
     $S'_i := S^i_1\cup \cdots \cup S^i_{r_i -1}\cup B'_i$, and observe that $s\in S'_i$ and $t\not\in S'_i$.

     Moreover, for each $i\in [q]$, each edge in %
     $E(S'_i, V\backslash S'_i)$%
     has one endpoint in $B'_i$ (note, in particular, $V \setminus S_i' = S_{r_i}^i$).  Since the bags $B'_i$ are disjoint, we conclude that the edge sets $E(S'_i, V\backslash S'_i)$ for $i\in [q]$ are disjoint.  We also conclude that for every $i \in [q]$,
     \[
        |E(S'_i, V\backslash S'_i)| \leq d_G\cdot |B'_i| \leq d_G\cdot w_T.
    \]
     It follows by \cref{lem:nash_williams} that %
     \begin{align*}
     \res(s,t) \geq& \frac{q}{d_G\cdot w_T} \geq \frac{\lfloor p/b_T\rfloor  - 2}{d_G\cdot w_T}
     \geq \frac{p/b_T  - 3}{d_G\cdot w_T} \\
     \geq& \frac{p/b_T  - (3/4)\cdot (p/b_T)}{d_G\cdot w_T} 
      = \frac{(1/4)\cdot p}{b_T\cdot d_G\cdot w_T} =  \frac{(1/4)\cdot r_T(s, t)}{b_T\cdot d_G\cdot w_T},
     \end{align*}
     where the last inequality holds as $p\geq 4b_T \Rightarrow 3 \leq (3/4)\cdot(p/b_T)$.
\end{proof}
\begin{theorem}
\label{thm:er_density_tw}
    Let $G=(V,E)$ be a graph with maximum degree $d_G$ and treewidth $w_G$.
    Then the ER density of $G$ is $(w_Gd_G)^{O(w_G d_G^3)}$.  In particular, a graph with constant maximum degree and constant treewidth has constant ER density.
\end{theorem}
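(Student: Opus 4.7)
\textbf{Proof proposal for \cref{thm:er_density_tw}.}
The plan is to combine the two lemmas immediately preceding the statement. First, I would invoke \cref{lem:good_tree_decomp} to fix a tree decomposition $({\cal B}, T)$ of $G$ that is ``good'' in three simultaneous senses: its width is $w_T = O(w_G d_G^2)$, each vertex of $G$ appears in at most $b_T = 2$ bags, and each bag of $T$ has degree $d_T = O(w_G d_G^2)$ in $T$. These are exactly the parameters produced by Bodlaender's domino decomposition together with the remark in the excerpt about bag degrees.

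Next, fix any $v \in V$ and choose a bag $B_v \in {\cal B}$ containing $v$. The goal is to bound $|B_{\res}(v)|$, so let $u \in B_{\res}(v)$, i.e., $\res(v,u) \leq 1$, and pick any bag $B_u$ containing $u$. Applying \cref{lem:er_vs_tree_decomp_distance} with the parameters above gives
\[
r_T(v, u) \;\leq\; 4 b_T \cdot d_G \cdot w_T \cdot \res(v,u) \;\leq\; 4 \cdot 2 \cdot d_G \cdot O(w_G d_G^2) \;=\; O(w_G d_G^3).
\]
So every vertex $u$ at effective resistance at most $1$ from $v$ has \emph{some} bag within tree-distance $D := O(w_G d_G^3)$ of $B_v$ in $T$.

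The remaining step is to count bags and then vertices. Since $T$ has maximum degree $d_T = O(w_G d_G^2)$, the number of bags within tree-distance $D$ of $B_v$ is at most
\[
1 + d_T + d_T^2 + \cdots + d_T^D \;\leq\; d_T^{D+1} \;=\; \bigl(O(w_G d_G^2)\bigr)^{O(w_G d_G^3)} \;=\; (w_G d_G)^{O(w_G d_G^3)}.
\]
Each such bag contains at most $w_T + 1 = O(w_G d_G^2)$ vertices of $G$, so multiplying by this bag-size bound absorbs into the same $(w_G d_G)^{O(w_G d_G^3)}$ expression. This gives the claimed bound on $|B_{\res}(v)|$, and since $v$ was arbitrary, on $\rho_{\res}(G)$. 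The ``in particular'' consequence for constant $d_G, w_G$ is immediate.

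I expect the proof to be essentially a bookkeeping exercise once the good decomposition is in hand; the main technical content is really already packaged in \cref{lem:er_vs_tree_decomp_distance,lem:good_tree_decomp}. The only subtle point to check is that the tree decomposition from \cref{lem:good_tree_decomp} really does satisfy both a degree bound on $T$ and the domino property ($b_T = 2$) simultaneously, which the excerpt addresses in its post-lemma remark. A minor wrinkle is that when $G$ is disconnected one should apply the argument to $v$'s connected component, but this does not affect the bound.
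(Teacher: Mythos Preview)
Your proposal is correct and follows essentially the same route as the paper: invoke \cref{lem:good_tree_decomp} to obtain a domino decomposition with $w_T,d_T=O(w_Gd_G^2)$ and $b_T=2$, apply \cref{lem:er_vs_tree_decomp_distance} to bound the tree-distance from $B_v$ to any bag containing a vertex of $B_{\res}(v)$ by $O(w_Gd_G^3)$, and then count bags (via the geometric series, implicitly assuming $d_T\ge 2$) times bag size. The paper's write-up is the same argument with slightly different bookkeeping.
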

\begin{proof}
    By \cref{lem:good_tree_decomp} there exists a tree decomposition $T$ of $G$ with maximum degree $d_T=O(w_Gd_G^2)$ and width $w_T=O(w_Gd_G^2)$ in which each vertex appears in at most $b_T = 2$ bags.

    Let $s\in V$, and let $B_s$ b ethe bag in $T$ that contains $s$.  By \cref{lem:er_vs_tree_decomp_distance}, vertices with ER distance at most one from $s$ (i.e., vertices in $B_{\res}(s)$), are in bags of distance at most $\alpha := \max(4b_T\cdot d_G\cdot w_T, 4 b_T) = 4b_T\cdot d_G\cdot w_T$ from $B_s$ in $T$.

    There are at most %
    $
    \sum_{i=0}^{\alpha} d_T^i = {(d_T^{\alpha+1}-1)}/{(d_T - 1)} \leq d_T^{\alpha+1}
    $ 
    such bags, assuming $d_T \geq 2$, that collectively contain at most
    \[
    w_T\cdot d_T^\alpha = 
    w_T\cdot d_T^{4b_T\cdot d_G\cdot w_T + 1} = 
    w_Gd_G^2\cdot (w_Gd_G^2)^{O(d^3_G\cdot w_G)} = 
    (w_Gd_G)^{O(d^3_G\cdot w_G)}
    \]
    vertices.
\end{proof}
Therefore, the ER density of a graph $G$ is a constant if it has constant maximum degree and constant treewidth.  In that case, \cref{thm:adj_list_pt_to_er_pt} implies a property testing algorithm with a linear number of ER queries, if a constant-time property testing algorithm is known in the bounded-degree model.

\section{Shortest Path versus Effective Resistance Queries}
\label{sec:sp-versus-er}

In this section, we motivate the study of effective resistance queries by showing that they are incomparable in power to shortest path queries. Specifically, we show that sometimes it is (provably) possible to use fewer effective resistance queries to solve a given task, and sometimes it is possible to use fewer shortest path queries.

Our first result asserts that it is more efficient to check whether a graph is a clique using effective resistance queries than shortest path queries.
\begin{theorem} \label{thm:er-sp-clique}
There is an algorithm for testing whether a graph $G$ with $n$ vertices is a clique (i.e., whether $G = K_n$) using $O(n)$ effective resistance queries, but any such algorithm requires $\Omega(n^2)$ shortest path queries. 
\end{theorem}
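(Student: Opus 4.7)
The plan is to prove both parts of the theorem: an $O(n)$ ER-query upper bound via a direct ``fingerprint'' of $K_n$, and an $\Omega(n^2)$ SP-query lower bound via a standard adversary argument.

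\textbf{Upper bound.} The key observation is that in $K_n$, by vertex-transitivity every pair has the same effective resistance, and this common value is $2/n$. One may verify this directly: the Laplacian of $K_n$ is $nI - J$, whose pseudoinverse is $\tfrac{1}{n}(I - \tfrac{1}{n}J)$, and substituting into the definition of $\res$ yields $\res_{K_n}(u,v) = 2/n$. The algorithm then fixes an arbitrary vertex $v$, queries $\res(v,u)$ for each of the $n-1$ other vertices $u$, and accepts iff every queried value equals $2/n$. If $G = K_n$ the algorithm clearly accepts. For the converse, split into two cases: if $G$ is disconnected, then $\res(v,u) = \infty$ for at least one $u$, and we reject; if $G$ is connected but $G \ne K_n$, then $G$ is a strict (unweighted) subgraph of $K_n$, so by \cref{lem:connected-strict-monotonicity} applied to the pair $(G, K_n)$ there exists some $t$ with $\res_G(v,t) > \res_{K_n}(v,t) = 2/n$, and we reject. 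Total: $n-1 = O(n)$ queries.

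\textbf{Lower bound.} We use an adversary argument. The adversary's strategy is to answer $1$ to every SP query. Observe that for $n \ge 3$ and any pair $(a,b)$, the graph $K_n \setminus \{(a,b)\}$ has $d(a,b) = 2$ (through any third vertex) and $d(u,v) = 1$ for every other pair. Thus $K_n$ and $K_n \setminus \{(a,b)\}$ agree on all SP distances \emph{except} the pair $(a,b)$ itself. Now suppose a deterministic algorithm terminates after fewer than $\binom{n}{2}$ queries; then there exists some unqueried pair $(a,b)$, and both $K_n$ and $K_n \setminus \{(a,b)\}$ are consistent with every answer so far. Whichever way the algorithm decides, the adversary can commit to the other graph and force a wrong answer. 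Therefore any algorithm needs at least $\binom{n}{2} = \Omega(n^2)$ SP queries. (The same bound extends to randomized algorithms by Yao's principle, using the uniform distribution over $\{K_n\} \cup \{K_n \setminus \{(a,b)\} : a \ne b\}$.)

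The only nontrivial step is the correctness of the upper bound. The identity $\res_{K_n}(u,v) = 2/n$ is a short calculation (or a consequence of Foster's theorem combined with symmetry, since $\binom{n}{2}$ edges each carrying the same resistance must sum to $n-1$). The main obstacle is ruling out ``false positive'' graphs $G \ne K_n$ that happen to reproduce $\res(v,u) = 2/n$ from a single source $v$; this is exactly the content of the strict-monotonicity lemma \cref{lem:connected-strict-monotonicity}, which guarantees that removing \emph{any} edge from $K_n$ perturbs the resistance from $v$ to at least one vertex, even when the removed edge is not incident to $v$. Once this is in hand, both directions of the proof are immediate.
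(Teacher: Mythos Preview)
Your proof is correct and takes essentially the same approach as the paper: the lower bound adversary argument is identical, and the upper bound is exactly the algorithm underlying \cref{thm:proper-subgraph-verification} (which the paper invokes as a black box) specialized to $H=K_n$, with the same appeal to \cref{lem:connected-strict-monotonicity} for correctness. The only cosmetic difference is that you compute the explicit value $\res_{K_n}(u,v)=2/n$ rather than leaving it as ``compute $\res_{K_n}(s,v)$.''
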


\begin{proof}
As shown in \cref{thm:proper-subgraph-verification}, it is possible to check using $O(n)$ effective resistance queries whether two graphs $G$ and $H$ are equal, assuming that $G$ is a subgraph of $H$. The algorithm for checking that $G$ is a clique follows from the special case of this result when $H = K_n$, as every graph is a subgraph of $K_n$.

The lower bound for shortest path queries works as follows. Suppose that $G$ is either equal to $K_n$ or equal to $K_n \setminus \{e\}$ for a single unknown edge $e$. Any shortest path query on $u, v$ will return the same value (i.e., $1$) in either of these two cases, except when $u$ and $v$ are the endpoints of $e$. It follows that any algorithm requires $\binom{n}{2}$ shortest path queries.
\end{proof}

Our second result asserts that it is more efficient to check whether two given vertices in $G$ are adjacent using shortest path queries than effective resistance queries.
\full{
\begin{figure}
    \begin{center}
    \setlength\tabcolsep{2cm}.
    \begin{tabular}{cc}
    \includegraphics[scale=0.55]{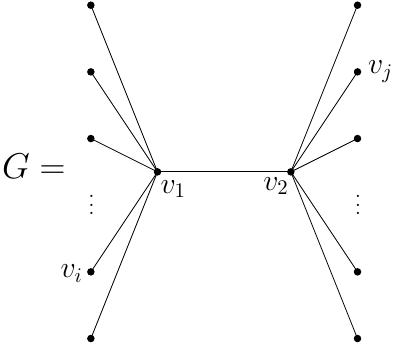} &
    \includegraphics[scale=0.55]{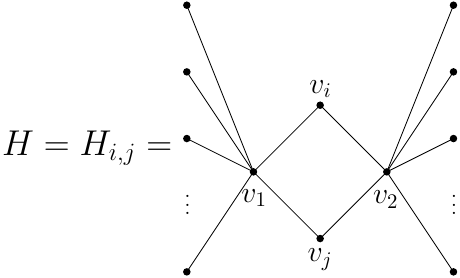}
    \end{tabular}
    \end{center}
    \caption{\small A pair of unweighted graphs $G$ and $H_{i,j}$ on $n$ vertices that require $\Omega(n)$ effective resistance queries to distinguish; see \cref{thm:er-sp-adjacency}.  The effective resistance between $v_1, v_2$ is the same in both graphs (i.e., $R_G(v_1, v_2) = R_H(v_1, v_2) = 1$) despite $v_1, v_2$ being adjacent in $G$ but not in $H_{i,j}$.}
    \label{fig:effect-res-inefficient}
\end{figure}
}
\begin{theorem} \label{thm:er-sp-adjacency}
There is an algorithm for testing whether two given vertices $u, v \in V$ in a simple graph are adjacent using $1$ shortest path query, but any algorithm for this problem requires $\Omega(n)$ effective resistance queries.
\end{theorem}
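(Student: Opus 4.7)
The plan for the upper bound is immediate: in a simple unweighted graph, $u\sim v$ iff the shortest-path distance $d(u,v)$ equals $1$, so one SP query suffices. For the ER lower bound I will exhibit a reference graph $G$ with $v_1\sim v_2$ together with a family $\{H_{i,j}\}$ of $\binom{n-2}{2}$ graphs in which $v_1\not\sim v_2$, such that any fixed ER query returns the same value on $G$ and on all but $O(n)$ members of the family. A standard adversary argument---answering every query as if the hidden graph is $G$---then shows that $k$ queries rule out at most $O(kn)$ of the $\Theta(n^2)$ candidates, so for $k=o(n)$ some $H_{i,j}$ remains consistent with all answers, contradicting that the algorithm correctly decides adjacency of $v_1,v_2$ on both $G$ and $H_{i,j}$.

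Concretely, take $G$ to be the star with center $v_2$ and leaves $v_1,v_3,\dots,v_n$; then $(v_1,v_2)$ is a cut edge, so $R_G(v_1,v_2)=1$ by \cref{lem:er_and_edge_cuts}. For each unordered pair $\{i,j\}\subseteq\{3,\dots,n\}$, let $H_{i,j}$ be obtained from $G$ by deleting $(v_1,v_2)$ and adding the edges $(v_1,v_i)$ and $(v_1,v_j)$, so that $v_1,v_i,v_2,v_j$ form a $4$-cycle while every other vertex is still a pendant of $v_2$. Putting the two length-$2$ paths $v_1v_iv_2$ and $v_1v_jv_2$ in parallel yields $R_{H_{i,j}}(v_1,v_2)=1$, so the ``syntactically obvious'' query $(v_1,v_2)$ carries no information.

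To handle the other queries, I rely on the observation that $v_2$ is a cut vertex in every $H_{i,j}$: deleting $v_2$ leaves $\{v_1,v_i,v_j\}$ connected through $v_1$ and isolates every other leaf. Hence, by \cref{lem:vertex-cut-iff-triangle-ineqaulity-tight}, for any $u,w\in V$ with $\{u,w\}\cap\{v_i,v_j\}=\emptyset$ the ER $R_{H_{i,j}}(u,w)$ decomposes as a sum of ``pendant'' contributions $R(\cdot,v_2)=1$ and possibly $R(v_1,v_2)=1$, matching the analogous decomposition in the star $G$; in particular $R_{H_{i,j}}(u,w)=R_G(u,w)$. Consequently a single query $(u,w)$ can distinguish $H_{i,j}$ from $G$ only when $\{u,w\}$ meets $\{v_i,v_j\}$, and this happens for at most $2(n-3)$ of the $\binom{n-2}{2}$ pairs $\{i,j\}$; the union bound over any $k=o(n)$ queries leaves some $H_{i,j}$ undistinguished from $G$, completing the lower bound.

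The main step requiring care is the invariance claim $R_G(u,w)=R_{H_{i,j}}(u,w)$ for $\{u,w\}$ disjoint from $\{v_i,v_j\}$, split into cases by whether the endpoints are in $\{v_1\}$, $\{v_2\}$, or the undisturbed leaves. A clean alternative goes through the Schur-complement viewpoint (\cref{lem:schur_comp_basics}): Gaussian-eliminating $v_1$ turns $G$ into the star on $v_2,\dots,v_n$ and turns $H_{i,j}$ into that same star with a single extra edge $(v_i,v_j)$ of weight $\tfrac12$, a perturbation invisible to any query whose endpoints are both disjoint from $\{v_i,v_j\}$.
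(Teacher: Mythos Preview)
Your argument is correct and follows the same template as the paper: build a graph $G$ with $v_1\sim v_2$ and a family $\{H_{i,j}\}$ with $v_1\not\sim v_2$ such that every ER query avoiding $\{v_i,v_j\}$ returns the same value in $G$ and $H_{i,j}$, then run an adversary/covering argument. The paper's $G$ consists of two stars with centers $v_1,v_2$ joined by an edge (and $H_{i,j}$ replaces that edge by a $4$-cycle through $v_i,v_j$), whereas your single-star construction is slightly simpler, and your ``each query rules out $O(n)$ of the $\Theta(n^2)$ candidates'' counting is spelled out more explicitly than the paper's terser ``must include $v_i$ or $v_j$'' remark.

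One small caveat: the Schur-complement ``alternative'' you sketch eliminates $v_1$, so by \cref{lem:schur_comp_basics} it only certifies the invariance $R_G(u,w)=R_{H_{i,j}}(u,w)$ for queries with both endpoints in $\{v_2,\dots,v_n\}$; queries with one endpoint equal to $v_1$ are not covered by that reduction and still need the cut-vertex case analysis you already outlined. So the Schur-complement step is a clean shortcut for the non-$v_1$ cases rather than a standalone replacement for the full case split.
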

\full{
\begin{proof}
It is possible to check whether $u, v$ are adjacent using a single shortest path query since $u$ and $v$ are adjacent if and only if the length of the shortest path between them is $1$.

On the other hand, consider the problem of deciding whether the underlying graph is (1) $G$ or (2) of the form $H = H_{i,j}$ for some $i, j$ using effective resistance queries, where these graphs are as shown in \cref{fig:effect-res-inefficient}.
I.e., $G$ consists of two star graphs on $n/2$ vertices whose centers $v_1, v_2$ are connected by an edge, and $H = H_{i,j}$ consists of two star graphs on $n/2 - 1$ vertices whose centers $v_1, v_2$ form a square with the vertices $v_i, v_j$, where $i, j \in \set{3, \ldots, n}$, $i \neq j$ are unknown.
In particular, the graphs $G$ and $H$ are on the same set of vertices $V = \set{v_1, \ldots, v_n}$, but in $G$, $v_1$ and $v_2$ are adjacent, and in $H$, $v_1$ and $v_2$ are not adjacent.
So, the problem of deciding whether the underlying graph is $G$ or $H$ reduces to checking whether $v_1$ and $v_2$ are adjacent.
To show that checking adjacency requires $\Omega(n)$ effective resistance queries, it therefore suffices to show that determining whether the underlying graph is $G$ or $H$ requires $\Omega(n)$ such queries.

One can check that $\res_G(x, y) = \res_H(x, y)$ for all $x, y \in V \setminus \set{v_i, v_j}$, and in particular that $\res_G(v_1, v_2) = \res_H(v_1, v_2) = 1$. 
So, to decide whether the underlying graph is $G$ or $H$, it is necessary to perform a query involving either $v_i$ or $v_j$. However, $i, j$ are unknown, and so any (possibly adaptive) sequence of queries $\res(x, y)$ necessarily including $v_i$ and $v_j$ must be of length at least $(n - 2)/2 = \Omega(n)$.
\end{proof}
}

Put together, \cref{thm:er-sp-clique,thm:er-sp-adjacency} show that shortest path queries and effective resistance queries can each be asymptotically more efficient than the other for some verification tasks, and so neither type of query subsumes the other---they are incomparable.
Understanding their relative power is therefore quite interesting.
We also note that \cref{thm:er-sp-adjacency} shows that testing whether a given pair of vertices is adjacent takes $\Omega(n)$ effective resistance queries, but we do not know if this is tight. In fact, we do not have a better upper bound than the trivial $\binom{n}{2}$ queries needed to recover the whole graph. Finding tight bounds for this problem is therefore an interesting question.

\section{Graph Reconstruction via Effective Resistance Queries}
\label{sec:reconstruction}
We give algorithms solving two graph inference problems, the first being full reconstruction of a graph from resistance distance queries given a tree decomposition, and the second being the problem of completing a graph for which the adjacency matrix is only partially known.
For the former, 
\fullornot{}{which is deferred to the full version},
we utilize the technique of reconstructing the Schur complement for a subset of vertices to obtain the overall graph. For the latter one, we rely on the property that effective resistance appears as directional derivatives of certain functions of the space of positive definite matrices.

\full{
\subsection{Graph Reconstruction from a Tree Decomposition}
Next, we show that a graph can be reconstructed quickly provides a tree decomposition of a small width.  Estimating such a tree decomposition is a more challenging task that we leave as an open question.
Our results in this section relies on properties of Schur complement of \cref{lem:schur_comp_basics}.

\begin{corollary}
    \label{lem:schur-bag-interior-reconstruct}
    Let $G = (V, E)$ be a graph, and let $({\cal B}, T)$ be a tree decomposition of $G$ of width $k \in \mathbb{N}$.  Also, let $B \subseteq {\cal B}$ be a bag of $T$ that has degree one, and let $P \subseteq {\cal B}$ be the parent of $B$.
    For all vertices $u \in B \setminus P$ we can determine their neighborhood using $k \choose 2$ effective resistance queries.
\end{corollary}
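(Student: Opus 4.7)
The plan is to reduce this to a direct application of the Schur complement machinery (\cref{lem:schur_comp_basics} and \cref{lem:schur_comp_full_sonstruct}). The crucial structural observation is that a vertex $u \in B \setminus P$ is ``trapped'' inside $B$: it appears only in the bag $B$, so all of its neighbors in $G$ must lie inside $B$. Once this is established, the Schur complement on $B$ preserves the neighborhood of $u$, and we can recover it by reconstructing $G_B$ from pairwise ER queries among the vertices of $B$.

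First, I would prove the structural claim that if $u \in B \setminus P$ then $u$ lies in no bag other than $B$. Suppose for contradiction that $u \in B'$ for some $B' \neq B$. By property (3) of tree decompositions, the set of bags containing $u$ forms a connected subgraph of $T$; in particular, every bag on the unique $B$-to-$B'$ path in $T$ must contain $u$. Since $B$ has degree one in $T$ with unique neighbor $P$, this path passes through $P$, forcing $u \in P$ and contradicting $u \in B \setminus P$. Combined with property (2), which requires every edge of $G$ to have both endpoints in a common bag, this implies $N_G(u) \subseteq B$.

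Next, I would apply the Schur complement on the vertex subset $B$. By the above, $u$ has no edges to $V \setminus B$, so \cref{lem:schur_comp_basics}(2) yields $N_{G_B}(u) = N_G(u)$, meaning the neighborhood of $u$ can be read off directly from $G_B$ (an edge of $G_B$ incident to $u$ corresponds to a neighbor of $u$ in $G$, since in an unweighted graph, the weights of $G_B$ restricted to edges of $G$ incident to $u$ are preserved exactly). The Schur complement $G_B$ itself is determined by the effective resistances among the vertices of $B$, which by \cref{lem:schur_comp_basics}(1) agree with those in $G$. By \cref{lem:schur_comp_full_sonstruct}, recovering $G_B$ requires only $\binom{|B|}{2}$ ER queries, and since the tree decomposition has width $k$, $|B| \leq k$ (under the convention in use), giving the claimed bound of $\binom{k}{2}$ queries. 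The same set of queries suffices simultaneously for every $u \in B \setminus P$, so the bound holds for all interior vertices of the leaf bag at once.

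The only mild subtlety is to confirm that no neighbor is lost when passing from $G$ to $G_B$ (i.e., that the Gaussian elimination preserves rather than cancels the relevant edge weights). This is immediate from \cref{lem:schur_comp_basics}(2), since the clause ``$w_G(u,v) = 0$ for all $v \in \overline{U}$'' applies verbatim to $u \in B \setminus P$ with $U = B$. The rest is routine bookkeeping, so I do not expect any substantial obstacle.
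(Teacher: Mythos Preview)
Your proposal is correct and follows essentially the same route as the paper: argue via the connected-subtree property that any $u \in B \setminus P$ lies in no other bag, so $N_G(u) \subseteq B$; then apply \cref{lem:schur_comp_basics}(2) to see the neighborhood is preserved in the Schur complement on $B$, and reconstruct that Schur complement with $\binom{|B|}{2}$ ER queries via \cref{lem:schur_comp_full_sonstruct}. If anything, your version is slightly cleaner---the paper's proof writes $G_{B\setminus P}$ where $G_B$ is intended (your choice of $U=B$ is the one that actually makes \cref{lem:schur_comp_basics}(2) apply), and you are explicit about the width convention.
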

\begin{proof}
    Any vertex $u \in B \setminus P$ does not appear in any other bag of $T$, because vertices only appear in connected induced trees of $T$.  Thus, all its neighbors have to appear in $B$, because every edge of the graph must appear in at least one bag.
    It follows, by \cref{lem:schur_comp_basics} that the neighborhood of $u$ is identical in $G$ and the Schur complement $G_{B\backslash P}$.  Thus, we can discover them
    by reconstructing $G_{B\backslash P}$, with $\binom{|B|}{2} = \binom{k}{2}$ ER queries using \cref{lem:full_reconstruciton}.
\end{proof}

Using this, we can obtain the adjacency of vertices that belong only in a leaf bag.
Recursively, the vertices exclusive to that leaf bag can be eliminated via the Schur complement.
What we show is that removing this bag, as a node, from the tree decomposition, gives a valid tree decomposition for the new graph, with those vertices being eliminated.
\begin{lemma}
    \label{lem:schur-complement-bag-removal}
    Let $G=(V,E)$ be a graph with a tree decomposition $({\cal B},T)$ of treewidth $k$.
    For any bag of degree one, $B \in {\cal B}$, and its parent $P \in {\cal B}$, the subtree $T'$ induced by ${\cal B}' = {\cal B} \setminus \{B\}$ forms a valid tree decomposition of the Schur complement graph $G' = G_{V\backslash(B \setminus P)}$.
\end{lemma}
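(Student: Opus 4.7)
The plan is to verify the three defining properties of a tree decomposition for $({\cal B}', T')$ with respect to $G'$. Properties (1) (coverage) and (3) (running intersection) are essentially immediate, so the work concentrates on property (2) (edge coverage), which is the main obstacle because the Schur complement can introduce edges not present in $G$.

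For (1), every vertex $v \in V \setminus (B \setminus P)$ either appeared in some bag of $T$ other than $B$ (and hence in ${\cal B}'$), or appeared only in $B$. In the latter case $v \in B$ but $v \notin B \setminus P$, so $v \in B \cap P \subseteq P$, and $P \in {\cal B}'$ still contains $v$. For (3), since $T'$ is obtained by deleting the leaf $B$ from $T$, the unique path between any two bags $B_1, B_2 \in {\cal B}'$ in $T'$ coincides with the corresponding path in $T$, so the running intersection property transfers directly from $({\cal B}, T)$.

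For (2), I would analyze edges of $G'$ by the location of their endpoints. The key preliminary observation is that for any vertex $a \notin B$, $a$ has no edge in $G$ to any vertex in $B \setminus P$: otherwise, by property (2) of the original decomposition some bag would contain both endpoints, but vertices in $B \setminus P$ only appear in $B$ while $a \notin B$. Consequently, for $a, b \in V \setminus (B \setminus P)$ with at least one of them outside $B$, eliminating $B \setminus P$ via the Schur complement cannot change the $(a,b)$ entry of the Laplacian (by \cref{lem:schur_comp_basics}(2) applied iteratively, or equivalently by block structure of $L$); thus $(a,b)$ is an edge of $G'$ iff it is an edge of $G$, and the bag of $T$ witnessing this edge cannot be $B$, so it already lies in ${\cal B}'$. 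In the remaining case $a, b \in B \cap P$, both endpoints lie in $P \in {\cal B}'$, which covers the edge regardless of whether it is an original edge or a new one created by Schur elimination.

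The main obstacle is controlling the new edges that Schur complement can introduce. It is handled cleanly once the preliminary observation confines all such new edges to the vertex set $B \cap P$, which is contained in the bag $P$ that survives in ${\cal B}'$. Conceptually, this is the same mechanism that guarantees vertex-elimination orderings respecting a tree decomposition produce no new fill-in outside the bags.
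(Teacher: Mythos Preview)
Your proposal is correct and follows essentially the same approach as the paper. The paper also verifies the three tree-decomposition axioms, using the same observation that vertices of $B\setminus P$ appear only in the bag $B$ so all their $G$-neighbors lie in $B$; the only cosmetic difference is that the paper organizes property~(2) by whether an edge of $G'$ is old or newly created by the Schur complement, whereas you organize it by whether an endpoint lies outside $B$---both routes land on the same conclusion that any edge not already witnessed by a bag in ${\cal B}'$ must have both endpoints in $B\cap P\subseteq P$.
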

\begin{proof}
    We must show (1) every vertex of $V\backslash(B\backslash P)$ is present in a bag of ${\cal B}'$, (2) for any edge in $G'$ there is a bag in ${\cal B}'$ that contains both of its endpoints, and (3) every vertex of of $G'$ exists in a contiguous subtree of $T'$.
    Property (3) is guaranteed since we removed a leaf from $T$ to obtain $T'$, which does not disconnect any induced subtree of $T$.

    To show property (1), note the vertex set of $G'$ is exactly $V \setminus (B \setminus P)$.
    Let $v \in V \setminus (B \setminus P)$.
    If $v \notin B$ then there is a bag $B'\neq B$ such that $v \in B'$, otherwise $({\cal B}, T)$ is not a valid tree decomposition.
    Otherwise, if $v \in B$ then it must be the case that $v\in P$, since $v\in V \setminus (B \setminus P)$.  Therefore, $v$ is in a bag of ${\cal B}'$, namely $P$.

    Finally, we show property (3). For any edge $e$ in the Schur complement graph $G_{V\backslash(B \setminus P)}$, either $e\in E$ or $e$ is created during the Schur complement process. 
    If $e\in E$ then it is in a bag of ${\cal B}$, therefore, in a bag of ${\cal B}\backslash\{B\}$. 
    Otherwise, $e = (u,v)$ such that $u,v$ are in the neighborhood of $B\backslash P$, as the edge is created after reducing $B\backslash P$.
    In this case, since all the neighbors of $B\backslash P$ are in $P$, we have that $u,v\in P$, as desired.
\end{proof}
Having shown both that we can obtain the entire adjacency of `interior vertices' of a leaf bag, and that removing such bags gives a tree decomposition of the Schur complement with those vertices eliminated, the algorithm becomes clear.
\begin{theorem}
    \label{thm:reconstruction-from-tree-decomp}
    Let $G=(V,E)$ be an undirected weighted graph and suppose that a tree decomposition $({\cal B}, T)$ of $G$ is given that has width $k$.
    There exists an algorithm to reconstruct $G$ in $O(k^2n)$ ER queries.
\end{theorem}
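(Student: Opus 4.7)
The plan is to reconstruct $G$ by repeatedly peeling leaf bags off of the tree decomposition, recovering the edges incident to their ``interior'' vertices as we go. Concretely, I would maintain a current tree decomposition $({\cal B}',T')$ and a current graph $G'$ (initially equal to $({\cal B},T)$ and $G$). At each step, pick any leaf bag $B\in{\cal B}'$ with parent $P$, invoke \cref{lem:schur-bag-interior-reconstruct} to recover the neighborhood in $G'$ of every vertex $u\in B\setminus P$ using $\binom{|B|}{2}=O(k^2)$ ER queries, then remove $B$ from $T'$ and update $G'$ to the Schur complement $G'_{V(G')\setminus (B\setminus P)}$ as justified by \cref{lem:schur-complement-bag-removal}.

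The key point that makes this work without any additional query budget is property~(1) of \cref{lem:schur_comp_basics}: ER distances between any two vertices in $V(G')\setminus(B\setminus P)$ agree in $G'$ and in its Schur complement on that set. Hence all subsequent queries can still be answered by the original ER oracle on $G$ (the Schur complement is maintained only conceptually, as a bookkeeping device). Correctness then reduces to observing that every edge of $G$ is eventually output: when we first process a bag $B$ for which a given vertex $u$ is in $B\setminus P$, \cref{lem:schur-bag-interior-reconstruct} reveals the full neighborhood of $u$, which at that moment equals $u$'s neighborhood in the current Schur complement graph, and by property~(2) of \cref{lem:schur_comp_basics} this equals $u$'s neighborhood in $G$ restricted to currently-surviving vertices. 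Because every edge $(u,v)\in E$ lies in some bag of the original decomposition and both endpoints remain alive until one of them is eliminated, each edge is output exactly when the first of its endpoints becomes interior to a leaf bag.

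For the query bound, the crucial observation is that each vertex of $G$ appears as an interior vertex of a leaf bag at most once across the whole run: once a vertex $v$ sits in $B\setminus P$ for some leaf $B$, it is removed from $V(G')$ at that iteration and never reappears. So the number of iterations in which $B\setminus P\neq\emptyset$ is at most $n$, and each such iteration costs $\binom{k+1}{2}=O(k^2)$ queries. Iterations with $B\setminus P=\emptyset$ cost zero queries and just prune an empty leaf from $T'$; by a standard argument we may also assume WLOG that the given tree decomposition has $O(n)$ bags total, so these free iterations do not cause issues. The total query complexity is therefore $O(k^2 n)$.

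The subtlest step is verifying that, after several rounds of peeling, the setup for \cref{lem:schur-bag-interior-reconstruct} still applies to the current pair $(G',T')$. This requires \cref{lem:schur-complement-bag-removal} to establish that $T'$ remains a valid tree decomposition of $G'$ of width at most $k$, and it requires that we can implement the ``reconstruct the Schur complement on $B$'' subroutine using ER queries to $G$ rather than to $G'$ — which is exactly what property~(1) of \cref{lem:schur_comp_basics} gives us, inductively applied across all prior Schur complement steps. Once these two invariants are in place, the recursion terminates cleanly once $T'$ is empty, at which point every edge of $G$ has been output.
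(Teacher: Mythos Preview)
Your peeling strategy and query-count analysis are fine, but the correctness argument has a real gap at the step where you invoke property~(2) of \cref{lem:schur_comp_basics}. That property says $u$'s neighborhood is preserved under Schur complement \emph{only if} $u$ has no edges in $G$ to any already-eliminated vertex. You never verify this hypothesis, and in fact it can fail: when you eliminate a vertex $v\in B\setminus P$ that is adjacent in $G$ to two surviving vertices $u,w\in P$ with $(u,w)\notin E(G)$, the Schur complement introduces a fill-in edge $(u,w)$ in $G'$. Later, when $u$ becomes interior to some leaf, \cref{lem:schur-bag-interior-reconstruct} will report $w$ as a neighbor of $u$ in $G'$ and your algorithm will output the spurious edge $(u,w)$ (and, for genuine edges of $G$, possibly with the wrong weight). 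A concrete instance: take the path $u\text{--}v\text{--}w$ with the valid width-$2$ tree decomposition consisting of bags $\{u,v,w\}$ and $\{u,w\}$; if you peel $\{u,v,w\}$ first, the Schur complement on $\{u,w\}$ has an edge $(u,w)$ of weight $1/2$, which your procedure then outputs. So ``every edge of $G$ is output'' is true, but ``only edges of $G$ are output, with correct weights'' is false.

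The paper closes this gap not by arguing fill-in cannot occur, but by \emph{undoing} it. After recursively obtaining the Laplacian $L_{\overline U}$ of the Schur complement (where $U=B\setminus P$), it uses the blocks $L(U,U)$, $L(U,\overline U)$, $L(\overline U,U)$---all known from the neighborhood computation for $U$---together with the Schur complement identity
\[
L(\overline U,\overline U)\;=\;L_{\overline U}\;+\;L(\overline U,U)\,L(U,U)^{+}\,L(U,\overline U)
\]
to recover the true block $L(\overline U,\overline U)$ of $L$. This inversion step costs no additional queries, so your $O(k^2 n)$ bound survives; what is missing from your argument is precisely this reconstruction-from-Schur-complement step (or some alternative argument that controls fill-in, which property~(2) alone does not provide).
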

\begin{proof}
    Our algorithm inductively reconstructs $G$ by considering its tree decomposition $T$.
    Let $B$ be a bag of $T$ with exactly one neighbor $P$ in $T$, and let $U = B\backslash P$.  We can assume $U$ is non-empty, as otherwise, we can drop $B$ and obtain a tree decomposition of $T$ with fewer bags.  We can compute all the neighbors of all the vertices in $U$ with $O(k^2)$ ER queries by \cref{lem:schur-bag-interior-reconstruct}.  Then, we recursively compute the Schur complement $G_{V\backslash U}$.  To that end, we use the fact that $T\backslash\{B\}$ is a valid tree decomposition of $G_{V\backslash U}$, which holds by \cref{lem:schur-complement-bag-removal}, and that the ER queries between two vertices of $V\backslash U$ return their effective resistance in this Schur complement (which is equal to their effective resistance in $G$).

    To finish our reconstruction, we need to build $G$ provided the neighborhood of $U$ (computed by the algorithm of \cref{lem:schur-bag-interior-reconstruct}) and the Schur complement $G_{V\backslash U}$ (computed recursively). To that end, let $L$ be the Laplacian of $G$, and observe that we know $L(U, \overline{U})$, $L(\overline{U}, U)$ and $L(U, U)$, since we know the neighbors of every vertex in $U$.  Moreover, we know $L_{\overline{U}}$ from our recursive computation. Hence, we can compute $L(\overline{U}, \overline{U})$, as by the definition of Schur complement (\cref{eqn:schur_comp}), 
    \[
    L(\overline{U}, \overline{U}) = L_{\overline{U}} + L(\overline{U},U)L(U, U)^{+}L(U, \overline{U}).
    \]
    Overall, we know all four submatrices of $L$, hence we know $L$.
\end{proof}
}

\fullornot{
\subsection{Graph Completion with Minimal Query Complexity}
Next, we consider the problem of graph completion. 
}{
\paragraph{Graph Completion.}
}
Let $G = (V, E)$ be a connected weighted graph whose adjacency matrix is partially known, with exactly $k$ unknown entries. The goal of the graph completion problem is to identify these unknown entries using the minimum number of ER queries.

\fullornot{It is straightforward to recover $G$ using $O(n^2)$ ER queries by simply ignoring the known part of the adjacency matrix and applying \fullornot{\cref{lem:full_reconstruciton}}{}. Furthermore, by leveraging the Schur complement, one can recover $G$ with $O(k^2)$ ER queries (\fullornot{\cref{thm:k-unknowns-quadratic_queries}}{See full version}).
However, we show that only $O(k)$ ER queries are sufficient to reconstruct $G$ when the edge weights are drawn from a finite set, and in particular, for unweighted graphs (\cref{thm:k-unknowns-exponential-completion}).
}{We show that only $O(k)$ ER queries are sufficient to reconstruct $G$ when the edge weights are drawn from a finite set, and in particular, for unweighted graphs (\cref{thm:k-unknowns-exponential-completion}).}
\full{
\subsubsection{Quadratically many ER queries}  To warm up, we present the following simple application of the Schur complement together with the known algorithm for fully reconstructing a graph, to recover $G$ with $O(k^2)$ ER queries.

\begin{theorem}
    \label{thm:k-unknowns-quadratic_queries}
    Let $G=(V,E, w_G)$ be an weighted graph.
    Suppose $k$ entries of the adjacency matrix of $G$ are not known. 
    There exists an algorithm to reconstruct $G$ (i.e.~to find the missing $k$ weights) with $O(k^2)$ ER queries.  (The running time of the algorithm is $\poly(k)\cdot n^2$).
\end{theorem}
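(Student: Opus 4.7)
The plan is to reduce the problem to a small Schur complement reconstruction. Let $U \subseteq V$ be the set of vertices that appear as an endpoint of some unknown adjacency-matrix entry; since there are $k$ unknown entries and each involves two vertices, $|U| \leq 2k$. The set $U$ is computable from the input since we know which $k$ entries are unknown.

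First I would issue $\binom{|U|}{2} = O(k^2)$ ER queries, one for each pair of vertices in $U$. By \cref{lem:schur_comp_basics}(1), these ER distances coincide with the ER distances in the Schur complement graph $G_U$, so applying \cref{lem:full_reconstruciton} to $G_U$ (equivalently, \cref{lem:schur_comp_full_sonstruct}) reconstructs the Schur complement Laplacian $L_U$ in full. Next I would observe that, by the definition of $U$, every unknown entry of the adjacency matrix lies in $U \times U$; hence the three Laplacian blocks $L(U, \overline{U})$, $L(\overline{U}, U)$, and $L(\overline{U}, \overline{U})$ are determined entirely by the known portion of the adjacency matrix. In particular, any $v \in \overline{U}$ has all of its incident edge weights known, so its degree $L_{v,v}$ is known as well. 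I can therefore compute the correction matrix
\[
M := L(U, \overline{U}) \, L(\overline{U}, \overline{U})^{+} \, L(\overline{U}, U)
\]
from known data alone, using no additional ER queries.

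Finally, rearranging the Schur complement identity (\cref{eqn:schur_comp}) yields $L(U, U) = L_U + M$, and each missing entry $w_G(u,v)$ with $u,v \in U$ is recovered as $-L(U, U)_{u,v}$ (and diagonal entries of $L(U,U)$ then serve as a consistency check, or are used to recover weights of self-loops if applicable).

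The key conceptual observation—and really the only obstacle—is that once $U$ is chosen to cover all endpoints of unknown edges, the three ``outer'' Laplacian blocks are known from the input and the Schur complement identity becomes a linear equation with a single unknown block $L(U, U)$. After this observation the algorithm is immediate: the $O(k^2)$ query count is dominated by the Schur complement reconstruction step, and the polynomial running time stems from computing $M$, which amounts to solving $|U| = O(k)$ linear systems of size $n - |U|$ (yielding the claimed $\mathrm{poly}(k)\cdot n^2$ bound via standard linear-algebraic routines).
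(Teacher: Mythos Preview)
Your proposal is correct and follows essentially the same approach as the paper: define $U$ as the endpoints of the unknown entries (so $|U|\le 2k$), reconstruct the Schur complement $L_U$ from $\binom{|U|}{2}=O(k^2)$ ER queries via \cref{lem:full_reconstruciton}, and then recover $L(U,U)$ by rearranging the Schur complement identity using the already-known blocks $L(U,\overline{U})$, $L(\overline{U},U)$, $L(\overline{U},\overline{U})$. The paper's proof is almost verbatim the same argument.
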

\begin{proof}
Let $L$ be the Laplacian of $G$.  Let $U$ be the set of vertices that are incident to the edges with unknown weights, and note $|U|\leq 2k$.  
We use $\binom{2k}{2}$ ER queries to obtain all ER values between all pairs of vertices of $U$, and use \cref{lem:full_reconstruciton} to compute the Schur complement $L_U$.  Then, since $L(U, \overline{U})$, $L(\overline{U}, U)$ and $L(\overline{U}, \overline{U})$ are already known, we can compute $L(U,U)$ from the Schur complement formula, \cref{eqn:schur_comp},
\(
L_U = L(U, U) - L(U,\overline{U})L(\overline{U}, \overline{U})^{+}L(\overline{U}, U) \ \text{.}
\) 
Therefore, we can have all four submatrices of $L$, hence $G$ is recovered.
\end{proof}
}
\fullornot{
\subsubsection{Linearly many ER queries}
Next, }{Specifically, }we show that the effective resistance values between every pair of vertices whose corresponding adjacency matrix entry is missing are sufficient to reconstruct $G$. Specifically, we show that these ER values, together with the known adjacency values, uniquely determine the graph.

The proof of this uniqueness result is, at a high level, very similar to the idea presented in the MathOverflow post by \citet{Math-Overflow-Speyer}. However, we needed to adapt that idea to our setting with effective resistance (\cref{thm:unique-graph-completion-k-unknowns}) and establish a connection between the derivative of the function studied there and effective resistance (\cref{lem:derivitive_in_edge_lap_direction}). %
The application of this work to the context of effective resistance is not immediately obvious, as, firstly, the Laplacian is not an invertible matrix, and secondly, the construction of a function with strict convexity whose gradient is equal to a certain set of effective resistance is non-trivial.

\full{
\paragraph{Basic matrix calculus.}
We briefly introduce some background from matrix calculus that we use in this section and refer the reader to~\citet{Magnus_Neudecker_2019} for a comprehensive review.

We use $X=(x_{i,j})_{(i,j)\in[n]\times [n]}$ to denote $n\times n$ matrices of variables $x_{i,j}$.
We work with functions $f:\R^{n\times n}\to \R$ that map $n\times n$ matrices to real numbers.
For such an $f$ its \emph{derivative} is an $n\times n$ matrix, denoted %
\[
    \partial f/\partial X = (\partial f/\partial x_{i,j})_{(i,j)\in[n]\times [n]} \text{.}
\]
Moreover, the \emph{directional derivative} of $f$ in the direction of an $n\times n$ matrix $M$ is %

\begin{equation}
\label{eqn:dir_derivative}
\nabla_M f = \text{tr}((\partial f/\partial X)^T\cdot M).    
\end{equation}

A differentiable function $f:D\to\R$ for $D\subseteq \R^{m}$ is strictly concave if $D$ is convex and for any $x, y\in D$
\[
    f(x) - f(y) < (\nabla f(x))^T (y - x).
\]
See~\citet[Section 3.1.3]{book/boyd-convex-opt}.
One can show that strict concavity implies uniqueness of gradients over the domain. Hence, we have the following lemma. %
\begin{lemma}[Theorem 1 of~\citep{Griewank1991}]
    \label{fact:cvx-fn-cvx-dom-implies-different-gradient}
    Let $f:D\rightarrow \R$ be a strictly concave function on a convex domain $D\subseteq \R^{m}$.  Then
    \(
    \nabla f(x) \neq \nabla f(y)
    \)
    for any distinct $x,y\in D$.
\end{lemma}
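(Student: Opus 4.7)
The plan is to prove the contrapositive/direct statement by exploiting the defining inequality of strict concavity applied at both endpoints and then adding. Specifically, I would begin by recalling the characterization supplied in the preceding paragraph: for a differentiable strictly concave $f$ on convex $D$, for every pair of distinct points $x, y \in D$,
\[
f(y) - f(x) < \nabla f(x)^T (y - x).
\]
Since $D$ is convex and strict concavity is symmetric in the two arguments, I can also apply the same inequality with the roles of $x$ and $y$ swapped:
\[
f(x) - f(y) < \nabla f(y)^T (x - y).
\]

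The key step is then to add these two inequalities. The left-hand sides cancel, yielding
\[
0 < \nabla f(x)^T (y - x) + \nabla f(y)^T (x - y) = \bigl(\nabla f(x) - \nabla f(y)\bigr)^T (y - x).
\]
In particular, the inner product $(\nabla f(x) - \nabla f(y))^T (y - x)$ is strictly positive. If we had $\nabla f(x) = \nabla f(y)$, this inner product would equal $0$, contradicting the strict inequality. Hence $\nabla f(x) \neq \nabla f(y)$, which establishes the lemma.

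There is essentially no obstacle here beyond being careful about the sign conventions used to define strict concavity (the paper writes the inequality with strict $<$ and with $\nabla f(x)^T(y-x)$ as the right-hand side, so the argument works directly). The only minor subtlety is ensuring that both inequalities hold for distinct $x, y \in D$, which uses only convexity of the domain so that the segment between $x$ and $y$ lies in $D$, allowing the tangent-plane characterization of strict concavity to be invoked at both endpoints. Everything else is a one-line addition and a sign check.
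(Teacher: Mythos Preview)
Your argument is correct: applying the strict-concavity inequality at both $x$ and $y$ and adding yields $0 < (\nabla f(x) - \nabla f(y))^T(y-x)$, which immediately rules out $\nabla f(x) = \nabla f(y)$ for distinct $x,y$. Note that the paper does not actually give its own proof of this lemma; it simply records the statement and cites \citep{Griewank1991}. So there is no ``paper's proof'' to compare against, and your short self-contained argument is exactly the standard one. One small remark: the paper's displayed definition of strict concavity has $f(x)-f(y)$ on the left rather than the $f(y)-f(x)$ you wrote, but this makes no difference to your proof, since after swapping the roles of $x$ and $y$ and adding, the left-hand sides cancel either way.
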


We consider the $\log\det(\cdot)$ function defined on the set of $n\times n$ matrices.  We specifically work on the restriction of this function on the set positive definite matrices, denoted $S_n^{++} \subseteq \mathbb{R}^{n \times n}$ %
, which are invertible since all their eigenvalues are non-zero. Furthermore, the set $S_n^{++}$ is known to form a cone and, in particular, is convex. 
We use the following fact, whose proof can be found in~\citet[Section 3.5.1]{book/boyd-convex-opt}.
\begin{lemma}
    \label{lem:log-det-concave}
    The function $f : S_n^{++} \to \mathbb{R}$, defined as $f(X) = \log\det (X)$, 
    is strictly concave.
\end{lemma}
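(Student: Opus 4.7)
The plan is to prove strict concavity of $f(X) = \log\det(X)$ by restriction to lines, which is the standard way to reduce matrix convexity questions to a one-dimensional calculus computation. Concretely, fix any two distinct points $X, Y \in S_n^{++}$ and consider the function $g:[0,1] \to \R$ defined by $g(t) = f((1-t)X + tY) = \log\det((1-t)X + tY)$. Since $S_n^{++}$ is a convex cone, the segment lies inside the domain, so $g$ is well-defined, and strict concavity of $f$ on $S_n^{++}$ follows from showing that every such $g$ is strictly concave on $[0,1]$ whenever $X \neq Y$.

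To analyze $g$, I would use the standard simultaneous-diagonalization trick. Write $V = Y - X$, so that the argument of the log-determinant becomes $X + tV$ for $t \in [0,1]$. Since $X \succ 0$, we can factor
\[
X + tV = X^{1/2}\bigl(I + t\, X^{-1/2} V X^{-1/2}\bigr) X^{1/2}.
\]
Let $\lambda_1, \ldots, \lambda_n$ be the eigenvalues of the symmetric matrix $M := X^{-1/2} V X^{-1/2}$. Then
\[
g(t) = \log\det(X) + \sum_{i=1}^{n} \log(1 + t\lambda_i),
\]
which is a finite sum of shifts of the strictly concave scalar function $s \mapsto \log s$.

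Differentiating twice yields
\[
g''(t) = -\sum_{i=1}^{n} \frac{\lambda_i^2}{(1 + t\lambda_i)^2}.
\]
This is manifestly nonpositive and equals zero only if every $\lambda_i = 0$, i.e., only if $M = 0$, which (since $X^{-1/2}$ is invertible) forces $V = 0$, contradicting $X \neq Y$. Hence $g''(t) < 0$ throughout $[0,1]$, so $g$ is strictly concave, and therefore $f$ is strictly concave on $S_n^{++}$.

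I do not anticipate a serious obstacle: the argument is entirely routine once one invokes the factorization above. The only subtlety worth being careful about is verifying that $(1-t)X + tY$ remains positive definite for all $t \in [0,1]$ (so the logarithm is defined), which follows immediately from convexity of the positive definite cone, and that $M$ is similar to a symmetric matrix (hence has real eigenvalues), which is why the factorization through $X^{1/2}$ rather than a generic square root is used.
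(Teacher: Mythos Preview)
Your argument is correct and is precisely the standard restriction-to-lines proof; the paper does not give its own proof but simply cites Boyd and Vandenberghe, Section~3.5.1, where exactly this computation appears. So your proposal is essentially the same approach as the paper's (cited) proof.
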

}

\paragraph{Uniqueness and reconstruction.}  
Now, we are ready to present the main result. First, we establish the uniqueness of graph completion given the ER values corresponding to the pairs of vertices with unknown entries in the adjacency matrix. %

To this end, we study the function $f = \log\det(\cdot)$ and its derivative on the space of positive definite matrices, which includes regularized Laplacians. First, we show that the directional derivatives of $f$ in certain directions correspond to the effective resistances in graphs.  In the following proof, for any $i,j\in [n]$ $E_{i,j}$ is a matrix whose entry at $(i,j)$ is one, and all its other entries are zero.  Further, $L_{i,j} = E_{i,i}+E_{j,j}-E_{i,j}-E_{j,i}$.  We refer to $L_{i,j}$ as the edge Laplacian, as it is the Laplacian of a graph with $n$ vertices and one edge $(i,j)$.

\begin{lemma}
    \label{lem:derivitive_in_edge_lap_direction}
    Let $G=([n], E_G, w_G)$ be a weighted connected graph and $\widetilde{L}_G$ its regularized Laplacian.  
    For any $i,j\in [n]$, we have,
    \(
        \nabla_{L_{i,j}} \log(\det(X))|_{X = \widetilde{L}_G} = \res(i,j),
    \)
    where $\log\det(\cdot):S_n^{++}\rightarrow \R$ is a function on all positive definite $n\times n$ matrices $X$.
\end{lemma}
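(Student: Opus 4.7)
The plan is to combine the standard matrix calculus identity for $\log\det$ with the quadratic-form representation of effective resistance via the regularized Laplacian, both of which are already available to us. Once we recognize that $L_{i,j} = (\vec{1}_i - \vec{1}_j)(\vec{1}_i - \vec{1}_j)^T$, the derivation is essentially a one-line trace manipulation.

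First, I would invoke the well-known matrix calculus fact that for any invertible $X$, $\partial \log\det(X)/\partial X = (X^{-1})^T$. Since $\widetilde{L}_G$ is symmetric (the sum of the symmetric $L$ and the symmetric $\tfrac{1}{n}J$) and positive definite by connectedness of $G$, its inverse $\widetilde{L}_G^{-1}$ is also symmetric, so $(\partial \log\det/\partial X)^T|_{X = \widetilde{L}_G} = \widetilde{L}_G^{-1}$. Plugging this into the directional derivative formula \eqref{eqn:dir_derivative} with direction $M = L_{i,j}$ gives
\[
\nabla_{L_{i,j}} \log\det(X)\bigm|_{X = \widetilde{L}_G} \;=\; \mathrm{tr}\bigl(\widetilde{L}_G^{-1}\, L_{i,j}\bigr).
\]

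Next, I would use the factorization $L_{i,j} = (\vec{1}_i - \vec{1}_j)(\vec{1}_i - \vec{1}_j)^T$, which follows directly from the definition $L_{i,j} = E_{i,i} + E_{j,j} - E_{i,j} - E_{j,i}$. By the cyclic property of the trace, together with the fact that the trace of a scalar is the scalar itself,
\[
\mathrm{tr}\bigl(\widetilde{L}_G^{-1}\,(\vec{1}_i - \vec{1}_j)(\vec{1}_i - \vec{1}_j)^T\bigr) \;=\; (\vec{1}_i - \vec{1}_j)^T\, \widetilde{L}_G^{-1}\,(\vec{1}_i - \vec{1}_j).
\]

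Finally, I would invoke the already-cited fact (the \citet{ghosh2008minimizing} lemma) that $\res(i,j) = (\vec{1}_i - \vec{1}_j)^T \widetilde{L}_G^{-1} (\vec{1}_i - \vec{1}_j)$, completing the identification. There is no real obstacle here: the proof is a clean chain of standard identities, and the only thing to be careful about is verifying that $\widetilde{L}_G$ is indeed in $S_n^{++}$ (so that the gradient formula for $\log\det$ applies at this point), which we get for free from connectedness of $G$ as noted after equation \eqref{eq:eff-res-matrix-form} in the preliminaries.
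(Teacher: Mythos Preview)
Your proof is correct and follows essentially the same approach as the paper: both compute the directional derivative via $\partial \log\det(X)/\partial X = (X^{-1})^T$ and the trace formula, then identify the result with $\res(i,j)$ using the regularized Laplacian. The only cosmetic difference is that the paper expands $\mathrm{tr}(\widetilde{L}_G^{-1} L_{i,j})$ entry-by-entry as $\widetilde{L}^{-1}_{ii} + \widetilde{L}^{-1}_{jj} - 2\widetilde{L}^{-1}_{ij}$, whereas you use the rank-one factorization and cyclic trace identity to land directly on the quadratic form $(\vec{1}_i - \vec{1}_j)^T \widetilde{L}_G^{-1}(\vec{1}_i - \vec{1}_j)$.
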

\fullornot{
\begin{proof}
    We know $\partial \log( \det(X))/\partial X = (X^{-1})^T$ (see~\citet[Equation (51)]{book/matrix_cookbook}).  
    So, \full{by \cref{eqn:dir_derivative},} 
    \[
        \nabla_{L_{i,j}} \log( \det(X)) = \text{tr}((X^{-1})^T\cdot L_{i,j}) = ((X^{-1})^T)_{i,i} + ((X^{-1})^T)_{j,j} - ((X^{-1})^T)_{i,j} - ((X^{-1})^T)_{j,i}
    \]     
    Evaluated at the particular point $X = \widetilde{L}_G$, and noting that $\widetilde{L}^{-1}_G = (\widetilde{L}^{-1}_G)^T$ , we obtain 
    \[
        \nabla_{L_{i,j}} \log(\det(X))\Big |_{X = \widetilde{L}_G} =  \widetilde{L}^{-1}_{ii} + \widetilde{L}^{-1}_{jj} - 2 \widetilde{L}^{-1}_{ij} = \res(i,j), 
    \]
    where we used $\widetilde{L} = \widetilde{L}_G$ to simplify the notation.
\end{proof}
}
{
This follows clearly from the definition of the directional derivative, which is defined, in the direction of $M$, to be $\nabla_Mf:=\text{tr}((\partial f/\partial X)^TM)$, and the fact that $\partial \log \det(X) / \partial X = (X^{-1})^T$. Using the above lemma, we show that the resistances on the unknown vertex pairs of $G$, are unique for each completion of the graph.
}
\full{Next, we present our uniqueness result based on the lemma above. At a high level, we consider the space $W$ of all regularized Laplacians that are consistent with the known part of $G$. We show that the function $\log\det(\cdot)$ has distinct derivatives at every pair of points in $W$. Moreover, these derivatives are uniquely determined by the effective resistances between pairs of vertices with unknown adjacency entries.  We conclude that knowing ER values between pairs of vertices with missing adjacency values uniquely determines $G$.}
{}
\begin{theorem}
    \label{thm:unique-graph-completion-k-unknowns}
    Let $n\in\Z^+$, and let $I\sqcup J$ be a partition of $\binom{[n]}{2}$, i.e.,~all (unordered) pairs of indices of $[n]$.
    Also, let $w:I\rightarrow \R^{\geq 0}$ and $r:J\rightarrow \R^+$.
    There exists at most one %
    weighted graph $G=([n], E, w_G)$ whose edge weights on $I$ are consistent with $w$, and whose effective resistance values on $J$ are consistent with $r$, i.e.~for any $(i,j)\in I$, $w_G(i,j) = w(i, j)$ and for any $i,j\in J$, $\res_G(i,j) = r(i,j)$.
\end{theorem}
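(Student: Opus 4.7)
The plan is to argue uniqueness by contradiction, leveraging the strict concavity of $\log\det$ on the cone of positive definite matrices together with \cref{lem:derivitive_in_edge_lap_direction}. Suppose for contradiction that there exist two distinct weighted graphs $G_1 = ([n], E_1, w_1)$ and $G_2 = ([n], E_2, w_2)$ both consistent with $w$ on $I$ and with $r$ on $J$. Let $\widetilde{L}_1$ and $\widetilde{L}_2$ be their regularized Laplacians, which lie in $S_n^{++}$ since both graphs are connected. Because $w_1$ and $w_2$ agree on $I$, the matrix difference telescopes cleanly over the remaining coordinates:
\[
\Delta := \widetilde{L}_2 - \widetilde{L}_1 = \sum_{(i,j) \in J} \delta_{i,j}\, L_{i,j}, \qquad \delta_{i,j} := w_2(i,j) - w_1(i,j).
\]
Distinctness of $G_1$ and $G_2$ forces $\Delta \neq 0$, i.e.\ some $\delta_{i,j}$ is nonzero.

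Next I would apply the first-order strict concavity characterization of $f(X) = \log\det(X)$ (from \cref{lem:log-det-concave}) at the two points $\widetilde{L}_1, \widetilde{L}_2 \in S_n^{++}$, which form a chord inside the convex cone. The two inequalities
\[
f(\widetilde{L}_2) - f(\widetilde{L}_1) < \mathrm{tr}\bigl(\nabla f(\widetilde{L}_1)^T \Delta\bigr), \qquad
f(\widetilde{L}_1) - f(\widetilde{L}_2) < \mathrm{tr}\bigl(\nabla f(\widetilde{L}_2)^T (-\Delta)\bigr)
\]
add to give $0 < \mathrm{tr}\bigl((\widetilde{L}_1^{-1} - \widetilde{L}_2^{-1})\,\Delta\bigr)$, using $\nabla f(X) = (X^{-1})^T$ and the symmetry of each $\widetilde{L}_k$.

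Finally, I would substitute the expansion of $\Delta$ and invoke \cref{lem:derivitive_in_edge_lap_direction}, which identifies $\mathrm{tr}(\widetilde{L}_k^{-1} L_{i,j})$ with $\res_{G_k}(i,j)$:
\[
0 < \sum_{(i,j)\in J} \delta_{i,j}\,\mathrm{tr}\bigl((\widetilde{L}_1^{-1} - \widetilde{L}_2^{-1}) L_{i,j}\bigr) = \sum_{(i,j)\in J} \delta_{i,j} \bigl(\res_{G_1}(i,j) - \res_{G_2}(i,j)\bigr).
\]
Since both graphs realize $r$ on $J$, every summand on the right vanishes, contradicting the strict positivity on the left. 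Hence $\Delta = 0$ and $G_1 = G_2$.

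The main obstacle I anticipate is not the algebra, which is short, but making sure the chord between $\widetilde{L}_1$ and $\widetilde{L}_2$ lies inside the open cone $S_n^{++}$ so that strict concavity actually applies; this follows because $S_n^{++}$ is convex and the two regularized Laplacians are strictly positive definite. A secondary subtlety is verifying that $\Delta$ is supported exactly on combinations of edge Laplacians $L_{i,j}$ for $(i,j)\in J$ (and not, e.g., introducing a spurious $\tfrac{1}{n}J$ term), which holds because the $\tfrac{1}{n}J$ regularizer is identical for both $\widetilde{L}_1$ and $\widetilde{L}_2$ and cancels.
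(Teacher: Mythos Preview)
Your proposal is correct and follows essentially the same approach as the paper: both rely on the strict concavity of $\log\det$ on $S_n^{++}$ together with \cref{lem:derivitive_in_edge_lap_direction} identifying the relevant directional derivatives with effective resistances. The only difference is packaging: the paper parametrizes the affine slice by $g(\vec w)=\widetilde{L}_K+\sum_{(i,j)\in J}w_{i,j}L_{i,j}$, sets $h=f\circ g$, and then invokes the cited gradient-injectivity lemma for strictly concave functions, whereas you work directly in $S_n^{++}$ and inline that lemma by summing the two first-order strict-concavity inequalities to reach the contradiction; your route is a bit more self-contained but otherwise identical in substance.
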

\begin{proof}
    Let $K=([n], I, w)$ be the graph that has edges between pairs in $I$ with their corresponding weights from $w$, and let $\widetilde{L}_K$ be the regularized Laplacian of $K$.
    
    We know, for any graph $G=([n], E_G, w_G)$, that
    \[
    \widetilde{L}_G = \sum_{(i,j)\in E_G}{w_G(i,j)\cdot L_{i,j}} + \frac{1}{n}\cdot J. 
    \]
    Further, if $G$ is consistent with $I$, $w$, we know that $I\subseteq E_G$, and that for any $i,j\in I$, $w_G(i,j) = w(i,j)$.  Therefore,
    \[
    \widetilde{L}_G = \sum_{(i,j)\in I}{w(i,j)\cdot L_{i,j}} + \sum_{(i,j)\in E_G\backslash I}{w_G(i,j)\cdot L_{i,j}} + \frac{1}{n}\cdot J = \widetilde{L}_K + \sum_{(i,j)\in E_G\backslash I}{w_G(i,j)\cdot L_{i,j}}. 
    \]
    Therefore, $\widetilde{L}_G$ belongs to the following affine subspace of $\R^{n\times n}$:
    \[
        W = \widetilde{L}_K + \text{span} \{ L_{i,j} \ | \ (i,j) \in J \}.
    \]
    Moreover, $\widetilde{L}_G$ belongs to the cone of positive definite matrices, $S_n^{++}$, because it is the regularized Laplacian of a connected graph.

    Now, let $f:\R^{n\times n}\to\R$, be $f(X) = \log\det(X)$. Let $G$ be any graph with regularized Laplacian $\widetilde{L}_G\in W$, and let $i,j\in I$.
    By \cref{lem:derivitive_in_edge_lap_direction}, we have
    \[
        \nabla_{L_{i,j}} f(\widetilde{L}_G) = \nabla_{L_{i,j}} f_{|W}(\widetilde{L}_G) = \res_{\widetilde{L}_G}{(i,j)},
    \]
    where the first equality holds as $\widetilde{L}_G + L_{i,j}$ is within $W$.

    By \fullornot{\cref{lem:log-det-concave}, $f$ is strictly concave within $S_n^{++}$}{the strict concavity of the log-determinant function~\citep[Chapter 3.1]{book/boyd-convex-opt}}, it is strictly concave %
    within $W \cap S_n^{++}$ \fullornot{(any line through $W \cap S_n^{++}$ is also a line through $S_n^{++}$, so the concavity is carried over into the intersection.)}{(an intersection of two convex sets).}

    Let $g:\mathbb{R}_{\geq 0}^k \to \mathbb{R}^{n \times n}$ be the mapping from edge weights, for pairs $(i,j) \in J$, to a regularized graph Laplacian given by $g(\vec{w}) = \sum_{(i,j) \in J}w_{i,j} L_{i,j} + \tilde{L}_K$ where $\vec{w} = (w_{i,j})_{(i,j) \in J}$.
    Now define $h(\vec{x}):=(f \circ g)(\vec{x})$, so that $\partial h / \partial \vec{x}$ is equal to the vector of all directional derivatives, $(\nabla_{L_{i,j}}f(\tilde{L}_K))_{(i,j) \in J}$.
    Its strict concavity is given by $f$ being strictly concave and $g$ being linear.
    To see that $\partial h / \partial x_{i,j} = \nabla_{L_{i,j}}f(\tilde{L}_{K})$, note that $h(\vec{0} + \varepsilon \vec{1}_{i,j}) = f(\tilde{L}_K + \varepsilon L_{i,j})$.
    Then for any two completions of the graph, $G$ and $H$, let $\vec{w}_G$ and $\vec{w}_H$ be the corresponding edge weights given by those completions of the graph.
    By \fullornot{\cref{fact:cvx-fn-cvx-dom-implies-different-gradient},}{the injectivity of the gradient of a strictly convex or concave function~\citep{Griewank1991},} $\partial h(\vec{w}_G) /\partial \vec{x}= \partial h(\vec{w}_H) / \partial \vec{x}$ implies that $\vec{w}_G = \vec{w}_H$, which means that $G$ is equal to $H$.
    Hence the completion of the graph corresponds to a unique set of effective resistances on indices in $J$.
\end{proof}

To recover the unknown part of $G$, we first query the effective resistance for all pairs of vertices with unknown adjacency entries. \cref{thm:unique-graph-completion-k-unknowns} ensures that these $k$ values, together with the partially known adjacency matrix of $G$, uniquely determine the graph. We then search over the set of possible weights for the unknown adjacency entries to find a configuration that is consistent with our ER queries.
\begin{theorem}
    \label{thm:k-unknowns-exponential-completion}
    Let $G=(V,E)$ be a connected weighted graph with weights from a finite known set of size $s$.
    Suppose $k$ entries of the adjacency matrix of $G$ are unknown. 
    There exists an algorithm to reconstruct $G$ (i.e., find the missing $k$ weights) with $k$ ER queries. The running time of the algorithm is $O(s^{k}n^\omega)$, where $\omega$ is the matrix multiplication exponent.
\end{theorem}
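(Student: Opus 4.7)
The plan is to combine the uniqueness result of \cref{thm:unique-graph-completion-k-unknowns} with a brute-force enumeration over all possible completions. First I would make exactly $k$ ER queries, one for each pair $(i,j)$ whose adjacency entry $w_G(i,j)$ is unknown, obtaining values $r(i,j) := \res_G(i,j)$. Let $J \subseteq \binom{[n]}{2}$ denote this set of unknown pairs, and let $I := \binom{[n]}{2} \setminus J$ denote the known pairs, on which the known weights of $G$ define a function $w: I \to \mathbb{R}^{\geq 0}$.

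Next I would enumerate over all $s^k$ possible weight assignments to the entries indexed by $J$, where each entry is drawn from the finite set of allowed weights (of size $s$). For each such candidate assignment, I would construct the resulting candidate weighted graph $H$ on $[n]$, compute the pseudoinverse of its Laplacian $L_H$ in $O(n^\omega)$ time using the algorithm of \citet{KellerGehrig1985}, and check whether $\res_H(i,j) = r(i,j)$ holds for every $(i,j) \in J$ using the formula $\res_H(i,j) = (\vec{1}_i - \vec{1}_j)^T L_H^+ (\vec{1}_i - \vec{1}_j)$. I would output the unique candidate that passes this check.

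Correctness follows directly from \cref{thm:unique-graph-completion-k-unknowns}: at most one weighted graph on $[n]$ is simultaneously consistent with $w$ on $I$ and with $r$ on $J$. The true graph $G$ is obviously one such completion, and it is also a candidate considered by the enumeration (since its missing weights come from the allowed finite set), so the algorithm finds $G$ and no other completion can pass the consistency check. The query complexity is exactly $|J| = k$ by construction, and the runtime is dominated by the $s^k$ pseudoinverse computations, each costing $O(n^\omega)$, plus $O(k)$ work per candidate to read off the effective resistances, giving a total of $O(s^k n^\omega)$.

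The conceptually interesting content is entirely upstream, in \cref{thm:unique-graph-completion-k-unknowns} and \cref{lem:derivitive_in_edge_lap_direction}; given the uniqueness guarantee, the algorithm itself is a direct search. The only mild subtlety worth writing out carefully is that the enumeration must range over the given finite weight set (not over $\mathbb{R}^{\geq 0}$), so that the search space is finite and has size $s^k$, and that the uniqueness theorem already rules out the existence of a second completion in the enumeration that happens to induce the same $k$ effective resistances.
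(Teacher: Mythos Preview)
Your proposal is correct and essentially identical to the paper's proof: both query the $k$ unknown pairs, brute-force over the $s^k$ candidate completions, compute effective resistances via an $O(n^\omega)$ pseudoinverse per candidate, and invoke \cref{thm:unique-graph-completion-k-unknowns} for uniqueness. There is nothing to add.
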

\full{
\begin{proof}
    Let $G$ be a graph and let $A_G$ be its partially known adjacency matrix with $k$ missing entries.  Let $I$ be the set of all pairs of indices $i,j\in[n]$ such that $A_G(i,j)$ is known, and let $J$ be the set of all pairs of indices $i,j\in [n]$ such that $A_G(i,j)$ is not known.
    Our algorithm spends $k$ ER queries to obtain the effective resistances for every pair of vertices in $J$.

    Then, our algorithm search for edge weights of the pairs in $J$ by exhaustively trying all $s^k$ possibilities.
    For each candidate set of $k$ weights $W$, our algorithm computes all pairs of effective resistances in the completion of $G$ with $W$, and checks if they are consistent with the $k$ ER values that are queried.  If so, our algorithm returns $W$ as the completion of $G$.  By \cref{thm:unique-graph-completion-k-unknowns}, there exists exactly one set of weights that are consistent with our ER queries, therefore, our algorithm will find them.

    There are $s^k$ possibilities for edge weights in $J$.  For each of them, it takes $O(n^\omega)$ time to compute all effective resistances. %
    Hence, the entire process takes $O(s^kn^{\omega})$ time.
\end{proof}
}

\printbibliography
\end{document}